\theoremstyle{plain}
	    \newtheorem{theorem}{\protect\theoremname}
      \newtheorem{theorem}{\protect\theoremname}[chapter]
\theoremstyle{plain}
      \newtheorem{proposition}{\protect\propositionname}
      \newtheorem{proposition}{\protect\propositionname}[chapter]
\titleformat*{\section}{\Large\bf\scshape}
\titleformat*{\subsection}{\bf\scshape}
\titlespacing{\subsection}{0pt}{*1.2}{*0.6}
\theoremstyle{definition}
\providecommand{\propositionname}{Proposition}
\providecommand{\theoremname}{Theorem}
\begin{document}

\title{Optimal Search and Discovery\thanks{This paper previously was circulated under the title ``Optimal Search
and Awareness Expansion''.}}
\author{Rafael P. Greminger\thanks{I am deeply grateful to my advisors, Jaap  Abbring and Tobias Klein, for their thoughtful guidance and support. I also thank the anonymous reviewing team, Bart Bronnenberg, Nikolaus Schweitzer,  as well as members of the Structural Econometrics Group in Tilburg for excellent comments. Finally, I thank the Netherlands Organisation for Scientific Research (NWO) for financial support through Research Talent grant 406.18.568. Tilburg University, Department of Econometrics \& Operations
Research, \protect\href{mailto:r.p.greminger@tilburguniversity.edu}{r.p.greminger@tilburguniversity.edu}.}}
\date{August 31, 2021 \\ 
		Published version: \textcolor{blue}{\url{https://doi.org/10.1287/mnsc.2021.4085}}}
\maketitle
\begin{abstract}
This paper studies a search problem where a consumer is initially
aware of only a few products. At every point in time, the consumer then
decides between searching among alternatives he is already aware of
and discovering more products. I show that the optimal policy for
this search and discovery problem is fully characterized by tractable
reservation values. Moreover, I prove that a predetermined index fully
specifies the purchase decision of a consumer following the optimal
search policy. Finally, a comparison highlights differences to classical
random and directed search. \vspace{1cm}
\end{abstract}

\section{Introduction}

Consumers typically first need to search for product information before being able to compare alternatives.
The resulting search frictions have received considerable attention
in the literature.\footnote{For example, \citet{Stigler1961,Diamond1971,Burdett1983,Anderson1999,Kuksov2006,Choi2016a,Moraga-Gonzalez2017,Moraga-Gonzalez2017a}
study search frictions in equilibrium models and \citet{Hortacsu2004,Hong2006c,Santos2012,Bronnenberg2016,Chen2016,Zhang2018,Jolivet2019}
study implications of search empirically. } 
Under the rational choice paradigm, the analysis of such limited information settings
relies on optimal search policies that describe how a consumer optimally searches among all available alternatives.
I add to this literature by developing and solving a sequential search
problem that introduces a novel aspect: \emph{limited awareness of available products}. 

To fix ideas, consider a consumer looking to buy a mobile phone. Through
advertising or recommendations from friends, the consumer initially
is aware of a single available phone and has some (but not all) information
on what it offers. Given this basic information, the consumer can
directly gather more detailed information on this alternative, for
example by reading a review online. Besides, there are also phones
available that the consumer is initially not aware of. For these alternatives,
he knows neither of their existence, nor the features
they offer. This precludes the consumer from directly inspecting these
phones. Instead, he first needs to discover and become aware of them,
for example by getting more recommendations from friends or through
a search intermediary. Figure \ref{fig:Extended-search-process} depicts 
a possible choice sequence for this case. 

\begin{figure}[th]
\centering{}%
\begin{minipage}[t]{0.8\columnwidth}%
\hspace{-3em}
\scalebox{1}{
\tikzstyle{level 1}=[level distance=3em]
\begin{tikzpicture}[every tree node/.style={minimum size=1.2em, inner sep=0pt,font=\footnotesize}, every node/.style={minimum size=1.2em, inner sep=0pt,font=\footnotesize},
	   level distance=1.7cm,sibling distance=0.4cm,baseline, label = a,
   	edge from parent path={(\tikzparentnode) -- (\tikzchildnode)}]
\Tree [.\node[circle]{} ; 
		\edge[dashed];
		[.\node[]{inspect Phone 1};]
		[.\node[]{discover};
			\edge[dashed];
			\node[]{inspect Phone 1};
			[.\node[]{inspect Phone 2};
				\edge[dashed];
				\node[]{inspect Phone 1} ;
				\node[]{buy Phone 2}; 
				\edge[dashed];
				\node[]{discover};] 
				\edge[dashed];
				\node[]{discover};] ]  
\end{tikzpicture}
}\caption{{\small{}Example of a choice sequence in the search and discovery
problem. \label{fig:Extended-search-process}}}
\end{minipage}
\end{figure}
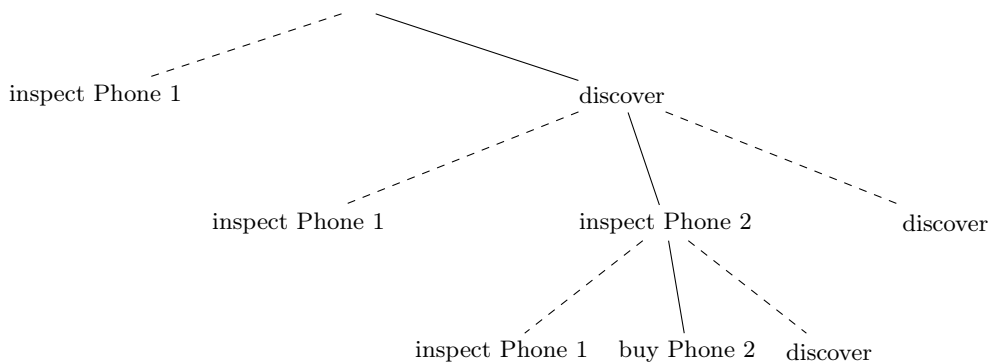

The ``search and discovery problem'' introduced in this paper formalizes
a consumer's dynamic decision process in this and similar settings.
The resulting framework allows to study settings that are difficult
to accommodate in existing search problems. In particular, neither
random \citep[e.g.][]{McCall1970} nor directed search \citep[e.g.][]{Weitzman1979} 
is well suited to study settings where
rational consumers remain oblivious to some, while obtaining only
partial information on other products. However, such settings are
common in practice. For example, online retailers and search intermediaries
present an abundance of alternatives on product lists that reveal partial information
only for some products. Consumers then decide between clicking on
products already discovered on the list to reveal full information, 
and browsing further to discover more products. More generally, in markets with a large number 
of alternatives, consumers  will remain unaware of many alternatives unless
they actively set out to discover more products. Similarly, in markets where 
rapid technological innovations
lead to a constant stream of newly available alternatives, few consumers
are aware of new releases without exerting effort to remain informed.

The contribution of this paper is to show that despite its complexity,
optimal search decisions and outcomes in the search and discovery
problem remain tractable if the consumer has stationary beliefs. 
First, I prove that the optimal policy is
fully characterized by reservation values similar to the well-known
reservation prices derived by \citet{Weitzman1979}. In each period,
a reservation value is assigned to each available action, and it is
optimal to always choose the action with the largest value. Each of
the reservation values is independent of any other available action
and can be calculated without having to consider expectations over
a myriad of future periods. Hence, reservation values remain tractable.
This allows to determine optimal search behavior under limited awareness
without using numerical methods.

Second, I prove that the purchase of a consumer solving the search
and discovery problem is equivalent to the same consumer having full
information and directly choosing products from a predetermined index.
This result generalizes the ``eventual purchase theorem'' derived
independently by \citet{Choi2016a}, \citet{Armstrong2017} and \citet{Kleinberg2016}
to the case of limited awareness.\footnote{\citet{Choi2016a} introduced the name 
and noted that ``Our eventual purchase theorem was
anticipated by \citet{Armstrong2015} and has been independently discovered
by \citet{Armstrong2017} and \citet{Kleinberg2016}.''} Similar to the eventual 
purchase theorem, my generalization allows
to derive a consumer's expected payoff and market demand without having
to consider a multitude of possible choice sequences that otherwise
make aggregation difficult. 

This paper also highlights several implications of limited awareness through 
a comparison of stopping decisions, expected payoffs and market demand with classical random 
and directed sequential search. 
A first implication of limited awareness is that it leads to 
two distinct search actions which posits a novel question: 
Do consumers benefit more from making it easier to discover more alternatives (e.g.
through search intermediaries), or from facilitating inspection by
more readily providing detailed product information? For the case where a consumer discovers 
one product at a time, I show that there exists a (possibly small) threshold for the number of 
alternatives after which  
the expected payoff increases more when facilitating discovery instead
of facilitating inspection. This highlights the relative importance of discovery costs 
in settings with many alternatives.


Moreover, limited awareness generates distinct patterns in the resulting 
market demand. In directed search, more consumers
preferring a product based on partial product information increases its market demand.
This need not be the case with limited awareness; if consumers remain unaware of a product, its market demand
does not increase as it becomes the preferred option. Whereas the same holds with random search, 
not being able to use partial information to decide whether to inspect a product 
induces consumers to stop earlier if total costs of revealing full product information remain the same. 

The search and discovery problem also provides an intuitive 
rationalization of ranking effects commonly observed in click-stream data \citep[e.g.][]{Ursu2018}: as consumers stop search 
before having discovered all products, products that would be discovered 
later are less likely to be bought. I show that  
these ranking effects are independent of the number of
available alternatives, and decrease as more products are discovered.
This mechanism offers a meaningful interpretation
of how advertising that provides partial product information 
is beneficial for a seller; \footnote{This relates to the ``informative view'' of advertising. See e.g.
\citet{Bagwell2007} for a summary and comparison to the ``persuasive
view''.} if a seller's marketing efforts make more consumers aware of a product
before search or increase the probability of the product being discovered
early on, ranking effects directly imply that they will increase the
demand. 

Finally, this paper adds to the empirical search literature by discussing 
implications of limited awareness for the estimation of structural search models. 
Besides highlighting differences in parameter estimates and counterfactual 
predictions across the three models, I show that a directed search model 
will lead to accentuated search cost estimates due to not accounting for limited awareness 
when rationalizing stopping decisions. 

The remainder of this paper is organized as follows. First, I discuss
related literature. Section 3 introduces the search and discovery
(henceforth SD) problem. Section 4 provides the optimal policy and discusses several 
extensions as well as limitations.
In Section 5, I generalize the eventual purchase theorem of \citet{Choi2016a}
and use this to derive a consumer's expected payoff as well as market
demand. Section 6 compares search problems and discusses empirical implications.
Section 7 concludes. Throughout, proofs are deferred to
the appendix.

\section{Related Literature}

The search and discovery problem introduced in this paper 
nests both classical random and directed sequential search as special cases. In random
search, a searcher has no prior information, searches randomly across
alternatives and decides when to end search \citep[e.g.][]{McCall1970,Lippman1976}.
In directed search, the searcher is aware of all available alternatives
and uses partial product information to determine an order in which
to inspect products and when to end search \citep[e.g.][]{Weitzman1979,Chade2006}.
In contrast, in the search and discovery problem, the consumer is
aware of only a few products. Hence, he not only decides in what order
to inspect products and when to end search, but also when to try to
discover more alternatives. 

To prove the optimality I use results from the multi-armed bandit literature
to first determine that a Gittins index policy is optimal,\footnote{\citet{Gittins2011} 
provide a textbook treatment of multi-armed bandit
problems and the Gittins index policy. As purchasing a product ends
search, search problems correspond to stoppable superprocesses as
introduced by \citet{Glazebrook1979}. } and then introduce a monotonicity condition to show that the Gittins
index reduces to simple reservation values. Specifically, I use the
results of \citet{Keller2003} who proved that a Gittins index policy
is optimal in their branching bandits framework. This framework differs
from the standard multi-armed bandit problem in that taking an action
will reveal information on multiple other actions. However, as an
action branches off into new actions and reveals information only
on those, the state of other available actions is never altered. Hence,
the important independence assumption continues to hold. 

Similar monotonicity conditions also apply in other multi-armed bandit
problems where they simplify the otherwise difficult calculation of
the Gittins index values \citep[see e.g. Section 2.11 in][]{Gittins2011}.
The present case differs in that monotonicity is only required for
the action of discovering more alternatives, but does not hold when
inspecting a product. In a recent working paper, \citet{Fershtman2019}
independently derived a similar characterization of the optimal policy
when applying a monotonicity condition in a general multi-armed bandit
problem where a decision maker also extends a set of alternatives. 

Moreover, monotonicity conditions also lead to the results in the
literature on (random) search problems where a searcher learns about
the distribution from which he is sampling \citep{Rothschild1974,Rosenfield1981,Bikhchandani1996}.
These authors determine priors and learning rules that satisfy a similar
condition based on which they can derive an optimal policy that is
myopic. The SD problem differs in that not all information about a
product is revealed when it is discovered such that it entails two distinct search actions. As I show, this makes it
difficult to find similar priors or learning rules that would lead
to a myopic optimal policy in extensions to the SD problem that incorporate
learning. 

Several other contributions extend Weitzman's (1979) seminal
search problem in different directions. \citet{Adam2001} studies
the case where the searcher updates beliefs about groups of alternatives
during search and finds a similar reservation value policy to be optimal.
\citet{Olszewski2015} generalize Pandora's rule to search problems
where the final payoff depends on all the alternatives that have been
inspected, not only the best one. Finally, \citet{Doval2018} analyzes
the optimal policy when a searcher can directly choose alternatives
without first inspecting them. 

This paper also relates to the recent literature studying problems
where a consumer gradually reveals more information on products \citep{Branco2010,Ke2016,Ke2019}.
These problems are formulated in continuous-time and generally do
not admit an optimal policy based on an index. The SD problem differs
in that it assumes that a consumer cannot purchase a product before
having revealed full information. This makes available actions independent
such that a tractable reservation value policy is optimal. Furthermore,
the SD problem allows that multiple products can be discovered at
a time such that with one action, information on multiple products
is revealed. Though \citet{Ke2016} also consider correlated payoffs,
discovering multiple products differs in that the correlation structure
of payoffs changes after the discovery; inspecting one does not reveal
information about other products discovered at the same time. 
 
The SD problem also subsumes decision processes considered in the
growing empirical literature estimating structural search models \citep[e.g.][]{Honka2014,Chen2016,Ursu2018}. Most
closely related are \citet{LosSantos2017} and \citet{Choi2019},
who also model consumers that decide between inspecting and revealing
more products. This paper differs in that I provide a tractable optimal
policy for the decision problem, whereas these studies use simplifying
assumptions and numerical methods to solve their models. The results
presented in this paper can serve as a justification for some of these
simplifying assumptions: Given that the optimal policy is myopic,
the one-step look-ahead approach adopted by \citet{LosSantos2017}
yields optimal choices of search actions if monotonicity holds. Moreover,
the optimal policy in the SD problem implies that as long as the consumer
has not yet revealed the last alternative, it will never be optimal
to go back and inspect a product that was discovered earlier if beliefs
are stationary. Hence,
the simplifying assumption made in \citet{Choi2019} where consumers
cannot go back and inspect a product revealed previously does not
affect the estimation as it would not be optimal to do so.

\citet{Honka2017a} and \citet{Morozov2019} also consider limited
awareness and assume that consumers cannot inspect products they are
not aware of. However, in their models consumers cannot discover products
beyond those they are initially aware of and the underlying search
problem then is equivalent to directed search. \citet{Koulayev2014}
estimates a search model where consumers also decide whether to reveal
more products, but assumes that revealing a product shows all information
on that product. Hence, there is no need for inspecting a product as
considered in this paper.\footnote{\citet{Koulayev2014} solves the dynamic decision problem using numerical
backwards induction. For the case where costs are increasing in time
(which is the case in his results), the present results suggest that
a simple index policy also characterizes the optimal policy for his
model.} 

Finally, related studies have highlighted other potential biases in 
search cost estimates. \citet{Jindal2020} show how heterogeneous prior beliefs 
can lead to an overestimation of search costs, \citet{Ursu2018} argues that 
that an incomplete search history also accentuates search cost estimates, whereas \citet{Yavorsky2020} discuss
the effects of normalizing search benefits.

\section[Model]{The Search and Discovery Problem \label{sec:Model}}

A risk-neutral consumer with unit demand faces a market offering a
(possibly infinite)\footnote{The problems with infinitely many arms in a multi-armed bandit problem
discussed by \citet{Banks1992} do not arise in the present setting.} number of products gathered in set $J$. Alternatives are heterogeneous
with respect to their characteristics. The consumer has preferences
over these characteristics which can be expressed in a utility ranking.
To simplify exposition and facilitate a comparison to existing models
from the consumer search literature \citep[e.g.][]{Armstrong2017,Choi2016a},
I assume that the consumer's ex post utility when purchasing alternative
$j$ is given by
\begin{equation}
u(x_{j},y_{j})=x_{j}+y_{j}
\end{equation}
where $x_{j}$ and $y_{j}$ are valuations derived from two distinct
sets of characteristics. Note, however, that the results presented
continue to hold for more general specifications that do not rely
on linear additive utility.\footnote{Specifically, suppose that when the consumer becomes aware of alternative
$j$, he reveals a signal on the distribution from which the utility
of $j$ will be drawn. Appropriately defining the distribution of
signals and the distribution of utilities conditional on these signals
then yields an equivalent search problem.} An outside option of aborting search without a purchase offering
$u_{0}$ is available.

The consumer has limited information on available alternatives. More
specifically, in periods $t=0,1,\dots$ the consumer knows both valuations
$x_{j}$ and $y_{j}$ only for products in a consideration set $C_{t}\subseteq J$.
For products in an awareness set\emph{ }$S_{t}\subseteq J$, the consumer
only knows partial valuations $x_{j}$. This captures the notion that
if the consumer is aware of a product, he has received some information
on the total valuation of the product. Finally, the consumer has no
information on any other product $j\in J\backslash\left(S_{t}\cup C_{t}\right)$. 

During search, the consumer gathers information by sequentially deciding
which action to take starting from period $t=0$. If the consumer decides
to discover more products, $n_{d}$ alternatives are added to the
awareness set. If less than $n_{d}$ alternatives have not yet been
revealed, only the remaining alternatives are revealed. For each of
the $n_{d}$ alternatives, the partial valuation $x_{j}$ is revealed.
To reveal the remaining characteristics of a product $j$, summarized
in $y_{j}$, the consumer has to inspect the product. This reveals
full information on the product and moves it from the awareness into
the consideration set. The latter implies $S_{t}\cap C_{t}=\emptyset$.

The order in which products are discovered is tracked by positions
$h_{j}\in\left\{ 0,1,\dots\right\} $, where a smaller position indicates
that a product is discovered earlier, and $h_{j}=0$ implies either
$j\in C_{0}$ or $j\in S_{0}$. Without loss of generality, it is
assumed that products are discovered in increasing order of their
index.\footnote{Note that in equilibrium settings, the order may be determined by
sellers' actions, requiring a careful analysis of how these will determine
the consumer's beliefs. For example, in online settings it is common
for sellers to bid on the position at which their product adverts
are shown \citep[see e.g.][]{Athey2011}. }

Two precedence constraints on the consumer's actions are imposed.
First, the consumer can only buy products from the consideration set.
Second, the consumer can only inspect products from the awareness
set. Whereas the first constraint is inherent in most search
problems and implies that a product cannot be bought before having obtained 
full information on it,\footnote{\citet{Doval2018} is a notable exception.} the
latter is novel to the proposed search problem. It implies that a
product cannot be inspected unless the consumer is aware of it. In
an online setting where a consumer browses through a list of products,
this constraint holds naturally: Individual product pages are reached
by clicking on the respective link on the list. Hence, unless a product
has been revealed on the list, it cannot be clicked on. In other environments,
this precedence constraint reflects that, unless a consumer knows
whether an alternative exists, he will not be able to direct search
efforts and inspect the specific alternative. For example, if a consumer
is not aware of a newly released phone model, he will not be able
to directly acquire detailed information before discovering it.

Given the setting and these constraints, the consumer decides sequentially
between the following actions: \vspace{1em}
\begin{enumerate}
\item \vspace{-1em}Purchasing any product from the consideration set $C_{t}$
and end search.
\item Inspecting any product from the awareness set $S_{t}$, thus revealing
$y_{j}$ for that product and adding it to the consideration set.
\item Discovering $n_{d}$ additional products, thus revealing their partial
valuations $x_{j}$ and adding them to the awareness set.
\end{enumerate}

The distinction between \emph{inspecting} and \emph{discovering} products is novel in the SD problem.
The two actions differ in three important ways. First, whereas the consumer can 
use product-specific information to decide 
the order in which to inspect products from the awareness set, 
the decision whether to discover more products is based solely on beliefs 
over products that may be discovered. 
Second, if $n_d > 1$, discovering products reveals information on 
multiple products. Finally, discovering products adds them into the awareness set,
whereas inspecting a product moves it into the consideration set. In combination with the
precedence constraints this implies that 
the actions that are available in the next period differ.\footnote{Note that the latter 
two points also imply that products that the consumer is not 
aware of cannot be modeled as a set of ex ante 
homogeneous products that differ in terms of beliefs and associated costs from the products in 
the awareness set.}

These actions are gathered in the \emph{set of available actions},
$A_{t}=C_{t}\cup S_{t}\cup \{d\}$, where $d$ indicates discovery. If
a consumer chooses an action $a=j\in C_{t}$, he buys product $j$,
whereas if he chooses an action $a=j\in S_{t}$, he inspects product
$j$. To clearly differentiate between the different types of actions,
this set can also be written as $A_{t}=\{b0,b3,s4,\dots,d\}$, where
$bj$ indicates purchasing and $sj$ inspecting product $j$.

Both inspecting a product and discovering more products is costly.
Inspection and discovery costs are denoted by $c_{s}>0$ and $c_{d}>0$
respectively. These costs can be interpreted as the cost of mental
effort necessary to evaluate the newly revealed information, or an
opportunity cost of the time spent evaluating the new information.
In line with this interpretation, I assume that there is free recall:
Purchasing any of the products from the consideration set does not
incur costs, and $c_{s}$ is the same for inspecting any of the products
in the awareness set. 

The consumer has beliefs over the products that he will discover,
as well as the valuation he will reveal when inspecting a product
$j$. In particular, $x_{j}$ and $y_{j}$ are independent (across
$j$) realizations from random variables $X$ and $Y$, where the
consumer has beliefs over their joint distribution. This implies that
the consumer believes that in expectation, products are equivalent.
A generalization where the distribution of $X$ depends on index $j$
is discussed in Section \ref{sec:Optimal-Search}. Note that throughout,
capital letters are used for random variables, lower case 
letters are used for the respective realizations and bold letters indicate vectors.

The consumer also has beliefs over the total number of available alternatives.
I assume that the consumer believes that with constant probability
$q\in[0,1]$, the next discovery will be the last.\footnote{Note that one can translate 
beliefs over a specific number of available alternatives to this probability by assuming 
it varies during search. For example, if $n_d=1$ and the consumer believes that there are 
3 alternatives in total, then $q_t=0$ when the consumer has not yet discovered the second alternative 
and $q_t=1$ otherwise. A specification like this (and any specification where 
$q_t\leq q_{t+1} \forall t$) also satisfies the monotonicity 
condition \eqref{eq:condition_mono} presented in Appendix \ref{sec:Generalizations}. 
Consequently, if it is assumed that the consumer knows $\left|J\right|$, monotonicity continues to hold. } 
As shown in the next section, the optimal policy is independent of
the number of remaining discoveries that may be available in the future.
Note, however, that this belief specification implicitly assumes that
the consumer always knows whether he can reveal $n_{d}$ more alternatives.
An extension presented in Section \ref{sec:Optimal-Search} covers
the case where the consumer does not know how many alternatives will
be revealed.

All information the consumer has in period $t$ is summarized in the
information tuple $\Omega_{t}=\left\langle \bar{\Omega},\omega_{t}\right\rangle $.
The tuple $\bar{\Omega}=\left\langle u(x,y),n_{d},c_{d},c_{s},G_{X}(x),F_{Y|X=x}(y),q\right\rangle $
represents the consumer's knowledge and beliefs on the setting. It contains the utility
function, how many products are discovered, and the different costs.
It also contains the consumer's beliefs summarized in the probability $q$ and the
cumulative densities $G_{X}(x)$ and $F_{Y|X=x}(y)$. The latter specifies
the cumulative density of $Y$, conditional on the realization of
$X$, which is observed by the consumer before choosing to inspect
a product. As a short-hand notation, I use $G(x)$ and $F(y)$ for
these distributions. As a regularity condition, it is assumed that
both $G(x)$ and $F(y)\forall x$ have finite mean and variance. 

During search, the consumer reveals valuations $x_{j}$ and $y_{j}$
for the various products. This information is tracked in the set $\omega_{t}$,
containing realizations $x_{j}$ for $j\in S_{t}\cup C_{t}$ and $y_{j}$
for $j\in C_{t}$. The set of available actions $A_{t}$ and the information
tuple $\Omega_{t}$ capture the state in $t$. The consumer's initial information 
on the alternatives are captured in $\omega_{0}$ which will contain (partial) valuations 
of products in the initial awareness and consideration set. 
Figure \ref{fig:Transition-of-state}
shows their transitions starting from period $t=0$. The depicted
example assumes that there are only two alternatives available and
that products are discovered one at a time. If the consumer initially
chooses the outside option ($b0$), no new information is revealed,
and no further actions remain. If the consumer instead reveals the
first alternative, he can inspect it in $t=1$. 

\begin{figure}[tbh]
\centering{}%
\begin{minipage}[t]{0.8\columnwidth}%
\hspace{-1.4cm}
\scalebox{0.93}{
\begin{tikzpicture}[grow=right,every tree node/.style={inner sep=0pt,font=\normalsize,align=left},
	   level distance=16.3em,sibling distance=4em,baseline, label = a,
   edge from parent path={(\tikzparentnode) -- (\tikzchildnode)}]

	\Tree [.\node[]{$\Omega_0=\left\langle \bar\Omega,\left\{ x_0,y_0 \right\} \right\rangle$  \\$A_0 = \{b0,d\}$};
				\edge node[auto=right]{$d$};
				[.\node[]{$\Omega_1=\left\langle \bar\Omega,\left\{ x_0,y_0,x_1 \right\} \right\rangle$ \\$A_1 = \{b0,s1,d\}$};
					\edge[] node[auto=right]{$d$};
					\node[]{$\Omega_2 = \left\langle \bar\Omega,\left\{ x_0,y_0,x_1,x_2 \right\} \right\rangle$\\$A_2 = \{b0,s1,s2\}$};
					\edge node[auto=left]{$s1$};
					[.\node[]{$\Omega_2 = \left\langle \bar\Omega,\left\{ 					x_0,y_0,x_1,y_1 \right\} \right\rangle$ \\ $A_2 = \{b0,b1\}$};]
					\edge[] node[auto=left]{$b0$};
					\node[]{$\Omega_2=\left\langle \bar\Omega,\left\{ x_0,y_0,x_1 \right\} \right\rangle$\\$A_2 = \emptyset$};]
				\edge[] node[auto=left]{$b0$}; 
				\node[]{$\Omega_1=\left\langle \bar\Omega,\left\{ x_0,y_0 \right\} \right\rangle$\\$A_1 = \emptyset$};] 
\end{tikzpicture}
}

\caption{Transition of state variables $\Omega_{t}$ (information tuple) and
$A_{t}$ (set of available actions) for $n_{d}=1$ and $\left|J\right|=2$.
\label{fig:Transition-of-state}}
\end{minipage}
\end{figure}
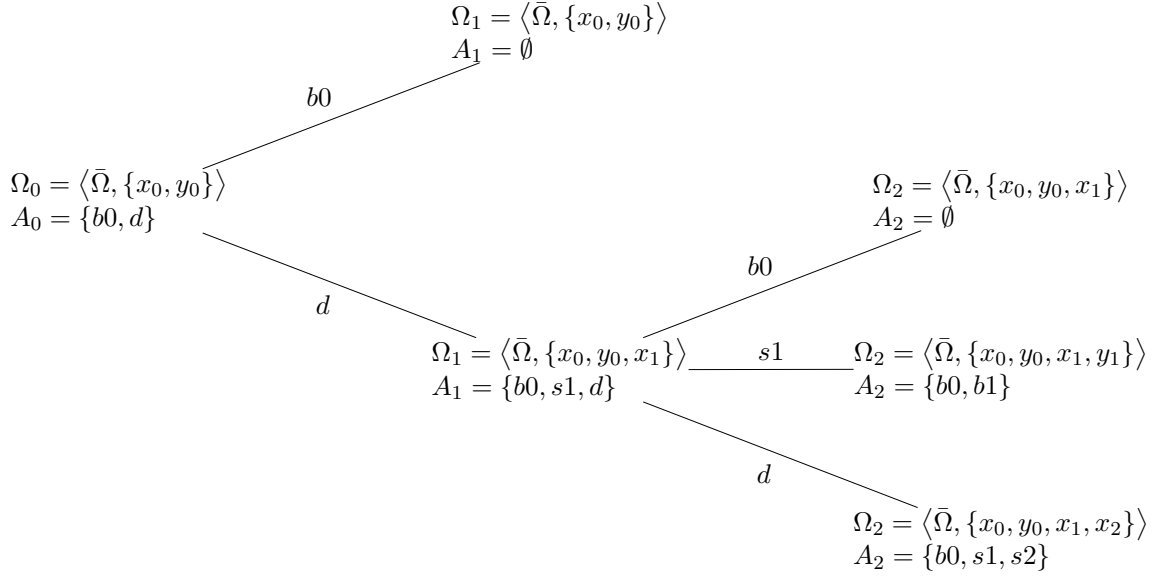

\subsection{The Consumer's Dynamic Decision Problem}

The setting above describes a dynamic Markov decision process, where
the consumer's choice of action determines the immediate rewards,
as well as the state transitions. The state in $t$ is given by $\Omega_{t}$
and $A_{t}$. As the valuations $x_{j}$ and $y_{j}$ can take on
any (finite) real values, the state space in general is infinite.\footnote{An exception is when $x_{j}$ and $y_{j}$ are drawn from discrete
distributions, which limits the number of possible valuations that
can be observed.} Time $t$ itself is not included in the state; 
given $A_t$ and $\Omega_{t}$, it is irrelevant to the agent's choice, 
because beliefs (over valuations and termination of discovery) are time invariant. 

The consumer's problem consists of finding a feasible sequential policy,
which maximizes the expected payoff of the whole decision process.
A feasible sequential policy selects an action $a_{t} \in A_{t}$ given 
information in $\Omega_{t}$ in each period $t$. Let $\Pi$ denote the set containing all feasible policies.
Formally, the consumer solves the following dynamic programming problem
\begin{equation}
\max_{\pi\in\Pi}V(\Omega_{0},A_{0};\pi)
\end{equation}
where $V(\Omega_{t},A_{t};\pi)$ is the value function defined as the 
expected total payoff of following policy $\pi$ starting from the state in $t$. Let 
\begin{equation}
\left[B_{a}V\right](\Omega_{t},A_{t};\pi)=R(a)+\mathbb{E}_{t}\left[V(\Omega_{t+1},A_{t+1};\pi)|a\right]
\end{equation}
denote the Bellman operator, where the immediate rewards $R(a)$ either
are inspection costs, discovery costs, or the total valuation of a
product $j$ if it is bought. Immediate rewards $R(a)$ therefore
are known for all available actions. $\mathbb{E}_{t}\left[V(\Omega_{t+1},A_{t+1};\pi)|a\right]$
denotes the expected total payoff over the whole future, conditional
on policy $\pi$ and having chosen action $a$.\footnote{In this formulation of the problem, the consumer does not discount
future payoffs. This is in line with the consumer search literature,
which usually assumes a finite number of alternatives without discounting.
However, it is straightforward to show that the results continue to
hold if a discount factor $\beta<1$ is introduced. In this case,
the search and discovery values defined in the next section need to
be adjusted accordingly. } The expectations operator integrates over the respective distributions
of $X$ and $Y$. A purchase in $t$ ends search such that $A_{t+1}=\emptyset$
and $\mathbb{E}_{t}\left[V(\Omega_{t+1},\emptyset;\pi)|a\right]=0$
whenever $a\in C_{t}$. The corresponding Bellman equation is given
by

\begin{equation}
V(\Omega_{t},A_{t};\pi)=\max_{a\in A_{t}}\left[B_{a}V\right](\Omega_{t},A_{t};\pi)
\end{equation}

\section{Optimal Policy \label{sec:Optimal-Search}}

The optimal policy for the SD problem is fully characterized by three
reservation values. In what follows, I first define these reservation
values, before stating the main result. At the end of this section,
I discuss possible extensions based on a monotonicity condition, as
well as limitations. 

As in \citet{Weitzman1979}, suppose there is a \emph{hypothetical
}outside option offering utility $z$. Furthermore, suppose the consumer
faces the following comparison of actions: Immediately take the outside
option, or inspect a product with known $x_{j}$ and end search thereafter.
In this decision, the consumer will choose to inspect alternative
$j$ whenever the following holds: 
\begin{equation}
Q_{s}(x_{j},c_{s},z)\equiv\mathbb{E}_{Y}\left[\max\{0,x_{j}+Y-z\}\right]-c_{s}\geq0
\end{equation}
\emph{$Q_{s}(x_{j},c_{s},z)$ }defines the expected \emph{myopic}
net gain of inspecting product $j$ over immediately taking the outside
option. If the realization of $Y$ is such that $x_{j}+y_{j}\leq z$,
the consumer takes the hypothetical outside option after inspecting
$j$ and the gain is zero. When $x_{j}+y_{j}>z$, the gain over immediately
taking the hypothetical outside option is $x_{j}+y_{j}-z$. The expectation
operator $\mathbb{E}_{Y}\left[\cdot\right]$ integrates over these
realizations.

The \emph{search value} of product $j$, denoted by $z_{j}^{s}$,
then is defined as the value offered by a hypothetical outside option
that makes the consumer indifferent in the above decision problem.
Formally, $z_{j}^{s}$ satisfies 
\begin{equation}
Q_{s}(x_{j},c_{s},z_{j}^{s})=0\label{eq:res_search}
\end{equation}
which has a unique solution \citep[see Lemma 1 in][]{Adam2001}. The
search value can be calculated as 
\begin{equation}
z_{j}^{s}=x_{j}+\xi\label{eq:res_search_explicit}
\end{equation}
where $\xi$ solves $\int_{\xi}^{\infty}\left[1-F(y)\right]\mathrm{d}y-c_{s}=0$
(see Appendix \ref{sec:Details-on-the}).

The \emph{purchase value} of product\emph{ $j$}, denoted by $z_{j}^{b}$,
is defined as the utility obtained when buying product $j$: 
\begin{equation}
z_{j}^{b}=u(x_{j},y_{j})\label{eq:res_buy}
\end{equation}

Based on reservation values given by \eqref{eq:res_search} and \eqref{eq:res_buy},
\citet{Weitzman1979} showed that it cannot be optimal to inspect
a product that does not offer the largest search value, or to stop
when the largest remaining search value exceeds the largest purchase
value. Hence, for given $S_{t}$ and $C_{t}$, it is optimal to always
inspect and buy in decreasing order of search and purchase values.
However, this rule does not fully characterize an optimal policy in
the SD problem, as the consumer can additionally discover more alternatives.

For this additional action, a third reservation value based on a similar
myopic comparison is introduced. Suppose the consumer faces the following
comparison of actions: Take a hypothetical outside option offering
$z$ immediately, or discover more products and then search among
the newly revealed products. The consumer will choose the latter whenever
the following holds: 
\begin{equation}
Q_{d}(c_{d},c_{s},z)\equiv\mathbb{E}_{\boldsymbol{X}}\left[V\left(\left\langle \bar{\Omega},\omega(\boldsymbol{X},z)\right\rangle ,\left\{ b0,s1,\dots,sn_{d}\right\} ;\tilde{\pi})\right)\right]-z-c_{d}\geq0\label{eq:exp_net_ben_disc}
\end{equation}
where $\omega(\boldsymbol{X},z)=\left\{ z,x_{1},\dots,x_{n_{d}}\right\} $
denotes the information the consumer has after revealing the $n_{d}$
more alternatives and $\tilde{\pi}$ is the policy that optimally inspects the $n_d$ discovered products.
Note that with some abuse of notation, product
indices were adjusted to the reduced decision problem, such that $j=0,1,\dots,n_{d}$
indicates the hypothetical outside option and the newly revealed products.
 
$Q_{d}(c_{d},c_{s},z)$ defines the \emph{myopic} net gain of discovering
more products and optimally searching among them over immediately
taking the outside option. It is myopic in the sense that it ignores
the option to continue searching beyond the products that are discovered.
In particular, note that $V\left(\left\langle \bar{\Omega},\omega(\boldsymbol{X},z)\right\rangle ,\left\{ b0,s1,\dots,sn_{d}\right\} ;\tilde{\pi})\right)$
is the value function of having an outside option offering $z$ and
optimally inspecting alternatives for which partial valuations in
$\boldsymbol{X}$ are known. Possible future discoveries and any products
in $S_{t}$ or $C_{t}$ are excluded from the set of available actions
in this value function. This implies that the discovery value does not depend 
on the consumer's beliefs over whether the next discovery will be the last.
Finally, $\mathbb{E}_{\boldsymbol{X}}\left[\cdot\right]$
defines the expectation operator integrating over the joint distribution
of the partial valuations in $\boldsymbol{X}$. Formal details on
the calculation of the expectations and the value function are provided
in Appendix \ref{sec:Details-on-the}.

As for the search value, let the \emph{discovery value}, denoted by
$z^{d}$, be defined as the value of the hypothetical outside option
that makes the consumer indifferent in the above decision. Formally,
$z^{d}$ is such that 
\begin{equation}
Q_{d}(c_{d},c_{s},z^{d})=0\label{eq:res_exp}
\end{equation}
which has a unique solution. In the case where $Y$ is independent
of $X$, the discovery value can be calculated as 
\begin{equation}
z^{d}=\mu_{X}+\Xi(c_{s},c_{d})
\end{equation}
 where $\mu_{X}$ denotes the mean of $X$ and $\Xi(c_{s},c_{d})$
solves \eqref{eq:res_exp} for an alternative random variable $\tilde{X}=X-\mu_{X}$.
Further details for the calculation are provided in Appendix \ref{sec:Details-on-the}.

Theorem \ref{theo:optimal_policy} provides the first main result.
It states that the optimal policy for the search problem reduces to
three simple rules based on a comparison of the search, purchase and
discovery values. In particular, the rules imply that in each period
$t$, it is optimal to take the action with the largest reservation
value defined in \eqref{eq:res_search}, \eqref{eq:res_buy}, and
\eqref{eq:res_exp}. Hence, despite being fully characterized by myopic
comparisons to a hypothetical outside option, these reservation values
rank the expected payoffs of actions over all future periods. 
\begin{theorem}
\label{theo:optimal_policy} Let $\tilde{z}^{b}(t)=\max_{k\in C_{t}}u(x_{k},y_{k})$
and $\tilde{z}^{s}(t)=\max_{k\in S_{t}}z_{k}^{s}$ denote the largest search and purchase 
values in period $t$. An
optimal policy for the search and discovery problem is characterized
by the following three rules:
\begin{itemize}
\item \textsc{Stopping rule}: Purchase $j\in C_{t}$ and end search whenever
$z_{j}^{b}=\tilde{z}^{b}(t)\geq\max\left\{ \tilde{z}^{s}(t),z^{d}\right\} .$
\item \textsc{Inspection rule}: Inspect $j\in S_{t}$ whenever $z_{j}^{s}=\tilde{z}^{s}(t)\geq\max\left\{ \tilde{z}^{b}(t),z^{d}\right\} $.
\item \textsc{Discovery rule}: Discover more products whenever $z^{d}\geq\max\left\{ \tilde{z}^{b}(t),\tilde{z}^{s}(t)\right\} $.
\end{itemize}
\end{theorem}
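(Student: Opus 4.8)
The plan is to recognize the search and discovery problem as a stoppable branching bandit in the sense of \citet{Keller2003}, so that a Gittins index policy is optimal, and then to show that in this particular problem the three Gittins indices coincide with the search, purchase, and discovery values defined above. First I would set up the correspondence: each product $j$ in the awareness set $S_t$ is an ``arm'' whose only action is to inspect it (at cost $c_s$), an action that terminates that arm and branches off into the single new arm ``purchase $j$''; each product $j$ in $C_t$ is an arm whose only action is the terminal reward $u(x_j,y_j)$; and discovery is a single distinguished arm that, when pulled, incurs cost $c_d$, reveals $n_d$ partial valuations, and branches off into $n_d$ new inspection arms plus a refreshed copy of the discovery arm. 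The key structural point, which I would verify carefully, is that pulling any one arm never alters the state of any other arm — inspecting $j$ tells the consumer nothing about $k\neq j$ because valuations are independent across products, and discovery only adds arms. This independence is exactly what the branching bandit framework requires, and it is what fails in the continuous-time learning models cited in the related-literature section. I would also note that because a purchase ends all search, the problem is a \emph{stoppable} superprocess (cf.\ \citet{Glazebrook1979}), with the outside option $u_0$ playing the role of the retirement reward; \citet{Keller2003} cover precisely this case.

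Given optimality of the Gittins index policy, the second and main part is to compute the three indices and match them to $z_j^b$, $z_j^s$, and $z^d$. For a purchase arm this is immediate: the arm has a single deterministic terminal payoff $u(x_j,y_j)$, so its Gittins index is $z_j^b = u(x_j,y_j)$ by definition. For an inspection arm, the Gittins index is the retirement value $z$ that makes the decision maker indifferent between retiring now and pulling the arm optimally before retiring; because the inspection arm offers exactly one pull (revealing $Y$) and then becomes a purchase arm, this indifference equation is precisely $\mathbb{E}_Y[\max\{0, x_j + Y - z\}] - c_s = 0$, i.e.\ $Q_s(x_j,c_s,z)=0$, whose unique solution is $z_j^s$. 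For the discovery arm, the calculation requires more care: the Gittins index is the retirement value $z$ solving an optimal stopping problem over the sub-bandit consisting of the discovery arm together with all arms it spawns. A priori this sub-problem still contains future discovery arms, so the index could depend on $q$ and on how many discoveries remain. The crucial lemma here — which I expect is the heart of the argument — is that the discovery-arm index actually equals the \emph{myopic} discovery value $z^d$ from \eqref{eq:res_exp}, which ignores re-discovery. The reason is a self-consistency / fixed-point argument: the value of optimally continuing after a discovery, evaluated against a retirement option worth exactly $z^d$, is by construction equal to $z^d$ (that is what $Q_d(c_d,c_s,z^d)=0$ says after rearranging), so including or excluding the option to re-discover does not change the indifference point. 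I would make this precise by showing that $z^d$ satisfies the full Bellman-equation characterization of the Gittins index for the discovery arm, using that the spawned discovery arm is an identical copy and invoking uniqueness of the index.

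With all three indices identified, optimality of the index policy says: in each period pull (or retire into) the arm with the largest current index, which are $\tilde z^b(t) = \max_{k\in C_t} z_k^b$, $\tilde z^s(t) = \max_{k\in S_t} z_k^s$, and $z^d$, together with the retirement option $u_0$. This yields exactly the three rules in the statement — stopping when a purchase value dominates, inspecting when a search value dominates, discovering when the discovery value dominates — once one observes that the indices of inspection and discovery arms are all strictly above $u_0$ is not needed; the comparison in the theorem is among the three reservation values and the stopping rule absorbs the outside option since $u_0$ itself can be viewed as an always-available purchase with $z_0^b = u_0$. A small additional point to record is that each index is unchanged by the presence of the other arms (the defining indifference problems involve only the arm in question and the hypothetical outside option), which is the ``independence'' and ``tractability'' claim. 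I expect the main obstacle to be the discovery-arm index computation: one must argue rigorously that the branching copy of the discovery arm does not inflate its index — equivalently, that the one-step-lookahead (myopic) discovery value is the true Gittins index — and separately handle the belief parameter $q$ to confirm that it drops out, as the text asserts. The uniqueness statements cited for \eqref{eq:res_search} and \eqref{eq:res_exp}, plus monotonicity of $Q_s$ and $Q_d$ in $z$, are what make the fixed-point argument go through cleanly.
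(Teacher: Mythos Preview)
Your architecture matches the paper's: cast the problem as a stoppable branching bandit in the sense of \citet{Keller2003}, invoke optimality of the Gittins index policy, and identify the three indices with $z_j^b$, $z_j^s$, and $z^d$; the purchase and inspection indices are handled identically.

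The only substantive difference is in pinning down the discovery index. You propose a self-consistency argument: at retirement value $z^d$ the spawned discovery arm is ``an identical copy'' with index $z^d$ and is therefore immaterial, so $z^d$ solves the full index equation by uniqueness. The paper instead proves a monotonicity lemma $g_t^d \geq g_{t+1}^d$ by backward induction, using Keller's integral representation of the value function, and then observes that monotonicity makes the future-discovery term drop out of the index equation (because $\partial_w\Theta(\Omega_{t+1},\{d\},w)=1$ for $w\geq g_{t+1}^d$). Your premise that the spawned arm is an identical copy is not literally true when $|J|<\infty$ (the child has one fewer discovery remaining) or when discovery may terminate via $q$; you correctly flag this as the main obstacle but do not resolve it. The fix is an induction on the number of remaining discoveries --- base case: the last discovery has index $z^d$ by definition; inductive step: if the child's index is $z^d$, then at retirement value $z^d$ the child sits exactly at its own index and contributes nothing, so the parent's index is again $z^d$ --- which is the paper's backward induction in different dress. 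So the proposal is correct and the two routes converge once that step is filled in.
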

The proof of Theorem \ref{theo:optimal_policy} relies on results
from the literature on multi-armed bandit problems, specifically the
branching bandits framework of \citet{Keller2003}. These authors
show that in a multi-armed bandit problem where taking an action branches
off into new actions, a Gittins index policy is optimal. Importantly,
as an action branches off, it cannot be taken again in its original
state. This ensures that available actions are independent in the
sense that taking one does not alter the state of any other available
action. The imposed precedence constraints combined with the fact
that the consumer cannot discover a product for a second time imply
the same branching structure in the SD problem, and the results of
\citet{Keller2003} therefore imply that a Gittins index policy is
optimal. Introducing a monotonicity condition I then show that the
Gittins index is equivalent to the simple reservation values defined
above. 

Based on Theorem \ref{theo:optimal_policy}, optimal search behavior
can be analyzed using only \eqref{eq:res_search}, \eqref{eq:res_buy}
and \eqref{eq:res_exp}. \citet{Weitzman1979} showed that search
values decrease in inspection costs and increase if larger realizations
$y_{j}$ become more likely through a shift in the probability mass
of $Y$. The same applies to the discovery value. It decreases in
discovery costs and increases if probability mass of $X$ is shifted
towards larger values. The discovery value also depends on inspection
costs and the conditional distribution of $Y$ through the value function;
it decreases in inspection costs and increases if larger values of
$Y$ are more likely. 

To see the latter, consider the case where alternatives are discovered
one at a time. In this case, the myopic net gain of discovering more
products reduces to
\begin{equation}
Q_{d}(c_{d},c_{s},z)=\mathbb{E}_{X}\left[\max\left\{ 0,Q_{s}(X,c_{s},z)\right\} \right]-c_{d}
\end{equation}
For any $c_{s}^{\prime}>c_{s}$, it holds that $Q_{s}(x,c_{s}^{\prime},z)\leq Q_{s}(x,c_{s},z)$
for all finite values of $x$ and $z$, implying that $Q_{d}(c_{d},c_{s}^{\prime},z)\leq Q_{d}(c_{d},c_{s},z)$
for all $z$. As $Q_{d}(c_{d},c_{s},z)$ is decreasing in $z$ (see
Appendix \ref{sec:Proofs}), it follows that the
respective discovery values satisfy $z^{d\prime}\leq z^{d}$.

The optimal policy being fully characterized by simple rules leads
to straightforward analysis of optimal choices for any given awareness
and consideration sets. For example, consider a period $t$ where
$\max\left\{ z^{d},\tilde{z}^{s}(t)\right\} <\tilde{z}^{b}(t)$ such
that the consumer stops searching. When decreasing inspection costs
sufficiently in this case, the inequality reverts and the consumer
will instead either first discover more products, or inspect the best
product from the awareness set.

\subsection{Monotonicity and Extensions}
 
For the reservation value policy of Theorem \ref{theo:optimal_policy}
to be optimal, the discovery value needs to fully capture the expected
net benefits of discovering more products, including the option value
of being able to continue discovering products. The monotonicity condition
used in the proof of the theorem ensures that this holds. It states
that the expected net benefits of discovering more products do not
increase during search. Hence, whenever the consumer is indifferent
between taking the hypothetical outside option and discovering more
products in $t$, he will either continue to be indifferent or take
the outside option in $t+1$. Whether the consumer can continue to
discover products in $t+1$ thus does not affect expected net benefits
in $t$, and the discovery value fully captures the expected net benefits.\footnote{For the search and purchase values, no monotonicity condition is required.
This follows from the fact that in the independent comparison to the
hypothetical outside option, both actions do not provide the option
to continue searching. After buying a product, search ends, and after
having inspected a product, the only option that remains is to either
buy the product or choose the hypothetical outside option. Consequently,
for inspection and purchase, at most one future period needs to be
considered to fully capture the respective net benefits over immediately
taking the outside option.}

In the baseline SD problem, several assumptions directly imply that
the monotonicity condition holds. Specifically, (i) the consumer believes
that product valuations are independent and identically distributed,
(ii) $q$ remains constant and (iii) $n_{d}$ is known. However, these
assumptions can be relaxed to capture a wider range of settings. Below,
three related extensions are presented. Formal results and further
details are presented in Appendix \ref{sec:Generalizations}.

\textbf{Ranking in distribution:} In some settings, the consumer'
beliefs are such that the distribution of partial valuations depends
on the position at which a product is discovered. Monotonicity will
be satisfied if beliefs are such that the mean of $X_{j}$ decreases
in a product's position $h_{j}$, or more generally if beliefs are
such that $X_{j}$ first-order stochastically dominates $X_{k}$ if
$h_{j}\leq h_{k}$. The optimal policy then continues to be characterized
by Theorem \ref{theo:optimal_policy}, the only difference being that
the discovery value is based on the position-specific beliefs and
decreases during search, making it optimal to recall products in some
cases. This could result in a market environment where sellers
of differentiated products compete in marketing efforts for consumers
to become aware of their products early on. If sellers offering better
valuations have a stronger incentive to be discovered first, they
will increase marketing efforts.\footnote{See, for example, the discussion on 
non-price advertising and the
related references cited in \citet{Armstrong2017}. } 
Consumers' beliefs then will
reflect this ordering such that monotonicity holds and the simple
optimal policy can be used to characterize equilibria. Similarly, online stores 
often use algorithms to first present products
that consumers may like more. This again satisfies monotonicity such that the
tractable optimal policy can be used to rationalize search behavior in click-stream data from 
such stores. 

\textbf{Unknown }$\boldsymbol{n_{d}}$\textbf{:} In other environments,
a consumer may not know how many alternatives he will discover. For
example, a consumer may believe that there are still alternatives
he is not aware of and thus try to discover them, only to realize
that he already is aware of all the available alternatives. In such
cases, a belief over how many alternatives are going to be discovered
needs to be specified. The reservation value policy continues to be
optimal if these beliefs are such that monotonicity is satisfied.
This will be the case if beliefs are constant, or if (more realistically)
the consumer expects to discover fewer alternatives the more alternatives
he already has discovered.\footnote{This would reflect the case where the consumer expects it to become
harder to discover alternatives the fewer alternatives have not yet
been discovered. Alternatively, this could be modeled as either $q$
or $c_{d}$ to increase with each discovery, which also satisfies
monotonicity. } The only difference to the baseline is that in $Q_{d}(c_{d},c_{s},z)$,
expectations are additionally based on beliefs over how many alternatives
will be revealed.

\textbf{Multiple discovery technologies:} Consumers may also have
multiple discovery technologies at their disposal. In an online setting,
for example, each technology may represent a different online shop
offering alternatives. Moreover, advertising measures may separate
products into different product pools. In such settings, the consumer
also decides which technology to use to discover more alternatives.
By assigning each of the discovery technologies a different discovery
value, the optimal policy can be adjusted to accommodate this case.\footnote{An interesting extension for future research is to model the case
where a consumer can choose the order in which products are revealed
based on a product characteristic such as price. This requires modeling
beliefs that reflect this ordering through updating the support of
the price distribution; in an ascending order the minimum price that
can be discovered needs to increase with every discovery. \citet{Chen2016}
incorporate choices of such search refinements in their empirical
model. However, in their model, a consumer simultaneously decides
on the refinement and which position to inspect. In contrast, if such
choices are modeled as a SD problem the consumer would sequentially
decided between a discovery technology and whether to inspect a product.
This is more closely done by \citet{LosSantos2017}, who also model
sequential choice of search refinements and clicks, but use simplifying
assumptions and do not derive the optimal policy. }

\subsection{Limitations\label{subsec:Limitations}}

Though the optimal policy applies to a broad class of search problems,
two limitations exist. The first is that in the dynamic decision process,
all available actions need to be independent of each other; performing
one action in $t$ should not affect the payoff of any other action
that is available in $t$. This is required to guarantee that the
reservation values fully capture the effects of each action. Recall
that each reservation value does not depend on the availability of
other actions. If independence does not hold, however, the availability
of other actions also influences the expected payoff of an action.
Choosing actions based only on reservation values that disregard these
effects therefore will not be optimal. Alternative search problems that violate this independence
assumption are presented in the appendix.

The second limitation is that the monotonicity condition discussed
above needs to hold for the discovery value to be based on \emph{myopic}
net benefits. If this condition does not hold, then the discovery
value does not fully capture the expected net benefits of discovering
more products. However, as long as independence of the available actions
is satisfied, a Gittins index policy remains optimal (see proof of
Theorem 1). Hence, the optimal policy when monotonicity fails consists
of comparing the search and purchase values from equations \eqref{eq:res_search_explicit}
and \eqref{eq:res_buy} with the Gittins index value for discovery
that explicitly accounts for future discoveries. 

One interesting case where this fails is if the consumer learns about
the distribution of $X$ or the number of alternatives he will discover
during search. So far, it was assumed that independent of the information
the consumer reveals during search, his beliefs remain unchanged.
This will be the case if either the consumer has rational expectations
and hence knows the underlying distributions, or simply does not update
beliefs. With learning, the consumer updates his beliefs based on
partial valuations or number of products revealed in a discovery. 

Similar learning models have been studied in the context of classic
search (and stopping) problems, where the consumer learns about the
distribution he is sampling from \citep[e.g.][]{Rothschild1974,Rosenfield1981,Bikhchandani1996,Adam2001}.
\footnote{The SD problem is equivalent to these learning problems in the case
where $c_{s}=0$ and the consumer updates beliefs about the distribution
of the random variable $X+Y$.  } Whereas these studies determine prior beliefs or learning rules such
that the optimal policy is based on myopic reservation values, similar
conditions do not guarantee that monotonicity holds in a SD problem
where a consumer learns about the distribution of $X$ or the number
of alternatives he will discover. The reason is that in classic search
problems a consumer reveals full information when inspecting a product.
Hence, if a product turns out to be a good match, the value of stopping
increases along with the value of continuing search, where the learning
rule guarantees that this is such that the expected net benefits of
continuing search over stopping with the current best option weakly
decrease with each inspection.\footnote{See e.g. Theorem 1 in \citet{Rosenfield1981}. }
In contrast, in the SD problem, discovering either more or better
partial valuations does not necessarily increase the value of the
best option in the consideration set.\footnote{If the consumer 
learns about the distribution of $Y$ conditional on $X$, then 
discovering more alternatives with similar $X$ can increase the value of the best
option. Analyzing this mechanism provides an interesting avenue for future research. } 
For example, the consumer can
discover many products that look very promising based on partial valuations,
but after inspection realize that these products are a bad match after
all. In this case, the value of stopping remains the same, whereas
beliefs are shifted such that the consumer expects to find better
or more products in future discoveries.

Extending the SD problem to the case where the consumer learns about
the distribution of $X$ or the number of alternatives therefore comes
at the cost of losing tractability of the discovery value; a tractable
expression for the Gittins index value for the discovery action (henceforth
denoted by $z_{t}^{L}$) is difficult to obtain as it is necessary
to determine the value function of a dynamic decision process that
includes many future periods. Moreover, whereas the discovery value
in Theorem \ref{theo:optimal_policy} remains constant throughout
search, $z_{t}^{L}$ changes whenever the consumer updates beliefs.
Consequently, the optimal policy when the consumer updates beliefs
becomes more complex in that the discovery value changes with each
discovery and explicitly includes future periods. 

Whereas $z_{t}^{L}$ is not tractable and computationally expensive
to obtain, it is possible to derive bounds on this value that are
easier to compute and can serve as an approximation. First, $z_{t}^{L}$
can be approximated from below through \emph{k}-step look-ahead values.
The 1-step look-ahead value is defined by \eqref{eq:res_exp}, where
the expectation operator is adjusted to account for the consumer's
beliefs in $t$. As \emph{k }increases, more future discoveries are
considered in \eqref{eq:res_exp}, leading to a more precise approximation
of $z_{t}^{L}$ up to the point where $z_{t}^{L}$ is calculated precisely.
Second, a result of \citet{State1970} can be used to derive an upper
bound. These authors show that the expected value of continuing search
when the consumer fully resolves uncertainty on the underlying distributions
in the next period exceeds the true continuation value in a classic
search problem where a consumer samples from an unknown distribution.
The same logic directly applies in the extension to the SD problem
and the upper bound then can be computed using the results provided
in the next section. A formal treatment of these bounds is provided
in the appendix.

\section{Eventual Purchases, Consumer's Payoff, and Demand}

In an environment where consumers sequentially inspect products, a
consumer's expected payoff and the market demand results from integrating
over different possible choice sequences leading to eventual purchases.
Conceptually, this poses a major challenge, as the number of possible
choice sequences grows extremely fast in the number of available alternatives.\footnote{For example, with only one alternative and an outside option, there
are four possible choice sequences. With two alternatives, the number
of possible choice sequences increases to 20, and with three alternatives,
there are already more than 100 possible choice sequences. }

Theorem \ref{theo:effective_values} allows to circumvent this difficulty.
It states that the purchase outcome of a consumer solving the search
problem is equivalent to a consumer directly buying a product that
offers the highest \emph{effective value}. Importantly, a product's
effective value does not depend on the various possible choice sequences
leading to its purchase.
\begin{theorem}
Let 
\[
w_{j}\equiv\begin{cases}
u_{j} & \text{if } u_j < z^d \text{ and } j\in C_{0}\\
\tilde{w}_{j} & \text{if }\tilde{w}_{j}<z^{d}\text{ or }j\in S_{0}\\
z^{d}+f(h_{j})+\varepsilon\tilde{w}_{j} & \text{else}
\end{cases}
\]
be the effective value for product $j$ revealed on position $h_{j}$
where $\tilde{w}_{j}\equiv\min\{z_{j}^{s},z_{j}^{b}\}=x_{j}+\min\left\{ \xi,y_{j}\right\} $,
$f(h_{j})$ is a non-negative function and strictly decreasing in
$h_{j}$ and $\varepsilon$ is an infinitesimal. The solution to the
search and discovery problem with initial consideration set $C_{0}$
and awareness set $S_{0}$ leads to the eventual purchase of the product
with the largest effective value. \label{theo:effective_values}
\end{theorem}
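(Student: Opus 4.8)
The plan is to build directly on Theorem~\ref{theo:optimal_policy}: along the optimal path the consumer, at each date, takes whichever of the search, purchase, and discovery values is currently largest, and I want to identify the eventually purchased product $b^{\ast}$ as the one with the largest $w_{j}$, reading the ``else'' branch as an effective value infinitesimally above $z^{d}$ that is ordered first by position through $f$ and then by $\tilde{w}_{j}$ through $\varepsilon$. First I would extract three structural properties of the optimal path from the three rules of Theorem~\ref{theo:optimal_policy}: (a) a product is inspected only at a date where its search value is at least $z^{d}$, and purchased only at a date where its purchase value is at least $z^{d}$ whenever discovery is still available; (b) products are discovered in increasing order of position, and the product at position $h$ is discovered only after all lower positions have been revealed and only when every search and purchase value of an already-revealed product is at most $z^{d}$; (c) once the consideration set contains a product whose purchase value exceeds $z^{d}$ the consumer never discovers again, and the consumer does not discover while the awareness set contains a product whose search value exceeds $z^{d}$. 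Fact~(c) is the engine behind the ranking term $f(h_{j})$ and the ``else'' branch.

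If the consumer never discovers, the entire path coincides with Pandora's rule of \citet{Weitzman1979} applied to $C_{0}\cup S_{0}$ (a product already in $C_{0}$ being an opened box), so the eventual purchase theorem of \citet{Choi2016a}, \citet{Armstrong2017} and \citet{Kleinberg2016} gives that $b^{\ast}$ maximizes $u_{j}$ over $j\in C_{0}$ and $\tilde{w}_{j}=\min\{z_{j}^{s},z_{j}^{b}\}$ over $j\in S_{0}$, i.e., maximizes $w_{j}$ over $S_{0}\cup C_{0}$. If $S_{0}\cup C_{0}=J$ this proves the claim; otherwise the consumer declined to discover at every date, so $z_{b^{\ast}}^{b}\ge z^{d}$ by the stopping rule and, if $b^{\ast}\in S_{0}$, $z_{b^{\ast}}^{s}\ge z^{d}$ by the inspection rule, hence $w_{b^{\ast}}\ge z^{d}$; since every undiscovered product $k$ has $w_{k}$ with standard part at most $z^{d}$, $b^{\ast}$ again has the largest effective value.

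Now suppose discovery is used at least once. By Fact~(c), every $j\in C_{0}$ has $u_{j}\le z^{d}$, and every $j\in S_{0}$ has $\tilde{w}_{j}\le z^{d}$ (a search value above $z^{d}$ would force inspection before the first discovery, after which Fact~(c) forces the revealed utility to be at most $z^{d}$, so $\tilde{w}_{j}=u_{j}\le z^{d}$); thus every product in $S_{0}\cup C_{0}$ has $w_{j}\le z^{d}$. I would then trace the discovery phase: after inspecting every $S_{0}$-product with search value above $z^{d}$, the consumer discovers positions $1,2,\dots$ in order; at a position whose revealed partial valuation yields search value below $z^{d}$ it discovers again (and that product has $\tilde{w}<z^{d}$, hence lies in the middle branch), while at the first position whose search value is at least $z^{d}$ it inspects the product, and either the utility is below $z^{d}$ (again $\tilde{w}<z^{d}$, middle branch, discovery resumes) or the utility is at least $z^{d}$, in which case Fact~(c) halts discovery and Pandora's rule selects that product. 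Hence $b^{\ast}$ is the product at the smallest position with both search value and utility at least $z^{d}$; it satisfies $\tilde{w}_{b^{\ast}}\ge z^{d}$ and $b^{\ast}\notin S_{0}\cup C_{0}$, so it lies in the ``else'' branch, and every product at a smaller position lies in the middle branch. To finish, I would compare $w_{b^{\ast}}$ with every other product: any other ``else''-branch product $j$ has $h_{j}>h_{b^{\ast}}$, so $f(h_{b^{\ast}})>f(h_{j})$ dominates the infinitesimal difference of the $\varepsilon\tilde{w}$ terms and $w_{b^{\ast}}>w_{j}$; every middle-branch product $m$ has $w_{m}=\tilde{w}_{m}<z^{d}$, strictly below $w_{b^{\ast}}$ in standard part; and every product in $S_{0}\cup C_{0}$ has $w$-value at most $z^{d}$, again below $w_{b^{\ast}}$. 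This exhausts all products, so $b^{\ast}$ maximizes $w_{j}$.

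The step I expect to be the main obstacle is making the comparisons that involve the ``else'' branch fully rigorous: one must fix the precise non-Archimedean order in which $z^{d}$, the position-correction $f(h_{j})$, and the infinitesimal $\varepsilon\tilde{w}_{j}$ enter the effective value, and reconcile it with a tie-breaking convention in the optimal policy at the knife-edge dates where a reservation value equals $z^{d}$ exactly — this is what the $\varepsilon\tilde{w}_{j}$ term in the statement is designed to absorb. A secondary, purely bookkeeping, point is $n_{d}>1$: a single discovery then reveals several partial valuations at once, so ``the product at the smallest position with both search value and utility at least $z^{d}$'' must be read relative to the first discovery batch that contains such a product, but Facts~(a)--(c) and the argument above go through unchanged.
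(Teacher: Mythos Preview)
Your forward-tracing strategy is sound and differs from the paper's argument, which instead proves the contrapositive: for any pair $j,k$ with $w_{k}>w_{j}$, the optimal policy never purchases $j$, via a short case split on whether $h_{k}<h_{j}$, $h_{k}>h_{j}$, or $h_{k}=h_{j}$, invoking the eventual purchase theorem of \citet{Choi2016a} only in the subcase where both products end up in the awareness set. The contrapositive route is shorter because it never needs to identify $b^{\ast}$ explicitly or follow the path; your direct construction is longer but makes the role of the ranking term $f(h_{j})$ more transparent.

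There is, however, one genuine gap in your ``discovers at least once'' branch. Your trace concludes that $b^{\ast}$ is the product at the smallest position with both $z^{s}\ge z^{d}$ and $u\ge z^{d}$, but this presupposes that such a product exists. If it does not, discovery exhausts all positions with every discovered product satisfying $\tilde{w}<z^{d}$; the consumer is then left with a pure Weitzman problem over all of $J$ (some products already inspected, the rest with $z^{s}<z^{d}$ still in the awareness set), and by the eventual purchase theorem $b^{\ast}$ is the product with the largest $\tilde{w}$ over $J$ (taking $\tilde{w}_{j}=u_{j}$ for $j\in C_{0}$). Since in this case every product lies in the first or middle branch of the effective-value definition, $w_{j}$ coincides with $\tilde{w}_{j}$ (or $u_{j}$) for all $j$, and $b^{\ast}$ again maximizes $w$. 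You need this terminal case to close the argument. Your flagged concerns about tie-breaking at $z^{d}$ and the $n_{d}>1$ bookkeeping are real but, as you say, resolved once a convention is fixed; the paper treats them at the same level of informality.
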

This result is based on and generalizes the ``eventual purchase theorem''\emph{
}of \citet{Choi2016a} \citep[and independently][]{Armstrong2017,Kleinberg2016} 
to the case where the consumer has limited
awareness. The value $\tilde{w}_{j}$ used in the theorem is equivalent
to the effective value defined by \citet{Choi2016a}, and the proof
follows the same logic; as a product (incl. out outside option) is
always bought, the proof only needs to establish that the optimal
policy never prescribes to buy a product that does not have the largest
effective value. 

The generalization to the case of limited awareness follows from the
following implication of the optimal policy: Whenever both the inspection
and the purchase value of a product in the awareness set exceed the
discovery value, the consumer will buy the product and end search.
Hence, when $\tilde{w}_{j}\geq z^{d}$, the consumer never discovers
products on positions beyond $h_{j}$. This is captured in the effective
values by the term $z^{d}+f(h_{j})$, which ranks alternatives based
on when during search they are discovered, yielding a larger effective
value if a product is discovered earlier. The infinitesimal in the
last condition additionally is necessary to rank products that are
revealed on the same position. Suppose we have $\tilde{w}_{j}>\tilde{w}_{k}\geq z^{d}$
for two products discovered on the same position. Without the infinitesimal,
the effective value would be $w_{j}=w_{k}$, implying the consumer
would be indifferent between buying either of the two products. This
contrasts the optimal policy, which for $\tilde{w}_{j}>\tilde{w}_{k}$ will
never prescribe to buy $k$ if both $j$ and $k$ are in the awareness
set. If $n_{d}=1$, the infinitesimal is not required. 

The result continues to hold for extensions of the SD problem, as
long as the discovery values are predetermined. The only difference
then is that in the effective value of an alternative $j$, the discovery
value depends on the position at which $j$ is revealed.

\subsection{Expected Payoff}

Based on these results, it is now possible to derive a simple characterization
of a consumer's expected payoff, as summarized in Proposition \ref{prop:Expected-consumer-welfare}.
In this expression, the expected payoff does not explicitly depend
on inspection and discovery costs; they affect the expected payoff
only through the discovery and search values. As the proof shows,
this follows from the definition of these values, which relate expected
payoffs and costs \citep[as in][]{Choi2016a}. Based on this characterization,
it is only necessary to derive the distribution of the effective values
without having to explicitly consider different choice sequences.
Note also that as the effective value is adjusted, the expected payoff
does not depend on the choice of function $f(h)$ which ranks alternatives
based on their position in the effective value. 
\begin{proposition}
\label{prop:Expected-consumer-welfare}A consumer's expected payoff
in the SD problem is given by 
\[
V(\Omega_{0},A_{0};\pi)=\mathbb{E}_{\hat{\boldsymbol{W}}}\left[\max_{j\in J}\hat{W}_{j}\right]
\]
where $\mathbb{E}_{\hat{\boldsymbol{W}}}\left[\cdot\right]$ integrates
over the distribution of $\hat{\boldsymbol{W}} = \left[\hat{W}_{0},\dots,\hat{W}_{\left|J\right|}\right]^{\prime}$, with $\hat{w}_{j}$
being the effective value adjusted with $\hat{w}_j = u_j \forall j \in C_0$,$\hat{w}_j = \tilde{w}_{j} \forall j \in S_0$, and $f(h_{j})=\varepsilon=0\forall h_{j}$.
If $\left|J\right|=\infty$, $V(\Omega_{0},A_{0};\pi)=z^{d}$. 
\end{proposition}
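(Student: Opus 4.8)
The plan is to combine Theorem~\ref{theo:effective_values} with the defining property of the search and discovery values, exactly in the spirit of \citet{Choi2016a}. Since a product (including the outside option) is always eventually purchased, and by Theorem~\ref{theo:effective_values} the one purchased is the one with the largest effective value $w_j$, the realized payoff along any sample path is $u(x_{j^\star},y_{j^\star})$ where $j^\star=\arg\max_j w_j$. The first step is therefore to show that this realized payoff equals $\max_j \hat W_j$ pathwise, where $\hat W_j$ is the effective value with $f\equiv\varepsilon\equiv 0$. This requires a short case analysis matching the three branches in the definition of $w_j$: on the branch $j\in C_0$ and on the branch where the product is inspected before purchase, one checks that the payoff $u_j$ coincides with $\hat w_j$ whenever $j$ is the purchased product; on the ``discover later'' branch one uses that the product is then \emph{not} purchased (since reaching it requires a discovery, and the consumer who purchases it must first inspect it, at which point $\tilde w_j=\min\{z^s_j,z^b_j\}$ governs), so that branch never supplies the maximizer of $w$ when its $\hat w_j$ value would be needed. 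Care is needed to argue that taking $f\equiv\varepsilon\equiv0$ does not change \emph{which} product is purchased in a payoff-relevant way: ties introduced by zeroing out $f$ and $\varepsilon$ occur only between products with equal $\tilde w_j$ or between a later-discovered product and the discovery value, and in all such ties the \emph{payoff} is identical, so the expectation is unaffected. This is the point I expect to be the main obstacle — not deep, but fiddly bookkeeping to make the pathwise identity $u(x_{j^\star},y_{j^\star})=\max_j\hat W_j$ watertight, especially reconciling $\hat w_j=z^d$ on the ``else'' branch with the actual continuation payoff.

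Once the pathwise identity is established, taking expectations over the joint distribution of $(\boldsymbol X,\boldsymbol Y)$ (equivalently of $\hat{\boldsymbol W}$) gives $V(\Omega_0,A_0;\pi)=\mathbb{E}_{\hat{\boldsymbol W}}[\max_{j\in J}\hat W_j]$. The step that makes this work cleanly is the observation — which I would state and justify by reference to \eqref{eq:res_search} and \eqref{eq:res_exp} — that $z^s_j$ and $z^d$ are precisely the certainty equivalents that absorb the inspection and discovery costs, so that the total expected payoff (benefits net of all search costs incurred along the path) collapses to the expectation of a single maximum, with no residual cost terms. This is the same accounting identity used in \citet{Choi2016a}; here it also covers the discovery action because $z^d$ was defined in \eqref{eq:res_exp} to make the consumer indifferent between taking a hypothetical outside option of value $z^d$ and discovering-then-searching, net of $c_d$ (and the downstream $c_s$'s). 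I would make explicit that the adjustment $f\equiv\varepsilon\equiv0$ is legitimate here precisely because, as noted after Theorem~\ref{theo:effective_values}, these terms only break ties and never change the distribution of the eventual payoff.

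For the last sentence, suppose $|J|=\infty$. Then on every sample path there are products with arbitrarily large position $h_j$, and for any such product the consumer's information before inspecting it is exactly the reduced state whose value is $z^d$ by construction of the discovery value; equivalently, the ``else'' branch of $w_j$ takes value $z^d+f(h_j)+\varepsilon\tilde w_j$, which with $f\equiv\varepsilon\equiv0$ equals $z^d$ for infinitely many $j$. Hence $\max_{j\in J}\hat W_j\ge z^d$ pathwise. For the reverse inequality I would argue that it is never optimal to purchase a product whose effective value falls below $z^d$, so the supremum of the $\hat W_j$ that are actually ``in play'' cannot exceed what the discovery option guarantees: more carefully, the optimal policy always has the discovery action available (there are always more products to discover), the discovery value $z^d$ equals the value of the subproblem of discovering and optimally searching thereafter, and by the index characterization of Theorem~\ref{theo:optimal_policy} the consumer never stops at a purchase value below $z^d$ nor inspects a product with search value below $z^d$; combining these, $\max_j\hat W_j = z^d$ almost surely, and taking expectations gives $V(\Omega_0,A_0;\pi)=z^d$. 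The delicate point to nail down here is the exchange-of-limits / measurability issue that an infinite supply of i.i.d.\ draws does not push the maximum strictly above $z^d$ with positive probability — this follows because any realized $\tilde w_j>z^d$ would have triggered a stop before reaching later positions, so such draws are never ``activated'' along the optimal path, exactly the content of the effective-value construction.
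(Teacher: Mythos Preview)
Your central claim --- that the realized payoff equals $\max_j \hat W_j$ \emph{pathwise} --- is false, and this is not a bookkeeping issue but a genuine gap. Consider the simplest case: $C_0=\{0\}$ with $u_0<z^d$, the consumer discovers product $1$, finds $z_1^s=x_1+\xi>z^d$, inspects, finds $y_1>\xi$ with $u_1>z^d$, and buys. The realized net payoff is $x_1+y_1-c_d-c_s$. Meanwhile $\hat w_1=\min\{\tilde w_1,z^d\}=z^d$ (the ``else'' branch applies since $\tilde w_1=x_1+\xi\ge z^d$), so $\max_j\hat w_j=z^d$. These two quantities differ sample-path by sample-path; they agree only after integrating over $(X_1,Y_1)$, and that equality is \emph{precisely} what the defining equations \eqref{eq:res_search} and \eqref{eq:res_exp} encode. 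You gesture at this when you call $z_j^s$ and $z^d$ ``certainty equivalents that absorb the costs,'' but a certainty equivalent is an equality of expectations, not of realizations. You cannot first establish a pathwise identity and then take expectations; you must work in expectation from the start.

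The paper's proof does exactly this: it writes the realized payoff as an explicit sum over positions with indicators for discovery, inspection, and purchase events, then substitutes the identities $c_s=\mathbb{E}[\mathbf 1(Y\ge\xi)(Y-\xi)]$ and $c_d=\mathbb{E}[\mathbf 1(\tilde{\bar W}_r\ge z^d)(\tilde{\bar W}_r-z^d)]$ inside the expectation, and manipulates until the sum collapses to $\mathbb{E}[\max_j\hat W_j]$. That algebraic telescoping is the real content of the proof and cannot be bypassed by a pathwise argument. For the $|J|=\infty$ claim, your argument is also slightly off: products on the ``else'' branch are those with $\tilde w_j\ge z^d$, not those at late positions per se; the clean argument is that $c_d>0$ forces $\mathbb{P}(\tilde W\ge z^d)>0$, so with infinitely many i.i.d.\ draws some $\tilde w_j\ge z^d$ almost surely, whence $\max_j\hat w_j=z^d$ almost surely (since $\hat w_j\le z^d$ for all $j\notin C_0\cup S_0$).
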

Whereas it is clear that making either inspection or discovery easier
leads to an increase in the expected payoff, it is not obvious which
of these two changes is more beneficial for a consumer. For the case
where $n_{d}=1$, Proposition \ref{prop:welfare-change} shows that
if the number of alternatives exceeds some threshold, then the consumer
benefits more from facilitating the discovery of additional products.\footnote{Note that this threshold can be zero. For example, this is the case
when $u_{0}=0$, $c_{s}=0.1$ and $c_{d}=0.1$, and the valuations
are drawn from standard normal distributions.}
\begin{proposition}
\label{prop:welfare-change} If $n_{d}=1$, there exists a threshold
$n^{*}$ such that whenever $\left|J\right|>n^{*}$, a consumer benefits
more from a decrease in discovery costs than a decrease in inspection
costs. This threshold decreases in the value of the alternatives in
the initial consideration and awareness set.
\end{proposition}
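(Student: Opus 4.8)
The plan is to work directly with the expected-payoff characterization of Proposition \ref{prop:Expected-consumer-welfare}, specializing to $n_d=1$. With one product discovered at a time, the effective value adjusted with $f(h_j)=\varepsilon=0$ is $\hat w_j=\min\{u_0{+}\text{something},\dots\}$ — more precisely $\hat w_j = u_j$ for $j\in C_0$, $\hat w_j=\min\{z_j^s,z_j^b\}$ truncated above at $z^d$ for products that must be discovered, and $\hat w_j=z^d$ capped appropriately; the key structural fact I would use is that for any product not in the initial sets, $\hat W_j = \min\{z^d, X_j+\min\{\xi,Y_j\}\}$, i.e. an i.i.d. draw capped at $z^d$. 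Writing $m$ for the number of products in $C_0\cup S_0$ and their (fixed, known) contribution through the running maximum, and $N=|J|-m$ for the number of ``fresh'' products, the expected payoff becomes $V(N) = \mathbb{E}\big[\max\{M_0,\ \hat W_{(1)},\dots,\hat W_{(N)}\}\big]$ where $M_0$ is the best value already available (including the outside option and the truncated search values of initially-aware products) and the $\hat W_{(i)}$ are i.i.d. copies of $\min\{z^d,\ X+\min\{\xi,Y\}\}$.

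Next I would compute the two comparative statics as derivatives (or finite differences) of $V$ with respect to $c_d$ and $c_s$. Lowering $c_d$ raises $\xi$? No — lowering $c_d$ raises $z^d$ (since $z^d$ is decreasing in $c_d$, shown in Section \ref{sec:Optimal-Search}), which relaxes the cap on every one of the $N$ fresh draws; so $\partial V/\partial(-c_d)$ is an increasing function of $N$, roughly proportional to the probability that the maximum is attained by a capped fresh draw, which tends to $1$ as $N\to\infty$ (this is exactly the $|J|=\infty$ case where $V=z^d$). Lowering $c_s$ raises $\xi$, which raises $\min\{\xi,Y_j\}$ for every product including those in $S_0$ and the fresh ones, but the effect on each fresh draw is attenuated by the cap at $z^d$ and by the truncation $\min\{\xi,Y\}$; crucially this marginal effect is bounded and does not grow without limit in $N$ — once $N$ is large the maximum sits at or near $z^d$ and further increases in $\xi$ (holding $z^d$ fixed) are mostly absorbed by the cap. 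I would make this precise by bounding $\partial V/\partial(-c_s)$ uniformly in $N$ (it is at most the corresponding sensitivity in the $|J|=\infty$ limit, which is a fixed finite number), while showing $\partial V/\partial(-c_d)\to \partial z^d/\partial(-c_d)>0$; hence for $N$ large enough the discovery channel dominates, giving the threshold $n^*$. For the monotonicity of $n^*$ in the initial values: raising any value in $C_0\cup S_0$ raises $M_0$ stochastically, which makes it more likely the maximum is attained by an already-available product — this dampens the $c_d$ sensitivity and the $c_s$ sensitivity by the same event probability, so one must check the net effect; I expect that raising $M_0$ shrinks the discovery sensitivity by less than it shrinks (or leaves unchanged) the difference in favor of inspection, but in fact the cleaner argument is that raising initial values shifts mass of $M_0$ upward past $z^d$, and beyond $z^d$ only the discovery channel (through raising $z^d$ itself) can still help, so the crossover happens at a smaller $N$.

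The main obstacle will be the second step — pinning down a clean uniform-in-$N$ bound on the inspection-cost sensitivity that is strictly below the limiting discovery-cost sensitivity. The difficulty is that both sensitivities are nonnegative and both can be nonzero in the limit (lowering $c_s$ still helps even with infinitely many products, because $\xi$ enters $z^d$ through the value function as noted after equation \eqref{eq:res_exp}); so the comparison is not ``bounded versus unbounded'' but rather requires showing the limiting discovery sensitivity exceeds the limiting inspection sensitivity, and then a continuity/monotonicity argument in $N$ to locate the threshold. I would handle the limit by differentiating the two defining indifference conditions $Q_d(c_d,c_s,z^d)=0$ and $\int_\xi^\infty[1-F(y)]\,dy=c_s$ implicitly, expressing $\partial z^d/\partial c_d$ and $\partial z^d/\partial c_s$ in terms of $F$, $G$ and the probability $\Pr(Q_s(X,c_s,z^d)>0)$, and then comparing; the inequality reduces to showing that a unit of cost saved in discovery buys at least as much as a unit saved in inspection once the agent is effectively drawing infinitely many products, which is intuitive because inspection cost is paid on every inspected product whereas, at the margin, the binding scarce resource is the ability to discover. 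The monotonicity of $n^*$ then follows from a single-crossing argument once the two sensitivities are written as expectations against the distribution of $M_0\vee \hat W_{(1:N)}$.
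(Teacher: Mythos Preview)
Your overall strategy---use Proposition~\ref{prop:Expected-consumer-welfare}, compare the sensitivity of $V$ to $c_d$ versus $c_s$, and exploit that both converge to the corresponding derivatives of $z^d$ as $|J|\to\infty$---is the right one and matches the paper's route. But there is a genuine gap at the decisive step.

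You correctly identify that the whole argument reduces to the inequality $|\partial z^d/\partial c_d|>|\partial z^d/\partial c_s|$ (in the limit, $V=z^d$), yet you never establish it; the intuition you offer (``inspection cost is paid on every inspected product whereas the binding scarce resource is discovery'') is at best loose and arguably backwards. The point is precisely the opposite: with $n_d=1$, after discovering a product the consumer sometimes \emph{chooses not to} inspect it, so inspection cost is paid strictly less often than discovery cost. Formally, for $n_d=1$ the paper's expression $Q_d(c_d,c_s,z)=\mathbb{E}_X[\max\{0,Q_s(X,c_s,z)\}]-c_d$ gives $\partial Q_d/\partial c_d=-1$ while $\partial Q_d/\partial c_s=-\mathbb{P}(Q_s(X,c_s,z)>0)\in(-1,0)$, and implicit differentiation of $Q_d(c_d,c_s,z^d)=0$ then delivers $|\partial z^d/\partial c_d|>|\partial z^d/\partial c_s|$ immediately. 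Without this you have no comparison in the limit, and the rest of the plan does not go through.

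A second, smaller issue: even granting the limit inequality, your plan to ``locate the threshold via continuity/monotonicity in $N$'' is underspecified. The paper avoids this by writing the \emph{difference} of the two expected payoffs directly (not the two sensitivities separately) as
\[
\int_{z_1^d}^{z_2^d}\bigl(1-H_2(w)\bigr)\,\mathrm{d}w \;-\; \int_{\bar w_0}^{z_1^d}\bigl(H_2(w)-H_1(w)\bigr)\,\mathrm{d}w,
\]
where subscript $1$ denotes the $c_s$-reduction and subscript $2$ the $c_d$-reduction, and $z_2^d>z_1^d$ by the comparative static above. The first integral is strictly positive and does not vanish; the second is nonnegative but, since $H_k(w)=\mathbb{P}(X+\min\{Y,\xi_k\}\le w)^{|J|}$ on $[\bar w_0,z_1^d]$, it tends to zero as $|J|\to\infty$. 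This makes the threshold argument a one-liner rather than a limit-plus-continuity patchwork.

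For the monotonicity of $n^*$ in the initial values, your ``cleaner argument'' is essentially the paper's: conditional on $\bar w_0\ge z_1^d$, only the discovery channel helps (because $z_2^d>z_1^d$), and conditional on $\bar w_0<z_1^d$, raising $\bar w_0$ shrinks the domain of the negative integral above. So that part is fine once the preceding steps are in place.
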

Whereas the proof is more involved, the intuition is that when there
are only few alternatives available, the consumer is more likely to
first discover all alternatives and then start inspecting alternatives.
Hence in expectation, he pays the inspection costs relatively often
and a reduction in inspection costs will be more beneficial. Similarly,
when the value of the outside option is large, the consumer is likely
to inspect fewer of the products he discovers, leading to relatively
small benefits of a reduction in inspection costs. 

For settings where $n_{d}>1$, it becomes difficult to obtain similarly
general results. In particular, for some distributions and $n_{d}$,
it is possible that decreasing inspection costs increases the discovery
value $z^{d}$ by more than decreasing the discovery costs by the
same amount. In such cases, the consumer will benefit more from making
inspection less costly. Nonetheless, the general intuition remains
the same in such settings; a reduction in inspection costs is more
beneficial, the more likely it is that the consumer inspects relatively
many alternatives.

\subsection{Market Demand}

Using Theorem \ref{theo:effective_values}, it is straightforward
to derive a market demand function when heterogeneous consumers optimally
solve the SD problem. In particular, let the effective value $w_{ij}$
for each consumer $i$ be a realization of the random variable $W_{j}$
and gather the random variables in $\boldsymbol{W}=\left[W_{0},\dots,W_{\left|J\right|}\right]^{\prime}$.
For a unit mass of consumers the market demand for a product $j$
then is given by 
\begin{equation}
D_{j}=\mathbb{E}_{h}\left[\mathbb{P}_{\boldsymbol{W}}\left(W_{j}\geq W_{k}\forall k\in J\backslash j\right)\right]\label{eq:market_demand_general}
\end{equation}
where the expectations operator $\mathbb{E}_{h}\left[\cdot\right]$
integrates over all permutations of the order in which products are
discovered by a consumer.

As the effective value decreases in the position at which a product
is discovered, \eqref{eq:market_demand_general} reveals that the
demand for a product depends on the probability of each position at
which it is displayed. Specifically, the demand for a product exhibits
ranking effects; products that are more likely to be discovered early
are more likely to be bought. As discussed in detail in the next section,
this follows from the structure of the SD problem. As search progresses,
it becomes less likely that a consumer has not yet settled for an
alternative; hence, fewer consumers become aware of products that
would be revealed later, leading to a lower demand for such products.

\section{Comparison of Search Problems \label{sec:Comparison}}

To highlight implications of limited awareness and how the SD problem differs from existing approaches,
I compare it with the two classical sequential search problems; directed
search as in \citet{Weitzman1979} and random search as in \citet{McCall1970}.
Both these search problems are nested within the SD problem. Directed
search results if the consumer initially has full awareness (i.e.
$S_{0}=J$) such that the consumer 
knows all partial valuations prior to search and does not need to discover products. Random search results
if discovering a product reveals full information on this product, hence the consumer always both 
inspects and discovers a product, precluding him to use partial product information to 
only inspect promising products.\footnote{Directed
search also results if discovery costs are zero such that the consumer first discovers all products and only
then starts inspecting, whereas random search also results if inspection costs are zero and the consumer
inspects any products he discovers.}

For clarity, I focus the comparison on the case where products are
discovered one at a time ($n_{d}=1$) and where the consumer initially
only knows an outside option ($S_{0}=\emptyset$). Furthermore, valuations
$x_{j}$ and $y_{j}$ are assumed to be realizations of mutually independent
random variables $X$ and $Y$, where the consumer has rational expectations
such that beliefs are correct. Assumptions specific to each search
problem are described below. 

\textbf{Search and Discovery (SD): }The consumer searches as described
in Section \ref{sec:Model}, incurring inspection costs $c_{s}$ and
discovery costs $c_{d}$. Without loss of generality, I assume that
the consumer discovers products in increasing order of their index,
making subscripts for position $h$ and product $j$ interchangeable. 

\textbf{Random Search (RS):} When discovering a product $j$, the
consumer reveals both $x_{j}$ and $y_{j}$; hence does not have to
pay a cost to inspect the product. Costs to reveal this information
are given by $c^{RS}$. In this case, the consumer optimally stops
and buys product $j$ if $x_{j}+y_{j}\geq z^{RS}$. The reservation
value is given by $z^{RS}=\mu_{X}+\mu_{Y}+\tilde{\xi}$, where $\tilde{\xi}$
is the same as in \eqref{eq:res_search_explicit} but defined over
the joint distribution of demeaned $X$ and $Y$. Products are discovered
in the same order as in SD. Furthermore, I assume $u_{0}<z^{RS}$
to ensure a non-trivial case. 

\textbf{Directed Search (DS): }The consumer initially observes $x_{j}\forall j$,
based on which he chooses to search among alternatives following Weitzman's
(1979) reservation value policy. Costs to inspect product $j$ are
given by a function $c_{j}^{DS}=v_{DS}(c_{s},h_{j})$, where $c_{s}$
are baseline costs that are adjusted for the position through a function
$v_{DS}:\mathbb{R}_{+}^{2}\rightarrow\mathbb{R}_{+}$ which is assumed
to be strictly increasing in a product's position $h_{j}$. As costs
vary across products, reservation values are given by $z_{j}^{s}=x_{j}+\xi_{j}$,
where $\xi_{j}$ is the same as in \eqref{eq:res_search_explicit}
with product-specific inspection costs. The assumption on $v_{DS}(c_{s},h_{j})$
implies that $\xi_{j}$ decreases in $j$. I impose this functional
form restriction as otherwise the DS problem does not generate similar
patterns, as discussed in Section \ref{subsec:Ranking-Effects}. 

\subsection{Stopping decisions}

In search settings, consumers' stopping decisions determine which
products consumers consider and buy. Stopping decisions therefore
shape how firms compete in prices, quality or for being discovered
early during search. Hence, comparing stopping decisions across the
different search problems provides important insights on how well
existing approaches are able to capture the more general setting where
consumers are not aware of all alternatives and use partial information
to determine whether to inspect products. 

In the SD problem, a consumer always stops search at a product $k$
whenever the product is both promising enough to be inspected  and
offers a large enough valuation to not make it worthwhile to continue
discovering more products. Formally, this is given by the condition
$x_{k}+\min\{y_{k},\xi\}\geq z^{d}$. The probability that a consumer
will stop searching before discovering product $j$ therefore is given
by 
\begin{equation}
\mathbb{P}_{\boldsymbol{X},\boldsymbol{Y}}(X_{k}+\min\{Y_{k},\xi\}\leq z^{d}\forall k<j)=1-\mathbb{P}_{X,Y}(X+\min\left\{ Y,\xi\right\} \leq z^{d})^{j-1}\label{eq:stop_SD}
\end{equation}

Similarly, in the RS problem, a consumer will always stop search at
a product $k$ whenever $x_{k}+y_{k}\geq z^{RS}$, hence the probability
of stopping search before discovering product $j$ is given by 
\begin{equation}
\mathbb{P}_{\boldsymbol{X},\boldsymbol{Y}}(X_{k}+Y_{k}\leq z^{RS}\forall k<j)=1-\mathbb{P}_{X,Y}(X+Y\leq z^{RS})^{j-1}\label{eq:stop_RS}
\end{equation}

In both search problems, a consumer may stop search before discovering
a product $j$. Consequently, stopping decisions in the SD and the
RS problem imply the same feature: Products that a consumer initially
has no information on may never be discovered and bought, independent
of how the consumer values them. 

However, as the consumer has the option of not inspecting products
with low partial valuations, stopping probabilities differ. In
particular, in the case where the total cost to reveal all information
about a product are the same, stopping probabilities are smaller in
the SD problem. This is highlighted in Proposition \ref{prop: diff-RS-SD stopping}
and follows from the fact that not having to inspect alternatives
with small partial valuations allows to save on inspection costs.
This increases the expected benefit of discovering more products,
which implies a smaller probability of search stopping, and that on
average, more products will be discovered in the SD problem. 

\begin{proposition}
\label{prop: diff-RS-SD stopping}If costs in the RS problem are given
by $c^{RS}=c_{s}+c_{d}$, a consumer on average ends search at earlier
positions in the RS than in the SD problem. 
\end{proposition}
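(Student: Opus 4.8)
The plan is to collapse the whole comparison onto a single scalar inequality between per-product continuation probabilities. Write $p_{SD}\equiv\mathbb{P}_{X,Y}(X+\min\{Y,\xi\}\leq z^{d})$ and $p_{RS}\equiv\mathbb{P}_{X,Y}(X+Y\leq z^{RS})$. By \eqref{eq:stop_SD} and \eqref{eq:stop_RS} together with the i.i.d.\ assumption on valuations, the position $N$ at which search terminates (equivalently, the number of products discovered) satisfies $\mathbb{P}(N^{SD}\geq j)=p_{SD}^{\,j-1}$ and $\mathbb{P}(N^{RS}\geq j)=p_{RS}^{\,j-1}$ for every $j$, truncated at $\left|J\right|$ when $\left|J\right|$ is finite since search must stop once all products are discovered. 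Hence it suffices to prove $p_{SD}\geq p_{RS}$: this gives $\mathbb{P}(N^{SD}\geq j)\geq\mathbb{P}(N^{RS}\geq j)$ for all $j$, so $N^{RS}$ is first-order stochastically dominated by $N^{SD}$, and in particular $\mathbb{E}[N^{RS}]\leq\mathbb{E}[N^{SD}]$, which is the assertion.

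To obtain $p_{SD}\geq p_{RS}$ I would first establish the ordering of reservation values $z^{d}\geq z^{RS}$ under $c^{RS}=c_{s}+c_{d}$. Define the two continuous, non-increasing functions $\psi(z)\equiv\mathbb{E}_{X,Y}[\max\{0,X+Y-z\}]$ and $\phi(z)\equiv\mathbb{E}_{X}[\max\{0,\mathbb{E}_{Y}[\max\{0,X+Y-z\}]-c_{s}\}]$; by the definitions in Sections \ref{sec:Optimal-Search} and \ref{sec:Comparison} with $n_{d}=1$, $z^{RS}$ solves $\psi(z)=c_{s}+c_{d}$ and $z^{d}$ solves $\phi(z)=c_{d}$. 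Since $\max\{0,a\}\geq a$, for every $z$ we have $\phi(z)\geq\mathbb{E}_{X}[\mathbb{E}_{Y}[\max\{0,X+Y-z\}]-c_{s}]=\psi(z)-c_{s}$. Evaluating at $z=z^{RS}$ gives $\phi(z^{RS})\geq\psi(z^{RS})-c_{s}=c_{d}=\phi(z^{d})$, and because $\phi$ is strictly decreasing wherever it is positive and $\phi(z^{d})=c_{d}>0$, this forces $z^{RS}\leq z^{d}$. Second, since $\min\{Y,\xi\}\leq Y$ pointwise, the event $\{X+Y\leq z^{d}\}$ is contained in $\{X+\min\{Y,\xi\}\leq z^{d}\}$, so $p_{SD}=\mathbb{P}(X+\min\{Y,\xi\}\leq z^{d})\geq\mathbb{P}(X+Y\leq z^{d})\geq\mathbb{P}(X+Y\leq z^{RS})=p_{RS}$, the last step using $z^{d}\geq z^{RS}$ and monotonicity of the c.d.f. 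This is exactly what was needed.

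The main obstacle is the reservation-value ordering $z^{d}\geq z^{RS}$; everything else is routine monotonicity bookkeeping. Intuitively it says that the option to skip inspection of products with low partial valuations raises the marginal value of discovery, but formally it rests on matching the two distinct fixed-point characterizations and on the single inequality $\max\{0,a\}\geq a$ applied inside the nested expectations. It is worth flagging the degenerate case $c_{s}=0$: then $\xi$ equals the essential supremum of $Y$, so $\min\{Y,\xi\}=Y$, and $\phi\equiv\psi$, whence $z^{d}=z^{RS}$ and the two problems coincide (consistent with the footnote that random search obtains when inspection is free). Whenever $c_{s}>0$, both displayed inequalities are strict on a positive-probability event, so the consumer strictly ends search earlier on average in the RS problem.
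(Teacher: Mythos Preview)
Your proof is correct and reaches the same destination as the paper, but by a cleaner route for the key step $z^{d}\geq z^{RS}$. The paper argues this via sensitivities: it observes that $z^{d}=z^{RS}$ when $c_{s}=0$ and then invokes $\left|\partial z^{RS}/\partial c_{s}\right|\geq\left|\partial z^{d}/\partial c_{s}\right|$ to conclude the ordering for all $c_{s}>0$. You instead compare the two fixed-point equations directly, using the single pointwise inequality $\max\{0,a\}\geq a$ inside the nested expectation to obtain $\phi(z)\geq\psi(z)-c_{s}$, and then monotonicity of $\phi$. Your argument is more elementary and self-contained: it avoids differentiating implicitly defined reservation values and does not require the $c_{s}=0$ boundary calibration. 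The paper's approach, on the other hand, makes the comparative-statics content explicit (how the two thresholds move with $c_{s}$), which ties in with the surrounding discussion. Both then finish identically by plugging $z^{d}\geq z^{RS}$ into \eqref{eq:stop_SD}--\eqref{eq:stop_RS}; your additional step $\min\{Y,\xi\}\leq Y$ to pass from $z^{d}\geq z^{RS}$ to $p_{SD}\geq p_{RS}$ just spells out what the paper leaves implicit in ``immediately yields the result.''
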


In contrast, stopping decisions are different in the DS problem. As
the consumer initially knows of the existence of all products and
can order them based on partial information, there is no stopping
decision in terms of discovering products. Instead, the consumer directly
compares all partial valuations and the different inspection costs,
based on which he decides the order in which to inspect products.
Hence, he can directly inspect highly valued products even when they
are presented at the last position. 

This difference arises from the different assumptions on consumers'
initial information and is paramount in the analysis of search frictions.
Consider an equilibrium setting where horizontally differentiated
alternatives are supplied by firms that compete by setting mean partial
valuations \citep[e.g. by setting prices as in][]{Choi2016a}. If
consumers are aware of all alternatives and search as in the DS problem,
all firms will compete directly with each other. In contrast, in a
SD problem, the firm that is discovered first initially competes only
with the option of discovering potentially better products.
This difference is further illustrated in Appendix \ref{sec:Sellers'-decisions},
and as it determines how firms compete, will lead to different equilibrium
dynamics.\footnote{To give an example, \citet{Anderson1999} and \citet{Choi2016a} model
a similar environment, with the difference that in the former, consumers
initially are not aware of any alternatives, whereas in the latter
they are aware and observe prices of all alternatives. Whereas in
the former, decreasing inspection costs lowers the equilibrium price
in a symmetric equilibrium, the opposite holds in the latter environment. }

\subsection{Ranking Effects\label{subsec:Ranking-Effects}}

The above analysis already suggests that the demand structure differs
across the three search problems. To provide further details, I focus
on a particular pattern that is generated by all three search problems:
Market demand for a product decreases in its position. Such ranking
effects are important as they determine how fiercely sellers compete
for their products to be revealed on early positions, for example
through informative advertising or position auctions \citep[e.g.][]{Athey2011}.
Furthermore, they have received considerable attention in the marketing
literature, which has produced ample empirical evidence that suggests
their importance in online markets \citep[e.g.][]{Ghose2014,LosSantos2017,Ursu2018}. 

To compare the mechanism producing ranking effects across the search
problems, I use the following definition: The ranking effect for a
product is the difference in market demand of the product being revealed
at position $h$ and at $h+1$, with the corresponding exchange of
the product previously revealed at position $h+1$. Formally, this is given
by
\begin{equation}
r_{k}(h)\equiv d_{k}(h)-d_{k}(h+1)\label{eq:def-general-ranking-effect}
\end{equation}
where $d_{k}(h)$ denotes the market demand for a product when revealed
at position $h$ in search problem $k\in\{SD,RS,DS\}$. For clarity,
product specific subscripts are either omitted or exchanged with position
subscripts in the following. The former is feasible as effective values are
assumed to be independent realizations of a random variable $W$.

To investigate ranking effects, it is first necessary to derive the
market demand at a particular position $h$. For a unit mass of consumers
with independent realizations of effective values, it is given by
\begin{multline}
d_{SD}(h)=\mathbb{P}_{W}(W<z^{d})^{h-1}\biggl[\mathbb{P}_{W}(W\geq z^{d})\\
+\mathbb{P}_{W}(W<z^{d})^{\left|J\right|-(h-1)}\mathbb{P}_{\boldsymbol{W}}(W\geq\max_{k\in J}W_{k}|W_{k}<z^{d}\forall j)\biggr]
\end{multline}

The expression follows from Theorem \ref{theo:effective_values} which
implies that if a consumer discovers a product with $w_{j}\geq z^{d}$,
he will stop searching and buy a product $j$. The consumer will only
discover and have the option to buy a product on position $h$ if
$w_{j}<z^{d}$ for all products on better positions. In contrast,
when $w_{j}<z^{d}$, the consumer will first discover more products,
and only recall $j$ if he discovers all products and $j$ is the
best among them.

In the latter case, a product's position does not affect market demand;
once all products are discovered, products are equivalent in terms
of their inspection costs and the order in which they are inspected
is only determined based on partial valuations. This implies that
the ranking effect in the SD problem is independent of the number
of alternatives and simplifies to
\begin{equation}
r_{SD}(h)=\mathbb{P}_{W}\left(W\geq z^{d}\right)\left[\mathbb{P}_{W}(W<z^{d})^{h-1}-\mathbb{P}_{W}(W<z^{d})^{h}\right]\label{eq:ranking-eff-sd}
\end{equation}

This expression reveals that the ranking effect in the SD problem
solely results from the difference in the probability of a consumer
reaching positions $h$ or $h+1$ respectively. Besides the distribution
of valuations and the inspection and discovery costs, Proposition
\ref{prop: ranking-sd-rs} shows that the ranking effect is determined
by the position $h$ to which the product is moved. When $h$ is large,
fewer consumers will not have already stopped searching before  reaching
$h$. Hence, the later a product is revealed, the smaller is the increase
in demand when moving one position ahead.

The demand in a random search problem is derived similarly. In RS,
a consumer will only be able to buy a product if he has not stopped
searching before, which requires that $x+y<z^{RS}$ for all products
on better positions. Furthermore, a consumer will also only recall
a product if he has inspected all alternatives. Similar to the SD
problem, this implies that the ranking effect in the RS problem is
given by
\begin{equation}
r_{RS}(h)=\mathbb{P}_{X,Y}\left(X+Y\geq z^{RS}\right)\left[\mathbb{P}_{X,Y}\left(X+Y<z^{RS}\right)^{h-1}-\mathbb{P}_{X,Y}\left(X+Y<z^{RS}\right)^{h}\right]\label{eq:ranking-eff-rs}
\end{equation}

Comparing \eqref{eq:ranking-eff-sd} with \eqref{eq:ranking-eff-rs}
reveals that ranking effects in the RS problem are produced by the
same mechanism as in the SD problem. In both search problems; fewer
consumers buy products at later positions due to the increasing the
probability of having stopped searching before discovering these products.
It follows that in both search problems, ranking effects decrease
in the position and are independent of the total number of alternatives. 

Though their extent generally differs, Proposition \ref{prop: ranking-sd-rs}
additionally shows that at later positions, ranking effects will be
larger in the SD problem. The result is a direct implication of Proposition
\ref{prop: diff-RS-SD stopping}; as a consumer is more likely to
reach a product at a later position in the SD problem, ranking effects
at later positions will be larger. 
\begin{proposition}
\label{prop: ranking-sd-rs}The ranking effect in both the SD and
the RS problem decreases in position $h$ and is independent of the
number of alternatives. Furthermore, if $c^{RS}=c_{s}+c_{d}$, there
exists a threshold $h^{*}$ such that $r_{SD}(h)\geq r_{RS}(h)$ for
all $h>h^{*}$.
\end{proposition}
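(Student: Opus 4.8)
The plan is to establish the three claims in Proposition \ref{prop: ranking-sd-rs} in turn, relying throughout on the closed-form expressions \eqref{eq:ranking-eff-sd} and \eqref{eq:ranking-eff-rs}. First, for the monotonicity in $h$: write $p_{SD}\equiv\mathbb{P}_{W}(W<z^{d})\in[0,1]$ and note that $r_{SD}(h)=\mathbb{P}_{W}(W\geq z^{d})\,p_{SD}^{h-1}(1-p_{SD})$, which is a constant times $p_{SD}^{h-1}$; since $p_{SD}\in[0,1]$ this is weakly decreasing in $h$, strictly so whenever $0<p_{SD}<1$. The identical argument with $p_{RS}\equiv\mathbb{P}_{X,Y}(X+Y<z^{RS})$ handles the RS case. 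The independence from $\left|J\right|$ is immediate from the same expressions: neither \eqref{eq:ranking-eff-sd} nor \eqref{eq:ranking-eff-rs} contains $\left|J\right|$, since (as argued in the surrounding text) once all products are discovered the position becomes payoff-irrelevant, so the $\left|J\right|$-dependent recall terms in $d_{SD}(h)$ and $d_{SD}(h+1)$ cancel in the difference. I would spell out this cancellation explicitly as it is the conceptual heart of this part.

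For the third claim — existence of a threshold $h^{*}$ with $r_{SD}(h)\geq r_{RS}(h)$ for all $h>h^{*}$ — the key input is Proposition \ref{prop: diff-RS-SD stopping}, which under $c^{RS}=c_{s}+c_{d}$ gives that a consumer reaches later positions with higher probability in SD than in RS. Concretely, I would first show this is equivalent to the statement $p_{SD}\geq p_{RS}$, i.e. the per-product non-stopping probability is larger in SD; this follows because the event of not stopping before position $j$ has probability $p_{SD}^{\,j-1}$ (resp. $p_{RS}^{\,j-1}$) and stochastic dominance of the stopping position for all $j$ forces $p_{SD}\geq p_{RS}$. Writing $a\equiv\mathbb{P}_{W}(W\geq z^{d})=1-p_{SD}$ and $b\equiv\mathbb{P}_{X,Y}(X+Y\geq z^{RS})=1-p_{RS}$, we have $r_{SD}(h)=a(1-a)\,(1-a)^{h-1}$... more precisely $r_{SD}(h)=p_{SD}^{h-1}a(1-p_{SD})=p_{SD}^{h-1}(1-p_{SD})^2$ is not right — rather $r_{SD}(h)=(1-p_{SD})\,p_{SD}^{h-1}(1-p_{SD})$, so $r_{SD}(h)=(1-p_{SD})^2 p_{SD}^{h-1}$ and likewise $r_{RS}(h)=(1-p_{RS})^2 p_{RS}^{h-1}$. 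The ratio is then
\[
\frac{r_{SD}(h)}{r_{RS}(h)}=\left(\frac{1-p_{SD}}{1-p_{RS}}\right)^{2}\left(\frac{p_{SD}}{p_{RS}}\right)^{h-1}.
\]
Since $p_{SD}\geq p_{RS}$, the base $p_{SD}/p_{RS}\geq 1$, so this ratio is non-decreasing in $h$ and diverges (if $p_{SD}>p_{RS}$) or is constant (if $p_{SD}=p_{RS}$). Hence it eventually exceeds $1$, giving the threshold $h^{*}$; in the degenerate case $p_{SD}=p_{RS}$ the leading constant $\big((1-p_{SD})/(1-p_{RS})\big)^2$ equals $1$ and the inequality holds with equality for all $h$, so $h^{*}=0$ works.

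The main obstacle is making the link between Proposition \ref{prop: diff-RS-SD stopping} (phrased as "ends search at earlier positions on average") and the clean algebraic fact $p_{SD}\geq p_{RS}$ fully rigorous: "on average earlier" a priori only compares expected stopping positions, not the full distribution, whereas I need the per-step probability inequality. I would resolve this by noting that in both problems the stopping position is geometric-type with survival function $p^{j-1}$, and for such one-parameter families first-order stochastic dominance of the stopping position is equivalent to the ordering of the single parameter $p$; since the proof of Proposition \ref{prop: diff-RS-SD stopping} in fact establishes the stopping-position dominance termwise (comparing \eqref{eq:stop_SD} and \eqref{eq:stop_RS} position-by-position), the parameter inequality $p_{SD}\geq p_{RS}$ follows directly. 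A secondary minor point is handling the boundary cases $p_{SD}\in\{0,1\}$ and the equality case $p_{SD}=p_{RS}$ separately so that "decreases in $h$" and the threshold statement are read in the appropriate weak sense, which I would dispatch in a short remark.
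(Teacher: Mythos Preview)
Your proposal is correct and follows essentially the same route as the paper: both rewrite the ranking effects as $r(h)=(1-p)^{2}p^{\,h-1}$ to read off monotonicity in $h$ and independence from $|J|$, and both obtain the threshold from $p_{SD}\geq p_{RS}$ together with the fact that the ratio $r_{SD}(h)/r_{RS}(h)$ is non-decreasing in $h$. The only cosmetic difference is that the paper derives $p_{SD}\geq p_{RS}$ directly from $z^{d}\geq z^{RS}$ and $\min\{Y,\xi\}\leq Y$, whereas you route through Proposition~\ref{prop: diff-RS-SD stopping}; since that proposition's proof is precisely the $z^{d}\geq z^{RS}$ argument, your detour and the accompanying worry about ``on average'' versus per-step dominance are unnecessary but harmless.
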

Given the different stopping decisions, ranking effects in directed
search do not result from consumers having stopped searching before
reaching products revealed at later positions. Instead, they result
from differences in the cost of inspecting products at different positions.
To see this, write the ranking effect in the DS problem as\footnote{Alternatively, 
ranking effects could be modeled in a DS problem by assuming that the consumer
initially has full information on some products. In this case, 
the model effectively has only 2 positions (full and partial information), 
and hence would not be able to explain the decrease in demand across all positions resulting 
from the SD problem. } 
\begin{align}
r_{DS}(h) & =\mathbb{E}_{\tilde{W}_{h}}\biggl[\prod_{k\neq h}\mathbb{P}(\tilde{W}_{k}\leq\tilde{W}_{h})\biggr]-\mathbb{E}_{\tilde{W}_{h+1}}\biggl[\prod_{k\neq h+1}\mathbb{P}(\tilde{W}_{k}\leq\tilde{W}_{h+1})\biggr]\label{eq:ranking-eff-ds}
\end{align}

This expression reveals that the ranking effect results from two sources
in the DS problem. First, by moving a product $j$ one position ahead,
the product previously on position $h$ is now more costly to inspect,
making it more likely that $j$ is bought for any $\tilde{w}_{j}$.
Second, by making it less costly to inspect $j$, the distribution
of $\tilde{w}_{j}$ shifts such that larger values $\tilde{w}_{j}$
become more likely. 

In contrast to RS and SD, the ranking effect in the DS problem depends
on the number of available alternatives. In RS and SD, ranking effects
result from the decreasing probability of a consumer having stopped
searching before reaching a particular position, which does not depend
on how many alternatives there are in total. In DS, however, a consumer
directly compares all alternatives based on partial valuations. Adding
more alternatives thus will affect the demand on each position.

Specifically, Proposition \ref{prop:ranking-ds} shows that ranking
effects in the DS problem will be smaller if there are many alternatives.
The reason is that as the number of alternatives increases, each product
is less likely to be bought and differences in the position-specific
market demand decrease. Note, however, that in cases where the probability
of consumers buying products on the last positions is very small or
exactly zero (e.g. when inspection costs are large), adding more alternatives
will not affect ranking effects in the DS problem. 
\begin{proposition}
\label{prop:ranking-ds}The ranking effect in the DS problem is weakly
decreasing in the number of alternatives. 
\end{proposition}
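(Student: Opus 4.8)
The plan is to run everything through the eventual-purchase characterization of Theorem~\ref{theo:effective_values}: in the DS problem the product shown at position $h$ is bought exactly when its effective value $\tilde{W}_h = X + \min\{\xi_h,Y\}$ is the largest among all alternatives, and since $v_{DS}(c_s,\cdot)$ is strictly increasing the cutoffs $\xi_\ell$ are strictly decreasing in the position $\ell$, so earlier positions carry stochastically larger effective values and the corresponding CDFs are ordered $F_1\le F_2\le\cdots$. Passing from $\left|J\right|$ to $\left|J\right|+1$ alternatives simply adds one product at the worst position, whose effective value has the largest CDF $F_{\mathrm{new}}$; this can only lower both $d_{DS}(h)$ and $d_{DS}(h+1)$, and since $r_{DS}(h)=d_{DS}(h)-d_{DS}(h+1)$ the claim is equivalent to: the new product subtracts at least as much demand from the product at position $h$ as from the one at position $h+1$.

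To make this precise I would write, conditioning on the new product's realization and using independence, the demand lost at position $h$ as $\mathbb{E}\big[\mathbf{1}\{\tilde{W}_h\text{ is the maximum among the original }\left|J\right|\text{ effective values}\}\,(1-F_{\mathrm{new}}(\tilde{W}_h))\big]$, and analogously at $h+1$. Exchangeability of the $\left|J\right|$ products (i.i.d.\ cores) lets me relabel the tracked product and its swap partner, so the comparison reduces to two independent draws $A\sim F_h$ and $B\sim F_{h+1}$ (with $A$ stochastically larger) together with the running maximum $V$ of the remaining products' effective values, whose CDF is $P_V=\prod_{\ell\neq h,h+1}F_\ell$ and is the same in both configurations. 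Integrating out $V$, the difference of the two demand losses becomes the one-dimensional expression
\[
\int \big(1-F_{\mathrm{new}}(x)\big)\,P_V(x)\,\big[\,F_{h+1}(x)\,dF_h(x) - F_h(x)\,dF_{h+1}(x)\,\big],
\]
and the task is to show this is $\ge 0$, where crucially $F_{\mathrm{new}}\ge F_{h+1}\ge F_h$ because the new product sits at the worst position.

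The main obstacle is this last inequality: the signed measure $F_{h+1}\,dF_h-F_h\,dF_{h+1}$ need not be non-negative (pointwise non-negativity is exactly the reversed-hazard-rate order $F_h/F_{h+1}$ nondecreasing, which the $\min\{\xi_\ell,\cdot\}$ structure does not by itself guarantee without, say, log-concavity of $X$). I would instead integrate by parts, using that its cumulative version $\Sigma(x)=F_h(x)F_{h+1}(x)-2\int_{-\infty}^{x}F_h\,dF_{h+1}$ vanishes at $-\infty$ while $\Sigma(\infty)=2\,\mathbb{P}(\tilde W_h\ge \tilde W'_{h+1})-1\ge 0$ is just first-order stochastic dominance (for independent $\tilde W_h\sim F_h$, $\tilde W'_{h+1}\sim F_{h+1}$), together with the fact that the weight $(1-F_{\mathrm{new}})P_V$ equals the survival-type function $\mathbb{P}(\tilde W_{\mathrm{new}}>x,\ \tilde W_\ell\le x\ \forall\ell\neq h,h+1)$, which vanishes at both ends and is built from CDFs ordered by position relative to $F_h$ and $F_{h+1}$; the sign is then pinned down by combining $F_{\mathrm{new}}\ge F_{h+1}\ge F_h$ with this position-monotonicity. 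Finally I would note that if inspection costs are large enough that the product at the added position is a.s.\ dominated — so $1-F_{\mathrm{new}}$ is zero wherever $\tilde W_h$ can be maximal — the displayed integral is identically zero, which yields equality and explains the word ``weakly''. As a sanity check, in the two-alternative case $P_V\equiv 1$ and the inequality collapses to $\int(1-F_{\mathrm{new}})\,d\Sigma\ge0$, which can be verified directly.
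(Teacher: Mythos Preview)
Your reduction is correct and matches the paper: adding one alternative at the worst position lowers $d_{DS}(h)$ by $\int F_{h+1}(x)P_V(x)\bigl(1-F_{\mathrm{new}}(x)\bigr)\,dF_h(x)$ and lowers $d_{DS}(h+1)$ by the analogous integral with the roles of $h$ and $h+1$ swapped, so the claim is exactly that
\[
\int \bigl(1-F_{\mathrm{new}}(x)\bigr)P_V(x)\bigl[F_{h+1}(x)\,dF_h(x)-F_h(x)\,dF_{h+1}(x)\bigr]\ \ge\ 0.
\]
Your observation that pointwise non-negativity of the bracket is the reversed--hazard--rate order, which FOSD alone does not give, is a genuine insight that the paper does not spell out.

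Where your argument breaks down is the integration-by-parts step. After IBP the boundary terms vanish and you are left with $-\int \Sigma\, d\bigl[(1-F_{\mathrm{new}})P_V\bigr]$. You have established $\Sigma(-\infty)=0$ and $\Sigma(\infty)\ge 0$, but nothing about the sign of $\Sigma$ in between; at the same time the weight $(1-F_{\mathrm{new}})P_V$ first increases and then decreases, so its differential changes sign. The combination ``$\Sigma$ of unknown sign'' $\times$ ``differential of changing sign'' does not determine the sign of the integral, and the sentence ``the sign is then pinned down by combining $F_{\mathrm{new}}\ge F_{h+1}\ge F_h$ with this position-monotonicity'' is an assertion, not an argument. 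In particular you never use the specific fact that the new alternative has the \emph{largest} CDF among all positions, which is the one extra ordering you have beyond plain FOSD of $F_h$ over $F_{h+1}$; without exploiting it concretely the IBP route stalls.

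For comparison, the paper's proof is short and does not integrate by parts. It writes $d_{DS}(h)=\mathbb{E}_{\tilde W_h}\!\bigl[F_{h+1}(\tilde W_h)\prod_{k\neq h,h+1}F_k(\tilde W_h)\bigr]$ and argues directly that when $|J|$ grows the product term shrinks, and that this shrinkage is \emph{weighted} by $F_{h+1}(\cdot)$ in $d_{DS}(h)$ but by $F_{h}(\cdot)$ in $d_{DS}(h+1)$; since $F_{h+1}(t)\ge F_h(t)$ for all $t$, the first demand falls by more. If you want to complete your approach, you would need either to show $\Sigma(x)\ge 0$ for all $x$ (which is stronger than what FOSD gives) or to find a different decomposition that uses $F_{\mathrm{new}}\ge F_{h+1}$ explicitly; as written, the key inequality is stated but not proved.
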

A second difference to the RS and SD problems is that the ranking effect
does not necessarily decrease in position. This is possible as there
are two counteracting channels through which position affects the
ranking effect in a DS problem. First, as there is lower demand for
products at later positions, differences between them will be smaller.
Second, if $v_{DS}(c_{s},h)$ is such that $\xi_{h}$ decreases in
$h$ at an increasing rate, the difference in the purchase probability
at $h$ instead of at $h+1$ increases in the position. When the latter
dominates, the ranking effect will first increase in position.

The above comparison highlights that the mechanism producing ranking
effects in the DS problem is distinct from the one in the SD and RS
problems, leading to a different demand structure. In the former, ranking
effects result from differences in inspection costs relative to differences
in partial valuations. Hence, a better partial valuation is a substitute
for moving positions ahead. In contrast, in a SD or RS problem, a
product's large partial valuation does not affect consumers that stop
search before discovering it. Hence, offering a larger partial valuation
does not substitute for being discovered early in a SD or RS problem.\footnote{Note, however, that in an equilibrium setting, offering larger partial
valuations may indirectly serve as a substitute for being discovered
early by raising consumers' expectations and induce them to search
longer.} 

Moreover, the size of ranking effects determines how important
it is for products to be revealed on an early position. As ranking
effects are independent of the number of alternatives in SD and RS,
so are sellers' incentives to have their products revealed early during
search. In contrast, in DS, the demand increase of moving positions
ahead becomes smaller when the number of alternatives increases. Hence,
sellers can have smaller incentives to be revealed on early positions
when there are many, relative to when there are only few alternatives. 

Finally, the above comparison between the number of alternatives and
ranking effects also suggests the existence of an empirical test to
distinguish the search modes in some settings. If data is available
that allows to test whether ranking effects depend on the number of
alternatives, then it will be possible to empirically determine whether
a DS problem, instead of a RS or SD problem provides a framework that
better captures ranking effects in a particular setting. Furthermore,
if data is available that allows to test whether a product's partial
valuation has an effect on whether it is inspected, it will be possible
to distinguish between RS and SD. 

\subsection{Expected Payoff}

If costs are specified such that the total costs of revealing all
product information remain the same, then the three search problems
differ only in the information the consumer can use during search.
A comparison of a consumer's expected payoff based on such a specification
therefore provides some insight into whether it is always to the consumer's
benefit to provide information that helps to direct search towards
some alternatives. 

For total costs of revealing full information about a product on position
$h$ to be the same in the three search problems, inspection costs
in the RS and DS problem are specified as $c^{RS}=c_{s}+c_{d}$ and
$c_{j}^{DS}=c_{s}+h_{j}c_{d}$ respectively. 

The SD problem extends the RS problem by additionally providing the
consumer with the option to not inspect products depending on their
partial valuations. This allows the consumer to save on inspection
costs by not inspecting products with small partial valuations. As
stated in Proposition \ref{prop:sd-rs-diff-exp-payoff}, this increases
the expected payoff which implies that providing product information
across two layers, as done for example by online retailers or search
intermediaries, is beneficial for consumers. 
\begin{proposition}
\label{prop:sd-rs-diff-exp-payoff}If $c^{RS}=c_{s}+c_{d}$, then
a consumer's expected payoff in the SD problem is larger than in the
RS problem. 
\end{proposition}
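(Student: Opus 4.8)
The plan is to show that inside the SD problem the consumer can always reproduce the optimal policy of the RS problem, so that the SD value weakly dominates the RS value, and then to argue that this mimicking policy violates the optimality conditions of Theorem~\ref{theo:optimal_policy} with positive probability, which upgrades the inequality to a strict one. Write $V^{SD}=V(\Omega_{0},A_{0};\pi)$ for the SD value and $V^{RS}$ for the RS value.

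\textbf{Weak inequality via a mimicking policy.} Consider the feasible policy $\pi^{RS}\in\Pi$ that in the SD problem (i) inspects each product immediately after discovering it and (ii) stops and buys the best product discovered so far as soon as its realized utility $x_{j}+y_{j}$ weakly exceeds $z^{RS}$, discovering more otherwise and taking $u_{0}$ once all products are exhausted. Because products are discovered in the same fixed order as in Section~\ref{sec:Comparison}, because $c_{d}+c_{s}=c^{RS}$ so that discovering then inspecting a product reveals exactly $(x_{j},y_{j})$ at total cost $c^{RS}$, and because payoffs are undiscounted so that splitting $c^{RS}$ into two instalments is immaterial, the distribution over trajectories (states, actions, rewards) induced by $\pi^{RS}$ coincides with that induced by the optimal RS policy; hence $V(\Omega_{0},A_{0};\pi^{RS})=V^{RS}$, and feasibility of $\pi^{RS}$ gives $V^{SD}\ge V^{RS}$. (With $|J|=\infty$ this is consistent with $V^{SD}=z^{d}\ge z^{RS}=V^{RS}$, using Proposition~\ref{prop:Expected-consumer-welfare} and the analogous statement for RS.)

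\textbf{Strict inequality.} First I would record $z^{d}\ge z^{RS}$: inserting the elementary bound $\max\{0,a-c_{s}\}\ge a-c_{s}$ into $Q_{d}(c_{d},c_{s},z)=\mathbb{E}_{X}[\max\{0,Q_{s}(X,c_{s},z)\}]-c_{d}$ and evaluating at $z=z^{RS}$ gives $Q_{d}(c_{d},c_{s},z^{RS})\ge\mathbb{E}_{X,Y}[\max\{0,X+Y-z^{RS}\}]-c_{s}-c_{d}=c^{RS}-c_{s}-c_{d}=0=Q_{d}(c_{d},c_{s},z^{d})$, so monotonicity of $Q_{d}$ in $z$ yields $z^{RS}\le z^{d}$, strictly whenever $\mathbb{P}_{X}(X+\xi<z^{d})>0$, i.e.\ whenever a discovered product fails to be worth inspecting with positive probability. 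On that event, immediately after discovering the first product one has $\tilde{z}^{b}(t)=u_{0}<z^{RS}\le z^{d}$ and $\tilde{z}^{s}(t)=z_{1}^{s}=x_{1}+\xi<z^{d}$, so by Theorem~\ref{theo:optimal_policy} the discovery value strictly exceeds the other two reservation values; every optimal SD policy therefore discovers rather than inspects product~$1$, whereas $\pi^{RS}$ inspects it, so $\pi^{RS}$ is not optimal and $V^{SD}>V^{RS}$. Theorem~\ref{theo:effective_values} confirms there is no offsetting gain from inspecting product~$1$ early: since $\tilde{w}_{1}=x_{1}+\min\{\xi,y_{1}\}\le x_{1}+\xi<z^{d}$, product~$1$ is eventually bought only on the branch where every product is discovered, on which its inspection cost is paid under any policy.

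\textbf{Main obstacle.} The delicate point is the strictness argument, which relies on the fact --- implicit in the branching-bandit results underlying Theorem~\ref{theo:optimal_policy} --- that when one reservation value strictly dominates the rest the optimal action is unique, so that $\pi^{RS}$ is genuinely suboptimal; and it requires acknowledging the knife-edge case $\mathbb{P}_{X}(X+\xi<z^{d})=0$, in which $z^{d}=z^{RS}$, the SD optimum coincides with $\pi^{RS}$, and the two values are equal. The strict inequality of the proposition thus holds under mild regularity on $G$ --- in particular whenever $X$ has support unbounded below, as in the Gaussian illustration in the text --- and reduces to an equality only in that degenerate case.
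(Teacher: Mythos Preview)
Your proof is correct and takes essentially the same approach as the paper: the RS optimum is reproduced by a feasible ``always inspect after discovering'' policy in the SD problem, so the SD value weakly dominates. You go beyond the paper by carefully arguing strictness via $z^{d}\ge z^{RS}$ and by flagging the degenerate boundary case $\mathbb{P}_{X}(X+\xi<z^{d})=0$; the paper's own proof simply asserts that the mimicking policy is not optimal and concludes with a ``(weakly) larger'' payoff, leaving the strictness claim unjustified.
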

In contrast to the SD problem, the consumer can use all partial valuations
to direct search in the DS problem. Hence, if inspection costs for
all products are the same in both problems (i.e. $c_{j}^{DS}=c_{s}\forall j$),
a consumer will have a larger expected payoff in the DS problem as he can directly
inspect products with large partial valuations. However,
under the assumption that total costs of revealing full information
are the same in both search problems, a more detailed analysis is
necessary to determine which search problem offers a larger expected
payoff. 

Denote a consumer's expected payoff in a search problem $k$ as $\pi_{k}$
for $k\in\left\{ SD,DS\right\} $. Proposition \ref{prop:Expected-consumer-welfare}
implies that expected payoffs are given by 
\begin{align*}
p_{SD} & =\mathbb{E}_{\hat{\boldsymbol{W}}}\left[\max\{u_{0},\max_{j\in J}\hat{W}_{j}\}\right]\\
p_{DS} & =\mathbb{E}_{\tilde{\boldsymbol{W}}}\left[\max\{u_{0},\max_{j\in J}\tilde{W}_{j}\}\right]
\end{align*}
Furthermore, let $H_{k}(\cdot)$ denote the cumulative density of
the respective maximum value over which the expectation operator integrates
in problem $k$. The difference in expected payoffs of the SD and
the DS problem then is given by
\begin{equation}
p_{SD}-p_{DS}=\int_{z^{d}}^{\infty}H_{DS}(w)-1\text{d}w+\int_{u_{0}}^{z^{d}}H_{DS}(w)-H_{SD}(w)\text{d}w\label{eq:diff-sd-ds-payoff}
\end{equation}
The first expression in \eqref{eq:diff-sd-ds-payoff} is negative,
capturing the advantage of observing partial valuations for all products
and being able to directly inspect a product at a later position.
Given $H_{DS}(w)\leq H_{SD}(w)$ on $w\in\left[u_{0},z^{d}\right]$,
the second expression in \eqref{eq:diff-sd-ds-payoff} is positive,
revealing that directly observing all partial valuations $x_{j}$
does not only yield benefits. 

The latter stems from the difference in how inspection and discovery
costs are taken into consideration in the two dynamic decision processes.
In DS, the total cost of inspecting a product $j$ at a later position
is directly weighed against its benefits given the partial valuations.
In contrast, in SD, the consumer first weighs the discovery costs
against the expected benefits of discovering a product with a larger
partial valuation. Once product $j$ is revealed, the accumulated
cost paid to discover $j$ ($jc_{d}$) is a sunk cost and does not
affect the decision whether to inspect $j$.

Hence, in cases where products on early positions have below-average
partial valuations $x_{j}$, the optimal policy in SD tends to less
often prescribe to inspect these products compared to the direct cost
comparison in DS. In some cases, the former can be more beneficial,
leading to a larger expected payoff.\footnote{For example, this is the case if $X\sim N\left(0,\frac{1}{3}\right)$,
$Y\sim N\left(0,\frac{2}{3}\right)$, $c_{s}=c_{d}=0.05$ and $\left|J\right|=10$.} Directly revealing all 
partial valuations therefore does not always
improve a consumer's benefit, if the consumer continues to incur the
same total costs to reveal the full valuation of any given product.
\footnote{No threshold result as in Proposition \ref{prop:welfare-change} applies
in this case. The first expression in \eqref{eq:diff-sd-ds-payoff}
decreases whereas the second expression increases in the number of
alternatives. }

\subsection{Empirical Implications}

Differences in the underlying search problem also have implications 
for the estimation of structural search models. For example, a structural search model will use 
price differences across all products to inform parameter estimates if it abstracts from limited 
awareness and assumes that consumers observe all prices prior to search. Consumers not inspecting 
low-price products they are unaware of then may be spuriously attributed either to a small price 
sensitivity or large inspection costs. Whereas there are many applications of structural search models and 
an ubiquity of settings where consumers remain unaware of some alternatives, the sensitivity of results 
from structural search models to limited awareness remains unclear. 

I therefore investigate the implications of estimating either a random or 
directed search model in a setting where
consumers instead solve the search and discovery problem. I focus on a scenario 
where preference and cost parameters 
are estimated using data on consumers' consideration sets and purchases; a common case
as consideration sets are observable in click-stream or survey data.
Using a simple specification,\footnote{The empirical literature extends the simple specification for a range
of settings, for example by introducing heterogeneous preferences. The main rationale continues to
hold in such settings.} I first analyze how the different models attribute observed stopping
decisions to structural parameters. A numerical exercise then reveals that this can lead to sizable 
differences in parameter estimates and counterfactual predictions. 

\textbf{Empirical setting}: The data consist of consumers' consideration
sets and purchases,\footnote{The simulated data also contains consumers with an empty consideration set, i.e. those that 
did not search any alternatives. This corresponds to an ideal setting where the whole population of consumers is observed.} 
as well as a number of characteristics for each
of the available products. The utility of purchasing product $j$ is specified as $u_{j}=\boldsymbol{x}_{j}'\beta+y_{j}$,
where \textbf{$\boldsymbol{x}_{j}$ }is a vector containing the observed
product characteristics, $\beta$ is a vector of preference parameters
and $y_{j}$ is an idiosyncratic unobservable taste shock with mean
zero. Depending on the model, consumers are assumed to reveal $\boldsymbol{x}_{j}$
when either discovering $j$ (SD) or inspecting $j$ (RS), or know
$\boldsymbol{x}_{j}$ prior to search (DS). $y_{j}$ is revealed after
inspecting $j$ in all three models. 

Given this setting, Table \ref{tab:purch_cond} shows sufficient or
necessary conditions for the purchase of product $j$ across the three
models, conditional on $j$ being the best product inspected and (ii) 
the observed consideration set not coinciding with the set of all available alternatives. The condition
for the SD problem shows that a purchase of product $j$ can be independent
of realized valuations of products that the consumer is not aware
of in the purchase period ${\bar{t}}$.\footnote{The condition is sufficient but not necessary. A
lternatively, the
consumer can first become aware of all alternatives, before then purchasing
$j$. } 
$j$ only needs to offer ``good enough'' characteristics relative to
the mean and to products the consumer is aware of at the time of purchase. The RS model features the same structure; a consumer will
end search and buy product $j$ if its valuation exceeds the reservation
value. However, $\Xi$ and $\tilde{\xi}$ depend differently on the
underlying costs and distributions of characteristics in $\boldsymbol{x}_{j}$
and $y_{j}$. Through these non-linear functions, a RS model will
attribute observed limited consideration sets differently to preference
and cost parameters. 

In the DS model rationalizing the purchase of $j$ requires that the
valuation of the purchased product is larger than the search values
of all uninspected products. If, for example, $\boldsymbol{x}_{k}>\boldsymbol{x}_{j}$
for an uninspected product, the DS model will require either relatively
small preference parameters, or relatively large inspection costs.
Hence, depending on the characteristics of the uninspected products,
rationalizing limited consideration sets in a DS model will require
a combination of large inspection costs and attenuated preference
parameters, as the estimation procedure will try to fit an inequality
for each uninspected product. 

\begin{table}[htb] \centering
\caption{Purchase conditions} \label{tab:purch_cond}
\begin{threeparttable} \footnotesize
\begin{tabular}{ccccllr} 
\midrule
\emph{SD} && $ (\boldsymbol{x}_{j}-\mu_{\boldsymbol{X}})'\beta + y_j $&$ \geq \Xi $ & \& & $ (\boldsymbol{x}_{j}-\boldsymbol{x}_{k})'\beta + y_j \geq \xi_k \forall k \in A_{\bar{t}} $ & (sufficient) \\ 
\emph{RS} && $ (\boldsymbol{x}_{j} - \mu_{\boldsymbol{X}})' \beta + y_j $ & $ \geq \tilde{\xi} $ &&&(necessary)\\
\emph{DS} && $ (\boldsymbol{x}_{j}-\boldsymbol{x}_{k})'\beta + y_j $& $\geq \xi_k \forall k \notin C_{\bar{t}} $ &&& (necessary)\\
\midrule 
\end{tabular}
\begin{tablenotes}
\item \footnotesize{\emph{Notes:} Sufficient or necessary conditions for purchase of product $j$ conditional on $u_j \geq u_k \forall k\in C_{\bar{t}}$ and 
$J \nsubseteq C_{\bar{t}}$. $\bar{t}$ denotes the purchase period.}
\end{tablenotes}
\end{threeparttable}
\end{table}

To investigate the extent to which this influences results from structural search models, 
I perform simulations for this
setting. First, I simulate consumers solving a SD problem with the
given utility specification and under the assumption that consumers initially
aware of one product. Using these data, I then estimate structural
parameters in search models based on either the RS and DS problem. 
For the DS problem, two specifications are estimated. DS1 is a baseline
where inspection costs are parameterized as $c_j^{DS1} = c_s$. DS2 introduces an additional
cost parameter such that inspection costs increase in position $h_j$ with 
$c_j^{DS2} = c_s + c_d h_j$. This specification
additionally uses data on the order in which products are discovered by consumers.
For all three models, the estimation fits inequalities based on the conditions
of Table \ref{tab:purch_cond}, as well as other inequalities coming
from continuation and purchase decisions. Details on the maximum likelihood estimation are provided in the appendix.
As comparison, I also present estimates of a full information (FI)
model. 

Results of such a simulation are presented in Table \ref{tab:coef_estim}. 
Parameters for this particular simulation are shown in the same table and were chosen to reflect a setting with relatively
few searches, as is often the case in click-stream data.\footnote{For example, 
\cite{Ursu2018} reports an average of 1.12 clicks per consumer and two thirds of consumers ending up booking a hotel.\cite{Chen2016} 
reports an average of 2.3 clicks per consumer using data only
on consumers that ended up booking a hotel.} 
To account for the fact that assuming the distribution of $y_{j}$
is a normalization in the empirical context, estimates are presented
as a ratio to the coefficient of the second characteristic. Given its 
negative coefficient, this characteristic will be interpreted as a product's price. 

\begin{table}[htb] \centering 
\caption{Estimated Coefficients and Search Set Size} \label{tab:coef_estim}
\begin{threeparttable}
\begin{tabular}{ccccccccc} 
\midrule
&& \#Searches& Purchases (\%)& $ \beta_2 $& $ \beta_1 / | \beta_2 |$& $ \beta_3 / | \beta_2 | $ & $ c_s / | \beta_2 | $& $ c_d / | \beta_2 |$\\
\midrule
\emph{SD}&&1.35&63.70&-1.00&1.00&3.50&0.03&0.06\\
\emph{DS1}&&1.18&65.48&-0.19&1.01&2.58&1.79& \\
\emph{DS2}&&1.18&65.22&-0.19&1.01&2.72&1.58&0.01\\
\emph{RS}&&1.00&72.85&-0.82&1.28&5.21&0.05& \\
\emph{FI}&& &60.54&-0.62&1.00&5.01& & \\
\midrule 
\end{tabular}
\begin{tablenotes}
\item \footnotesize{\emph{Notes:} Estimation from a simulated dataset with 2,000 consumers and 30 products per consumer. 
Characteristics are independent draws (across consumers and products) from 
$x_{1j} \sim N(2,3.0)$, $x_{2j} \sim N(3.5,1.0)$ and $y_j \sim N(0,1)$. The third characteristic is an outside dummy.
The data is generated based on the \emph{SD} model with $n_d=|A_0|=1$, with parameters in the 
estimated models denoted by $c^{RS}=c_s$, $c_j^{DS1} = c_s$ and $c_J^{DS2} = c_s + c_d h_j$.
The first two columns are based either 
on the generated data (SD) or estimated by generating 5,000 search paths for each consumer.  }
\end{tablenotes} 
\end{threeparttable}
\end{table}

The results show that both DS specifications are able to match the number of purchases, as 
well as the relative preference coefficients relatively well, despite the price coefficient being
strongly attenuated and the number of searches being underestimated. However, inspection costs are strongly accentuated in both DS models. 
This offers a novel explanation for the large estimates 
of baseline costs estimated with some DS models \citep[e.g.][]{Chen2016,Ursu2018}: 
By not accounting for consumers not being aware of some alternatives, a DS model 
spuriously attributes consumers not inspecting products they are not aware of 
to large inspection costs.\footnote{Other explanations for large search search 
cost estimates are incomplete search histories \citep[e.g.][]{Ursu2018} and 
heterogeneous prior beliefs \citep[][]{Jindal2020}. } This continues to occur in 
the DS2 model that could rationalize ranking effects produced by the SD model through
inspection costs that increase in the position at which a product is discovered. However, 
the results show that instead the DS2 model estimates only a small increase in
inspection costs across positions and also strongly overestimates baseline inspection costs. 

The RS model underestimates inspection costs; they are less than 
the combined inspection and discovery costs. Moreover, the ratio of preference parameters deviates from the
true values. The large differences in the estimated coefficient for
the outside option result from how the different models interpret
consumers not inspecting or not buying. Whereas in the DS problem
this occurs from large inspection costs, the RS
model attributes the lack of search mainly to a good outside option. 

Differences in the structural search models also influence results from counterfactual simulations. Table \ref{tab:counterfactuals}
shows the results of two different counterfactuals for each of the models. For each 
counterfactual scenario, parameters from Table \ref{tab:coef_estim} are used for each model to simulate 
consumer surplus ($CS$) and the demand for the outside option ($D_0$), as well as for products shown on the first ($D_1$) and fifth ($D_5$) 
position. Throughout, results are expressed in percentage deviations from the baseline scenario. 

The first counterfactual consists of removing all search costs, which can be used to 
gauge the effects of removing the search friction. For both DS models, accentuated baseline inspection costs
lead to a larger increase in consumer surplus compared to the SD model with which 
the data was generated. Moreover, removing costs in the DS models makes consumers more likely to purchase any 
product, independent of their position. In contrast, demand in the SD and RS model decreases for both products listed.
This stems from the inherent ranking effects where products on early positions are bought 
more frequently as consumers stop search early. In this case, removing all costs moves demand to later positions.

The second counterfactual scenario analyzes the effects of a one percent price decrease of products discovered on the fifth position. 
The change in demand for the first product highlights an important difference in the substitution pattern. 
In the data-generating SD model, the demand for products on the first position decreases by little, as the price decrease of a product 
on a later position does not affect choices of consumers that stopped search before becoming aware of the product. 
In contrast, in a DS (or FI) model, consumers who were previously buying products on the first 
few positions observe the price decrease and can directly substitute to 
the fifth product. This translates into more substitution from the first few positions as 
a response to a price decrease of a product on a later position. The predicted
changes in the demand for the fifth product further highlight that the different models 
lead to different predictions for consumers' responses to price changes; whereas the DS1 
and RS models underestimate, the DS2 model overestimates the increase in demand in response to 
the price change. 
 
\begin{table}[htb] \centering 
\caption{Counterfactuals} \label{tab:counterfactuals}
\begin{threeparttable}
\begin{tabular}{cccccccc} 
\midrule
&& \multicolumn{3}{c}{Remove costs} & \multicolumn{3}{c}{$\Delta p_5 = -1\%$} \\ 
\cline{3-5} \cline{6-8}
&& $\Delta CS $& $\Delta D_1$& $\Delta D_5 $& $\Delta CS $& $\Delta D_1$& $\Delta D_5 $\\
\midrule
\emph{SD}&&28.60&-37.35&-2.32&0.02&-0.01&1.81\\
\emph{DS1}&&85.06&38.04&43.11&0.01&-0.04&1.72\\
\emph{DS2}&&81.38&15.53&29.19&0.01&-0.03&2.75\\
\emph{RS}&&18.73&-25.36&-11.78&0.01&-0.02&1.49\\
\emph{FI}&&0.00&0.00&0.00&0.01&-0.05&1.91\\
\midrule 
\end{tabular}
\begin{tablenotes}
\item \footnotesize{\emph{Notes:} Results from simulated counterfactuals based on Table \ref{tab:coef_estim}, where (i) all costs are set to zero and (ii) the price for the 5th 
product is reduced by 1 \% for each consumer. All changes are expressed in \% relative to the baseline. Demand and consumer
surplus are calculated by averaging across 5,000 simulated search paths for each consumer. }
\end{tablenotes} 
\end{threeparttable}
\end{table}

Though results from only a single simulation are presented, I obtained
qualitatively similar results across a wide range of parameter values.\footnote{These results can be replicated with the supplementary material. }
Throughout, DS models overestimate inspection costs and all estimated models can lead
to sizable differences in parameters and results from counterfactual predictions. 
Nonetheless, the SD problem will be more similar to the DS problem if consumers
are aware of many alternatives when they end search (e.g. due to small discovery costs). 
Similarly, if consumers inspect most products they discover independent of their characteristics,
the SD problem will be more similar to the RS problem. When estimating
search models, researchers should therefore carefully consider the
degree to which limited awareness plays a role in the specific setting
they are studying and which model is appropriate.

To this end, the results of Propositions \ref{prop: ranking-sd-rs} and \ref{prop:ranking-ds} 
can be used to empirically differentiate the search modes in some settings.
If data are available that allow to test whether ranking effects depend on the number of
alternatives, it will be possible to empirically determine whether
a DS problem, instead of a RS or SD problem provides a framework that
better captures ranking effects. Furthermore,
if data are available that allow to test whether a product's partial
valuation has an effect on whether it is inspected, it will be possible
to distinguish between RS and SD.

\section{Conclusion \label{sec:Discussion}}

This paper introduces a search problem that generalizes existing frameworks
to settings where consumers have limited awareness and first need
to become aware of alternatives before being able to search among
them. The paper's contribution is to provide a tractable solution
for optimal search decisions and expected outcomes for this search
and discovery problem. Moreover, a comparison with classical random
and directed search highlights how limited awareness and the availability
of partial product information determine search outcomes and expected
payoffs. 

A promising avenue for future research is to build on this paper's
results and study limited awareness in an equilibrium setting. This
could yield novel insights into how consumers' limited information
shapes price competition. Furthermore, the search and discovery problem
can serve as a framework to analyze how firms compete for consumers'
awareness. For example, informative advertising can make it more likely
that consumers are aware of a seller's products from the outset. Ranking
effects derived in this paper already suggest that it will be in a
seller's best interest to make consumers aware of his product, but
further research is needed to determine equilibrium dynamics. 

Another avenue for future research entails incorporating the search
and discovery problem into a structural model that is estimated with
click-stream data. The available actions in the search and discovery
problem closely match how consumers scroll through product lists (discovery)
and click on products (inspection) on websites of search intermediaries
and online retailers. By accounting for the fact that consumers initially
do not observe entire list pages, such a model could improve the estimation
of consumers' preferences, inspection costs and ranking effects relative
to models that abstract from consumers not observing the whole product
list. 

\newpage
\singlespacing 

\bibliographystyle{chicago}
\bibliography{library_fixed}

\newpage 

\appendix 
\section*{Appendix}
\onehalfspacing
\rhead{}
\section{Proofs of main Theorems and Propositions}\label{sec:Proofs}

\subsection{Theorem \ref{theo:optimal_policy}\label{subsec:Proof-Theorem}}
Let $\Theta(\Omega_{t},A_{t},z)$ denote the value function of an
alternative decision problem, where in addition to the available actions
in $A_{t}$, there exists a hypothetical outside option offering value
$z$. As the SD problem satisfies that taking an action does not change
the state of another available action and has the same branching structure,
Theorem 1 of \citet{Keller2003} states that a Gittins index policy
is optimal and that the following holds:\footnote{Compared to the baseline branching framework discussed by \citet{Keller2003},
the SD problem does not have discounting, and purchasing a product
is a ``terminal'' action. Note also that whereas not explicitly
stated by the authors, their framework accommodates the case where it is not known ex
ante to how many ``children'' an available action branches into.
This will be the case in the SD problem if the consumer does not
know the number of products he will discover. } 
\begin{equation}
\Theta(\Omega_{t},A_{t},z)=b-\int_{z}^{b}\Pi_{a\in A_{t}}\frac{\partial\Theta\left(\Omega_{t},\left\{ a\right\} ,w\right)}{\partial w}\mathrm{d}w\label{eq:conjecture}
\end{equation}
where $b$ is some finite upper bound of the expected immediate rewards.\footnote{Expected immediate rewards are in $ \left[- \max\{c_s,c_d\},\mathbb{E}[X+Y] \right] $, hence 
assuming finite mean of $X$ and $Y$ guarantees that they have a finite upper bound. } The Gittins index of action $d$ (discovering products) is defined
by $g_{t}^{d}=\mathbb{E}_{\boldsymbol{X}}\left[\Theta(\Omega_{t+1},A_{t+1}\backslash A_{t},g_{t}^{d})\right]$.
Suppose the consumer knows the total number of alternatives $\left|J\right|$,
and consider a period $t$ in which more discoveries will still be
available in $t+1$ with certainty. In this case we have 
\begin{align}
g_{t}^{d} & =\mathbb{E}_{\boldsymbol{X}}\left[\Theta(\Omega_{t+1},\{d,s1,\dots,sn_{d}\},g_{t}^{d})\right]-c_{d}\label{eq:def_gittins}\\
 & =\mathbb{E}_{\boldsymbol{X}}\left[b-\int_{g_{t}^{d}}^{b}\frac{\partial\Theta\left(\Omega_{t+1},\left\{ d\right\} ,w\right)}{\partial w}\prod_{k=1}^{n_{d}}\frac{\partial\Theta\left(\Omega_{t+1},\left\{ s_{k}\right\} ,w\right)}{\partial w}\mathrm{d}w\right]-c_{d}\nonumber 
\end{align}
where $s_{k}\in S_{t+1}\backslash S_{t}\forall k$. $\Theta(\Omega_{t},\left\{ s_{k}\right\} ,z)$
is the value of a search problem with an outside option offering $z$
and the option of inspecting product $k$ (with known partial valuation
$x_{k}$). $\Theta\left(\Omega_{t+1},\left\{ d\right\} ,w\right)$
is the value of a search problem with an outside option offering $z$,
and the option to discover more products. Finally, $\mathbb{E}_{\boldsymbol{X}}\left[\cdot\right]$
is the expectation operator integrating over the beliefs over the
$n_{d}$ random variables in $\boldsymbol{X}=\left[X,\dots,X\right]$,
which does not depend on time.

Optimality of the Gittins index policy then implies that when $z\geq g_{t+1}^{d}$,
the consumer will choose the outside option in $t+1$. Hence $\Theta\left(\Omega_{t},\left\{ d\right\} ,w\right)=w\forall w\geq g_{t+1}^{d}$
which yields $\frac{\partial\Theta\left(\Omega_{t},\left\{ d\right\} ,w\right)}{\partial w}=1\forall w\geq g_{t+1}^{d}$.
This implies that for $g_{t}^{d}\geq g_{t+1}^{d}$, $g_{t}^{d}$ does
not depend on whether more products can be discovered in the future,
 and the optimal policy is independent of the beliefs over the number
of available alternatives. As a result, as long as the Gittins index
is weakly decreasing during search, i.e. $g_{t}^{d}\geq g_{t+1}^{d}\forall t$,
it is independent of the availability of future discoveries and beliefs $q$. 

It remains to show that $g_{t}^{d}\geq g_{t+1}^{d}\forall t$ holds
in the proposed search problem. When $\left|J\right|=\infty$, $g_{t}^{d}=g_{t+1}^{d}$
is immediately given by the fact that in both periods infinitely many products remain
to be discovered and that the consumer has stationary beliefs (i.e. $q$ is constant and 
valuations are independent and identically distributed). For $\left|J\right|<\infty$,
backwards induction yields that this condition holds: Suppose that
in period $t+1$, no discovery action is available as all products have been discovered. In this case,
the Gittins index is given by
\begin{equation}
g_{t+1}^{d}=\mathbb{E}_{\boldsymbol{X}}\left[b-\int_{g_{t+1}^{d}}^{b}\prod_{k=1}^{n_{d}}\frac{\partial\Theta\left(\Omega_{t+1},\left\{ s_{k}\right\} ,w\right)}{\partial w}\mathrm{d}w\right]-c_{d}
\end{equation}
As $0\leq\frac{\partial\Theta\left(\Omega_{t},\left\{ d\right\} ,w\right)}{\partial w}\leq1$
and $\frac{\partial\Theta\left(\Omega_{t},\left\{ s_{k}\right\} ,w\right)}{\partial w}\geq0$,
it holds that 
{\footnotesize 
\begin{multline} 
\mathbb{E}_{\boldsymbol{X}}\left[b-\int_{g_{t+1}^{d}}^{b}\prod_{k=1}^{n_{d}}\frac{\partial\Theta\left(\Omega_{t+1},\left\{ s_{k}\right\} ,w\right)}{\partial w}\mathrm{d}w\right]\leq q \mathbb{E}_{\boldsymbol{X}}\left[b-\int_{g_{t}^{d}}^{b}\prod_{k=1}^{n_{d}}\frac{\partial\Theta\left(\Omega_{t+1},\left\{ s_{k}\right\} ,w\right)}{\partial w}\mathrm{d}w\right] + \\
(1-q)\mathbb{E}_{\boldsymbol{X}}\left[b-\int_{g_{t}^{d}}^{b}\frac{\partial\Theta\left(\Omega_{t},\left\{ d\right\} ,w\right)}{\partial w}\prod_{k=1}^{n_{d}}\frac{\partial\Theta\left(\Omega_{t},\left\{ s_{k}\right\} ,w\right)}{\partial w}\mathrm{d}w\right]
\end{multline}
}
which implies $g_{t}\geq g_{t+1}$. 

Finally, $\Theta(\Omega_{t+1},\{d,s1,\dots,sn_{d}\},g_{t}^{d})=V\left(\left\langle \bar{\Omega},\omega(x,z)\right\rangle ,\left\{ b0,s,\dots,sn_{d}\right\} ;\tilde{\pi})\right)$
in \eqref{eq:exp_net_ben_disc} implies $z^{d}=g_{t}^{d}$. Similarly,
the definition of the inspection and purchase values (in \eqref{eq:res_search}
and \eqref{eq:res_exp}) are equivalent to the definition of Gittins
index values for these actions and it follows that the reservation
value policy is the Gittins index policy.

\subsection{Theorem \ref{theo:effective_values} } \label{subsec:proof-theorem-effvalues}
\begin{proof}
As a product always is bought, it suffices to show that the optimal
policy never prescribes to buy product $j$ if there exists another
product $k$ with $w_{k}>w_{j}$. To account for the case where $C_{0}\neq \emptyset$,
define $z_{k}^{s}=\infty\forall k\in C_{0}$ which implies $\tilde{w}_{k}\equiv\min\left\{ z_{k}^{s},z_{k}^{b}\right\} =z_{k}^{b}\forall k\in C_{0}$.
First, consider the case where $k$ is revealed before $j$ ($h_{0}\leq h_{k}<h_{j}$).
In this case, $w_{k}>w_{j}$ if and only if either (i) $\tilde{w}_{k}\geq z^{d}$
or (ii) $z^{d}>\tilde{w}_{k}>\tilde{w}_{j}$. In the former, the optimal
policy prescribes to not discover products beyond $k$, hence not
to buy product $j$. This follows as $z_{k}^{s}\geq z^{d}$ and $z_{k}^{b}\geq z^{d}$
imply that the optimal policy prescribes that search ends with buying
$k$ before discovering $j$. In the latter, $w_{j}=\tilde{w}_{j}<w_{k}=\tilde{w}_{k}$
, and the optimal policy prescribes to continue discovering such that
both products are in the awareness set. The eventual purchase theorem
of \citet{Choi2016a} then applies, and hence the optimal policy does
not prescribe to buy product $j$. Second, consider the case where
$k$ is discovered after $j$ $\left(h_{k}>h_{j}\right)$. In this
case, note that $w_{j}>w_{k}$ if $\tilde{w}_{j}\geq z^{d}$. Hence,
$w_{k}>w_{j}$ if and only if $z^{d}>\tilde{w}_{k}>\tilde{w}_{j}$,
which is the same as (ii) above. Finally, consider the case where
$k$ is discovered at the same time as $j$ ($h_{k}=h_{j}$). Then
$w_{k}>w_{j}$ if and only if $\tilde{w}_{k}>\tilde{w}_{j}$, which
follows from the construction of the effective values. This again
is the same as (ii) above and hence the optimal policy does not prescribe
to buy $j$. 
\end{proof}

\subsection{Proposition \ref{prop:Expected-consumer-welfare} \label{subsec:proof-Expected-consumer-welfare}}
\begin{proof}
The proof follows a similar structure as the proof of Corollary 1
in \citet{Choi2016a}. To simplify exposition, the following additional
notation is used: Let $\tilde{w}_{j}\equiv x_{j}+\min\{y_{j},\xi_{j}\}$
as in Theorem \ref{theo:optimal_policy}, and $\hat{w}_{j}$ equal
to the effective value from Theorem \ref{theo:effective_values},
with the adjustment that $f(h_{j})=\varepsilon=0$. Furthermore, let
$\bar{w}_{r}\equiv\max_{k\in J_{0:r-1}}\hat{w}_{k}\forall r\geq1$,
$\tilde{\bar{w}}_{r}\equiv\max_{k\in J_{r}}\tilde{w}_{k}$ and $\tilde{\bar{w}}_{r,j}\equiv\max_{k\in J_{r}\backslash j}\tilde{w}_{k}$
where $J_{a:b}$ denotes the set of products discovered on position
$r\in\{a,\dots,b\}$, and $J_{r}$ is short-hand for $J_{r:r}$. Finally,
let $1(\cdot)$ denote the indicator function and $\bar{h}$ the maximum
position. 

The payoff of a consumer given realizations $x_{j}$ and $y_{j}$
for all $j$ is given by {\small{}
\begin{multline}
\sum_{r=1}^{\bar{h}}1(\bar{w}_{r}<z^{d})\left[\sum_{j\in J_{r}}1(\tilde{w}_{j}\geq\max\left\{ z^{d},\tilde{\bar{w}}_{r,j}\right\} )(x_{j}+y_{j})-1(x_{j}+\xi_{j}\geq\max\left\{ z^{d},\tilde{\bar{w}}_{r,j}\right\} )c_{s}\right]\\
+1(\bar{w}_{0}\geq z^{d})\nu_{0}-\sum_{r=1}^{\bar{h}}1(\bar{w}_{r}<z^{d})c_{d}+1(w_{\bar{h}}<z^{d})\nu
\end{multline} } 
which follows from the optimal policy and Theorem \ref{theo:effective_values}:
(i) If $\bar{w}_{0}\geq z^{d}$, the stopping rule implies that the
consumer does not discover any products beyond the initial awareness
set. Conditional on not discovering any additional products, the payoff
then is equal to $v_{0}$, which denotes the payoff of a directed
search problem over products $k\in S_{0}$ and an outside option offering
$\bar{u}_{0}=\max_{k\in C_{0}}u_{k}$. (ii) If $\bar{w}_{r}<z^{d}$,
the continuation rule implies that the consumer continues beyond position
$r-1$, i.e. discovers products on position $r$ and pays discovery
costs $c_{d}$. (iii) Conditional on discovering $j$, when $\tilde{w}_{j}\geq\max\left\{ z^{d},\tilde{\bar{w}}_{r,j}\right\} $,
the stopping and inspection rules imply that the consumer buys $j$,
gets utility $x_{j}+y_{j}$ and does not continue beyond position
$r$. (iv) Conditional on discovering $j$, when $x_{j}+\xi_{j}\geq\max\left\{ z^{d},\tilde{\bar{w}}_{r,j}\right\} $,
the inspection rule implies that the consumer inspects $j$ and incurs
costs $c_{s}$. (v) If $w_{\bar{h}}<z^{d}$, the continuation rule
implies that the consumer discovers all products, whereas the inspection
rule implies that he inspects all products $\left\{ j|x_{j}+\xi_{j}\geq z^{d}\right\} $.
Conditional having discovered all products, the consumer therefore
has the payoff of a directed search problem over products $\left\{ j|x_{j}+\xi_{j}<z^{d}\right\} $
with outside option $\tilde{u}_{0}=\max\{u_{0},\max_{k\in\left\{ j|x_{j}+\xi_{j}\geq z^{d},x_{j}+y_{j}\leq\xi_{j}\right\} }x_{k}+y_{k}\}$.
This is denoted by $\nu$.

Let $\mathbb{E}\left[\cdot\right]$ integrate over the distribution
of $X_{j},Y_{j}\forall j\in J$, and substitute inspection and discovery
costs by $c_{s}=\mathbb{E}\left[1(Y_{j}\geq\xi_{j})(Y_{j}+x_{j}-z_{j}^{s})\right]=\forall j$
(note that $z_{j}^{s}=x_{j}+\xi_{j}$) and $c_{d}=\mathbb{E}\left[1(\tilde{\bar{W}}_{r}\geq z^{d})(\tilde{\bar{W}}_{r}-z^{d})\right]$
(see Appendix \ref{sec:Details-on-the}). The expected payoff then
is given by:
\begin{align*}
& \hspace{1em}-1(X_{j}+\xi_{j}\geq\max\{z^{d},\tilde{\bar{W}}_{r,j}\})1(Y_{j}\geq\xi_{j})(Y_{j}-\xi_{j})\Biggr)\Biggr]\\
& \hspace{1em}-\sum_{r=1}^{\bar{h}}\mathbb{E}\left[1(\bar{W}_{r}<z^{d})1(\tilde{\bar{W}}_{r}\geq z^{d})(\tilde{\bar{W}}_{r}-z^{d})\right]+\mathbb{E}\left[1(\bar{W}_{0}\geq z^{d})\nu_{0}+1(\bar{W}_{\bar{h}}<z^{d})\nu\right]\\
= & \sum_{r=1}^{\bar{h}}\mathbb{E}\left[1(\bar{W}_{r}<z^{d})\Biggl(\sum_{j\in J_{r}}1(\tilde{W}_{j}\geq\max\{z^{d},\tilde{\bar{W}}_{r,j}\})(X_{j}+\min\{\xi_{j},Y_j\})\Biggr)\right]\\
& \hspace{1em}-\sum_{r=1}^{\bar{h}}\mathbb{E}\left[1(\bar{W}_{r}<z^{d})1(\tilde{\bar{W}}_{r}\geq z^{d})(\tilde{\bar{W}}_{r}-z^{d})\right]+\mathbb{E}\left[1(\bar{W}_{0}\geq z^{d})\nu_{0}+1(\bar{W}_{\bar{h}}<z^{d})\nu\right]\\
= & \sum_{r=1}^{\bar{h}}\mathbb{E}\left[1(\bar{W}_{r}<z^{d})1(\tilde{\bar{W}}_{r}\geq z^{d})\tilde{\bar{W}}_{r}\right]\\
& \hspace{1em}-\sum_{r=1}^{\bar{h}}\mathbb{E}\left[1(\bar{W}_{r}<z^{d})1(\tilde{\bar{W}}_{r}\geq z^{d})(\tilde{\bar{W}}_{r}-z^{d})\right]+\mathbb{E}\left[1(\bar{W}_{0}\geq z^{d})\nu_{0}+1(\bar{W}_{\bar{h}}<z^{d})\nu\right]\\
= & \sum_{r=1}^{\bar{h}}\mathbb{E}\left[1(\bar{W}_{r}<z^{d})1(\tilde{\bar{W}}_{r}\geq z^{d})z^{d}\right]\\
& \hspace{1em}+\mathbb{E}\left[1(\bar{W}_{0}\geq z^{d})\max\left\{ \bar{u}_{0},\max_{k\in S_{0}}\tilde{W}_{k}\right\} +1(\bar{W}<z^{d})\max\{\tilde{u}_{0},\max_{k\in\left\{ k|x_{k}+\xi_{k}<z^{d}\right\} }\tilde{W}_{k}\}\right]\\
& =\mathbb{E}\left[\max_{j\in J}\hat{W}_{j}\right]
\end{align*}
The second-to-last step substitutes $\nu_{0}=\mathbb{E}\left[\max\left\{ \bar{u}_{0},\max_{k\in S_{0}}\tilde{W}_{k}\right\} \right]$
and similarly for $\nu$, which directly follows from Corollary 1
in \citet{Choi2016a}. The last step combines the expressions of the
three mutually exclusive cases using the definition of $\hat{w}_{j}$.

To prove the second claim, note that the definition of $z^{d}$ requires
that $\mathbb{P}(\tilde{W}_{j}>z^{d})>0$, as otherwise $Q_{d}(c_{d},c_{s},z^{d})>0$.
Hence with $\left|J\right|=\infty$, $\mathbb{P}\left(\max_{j\in J}\tilde{W}_{j}<z^{d}\right)=0$
such that $\mathbb{E}\left[\max_{j\in J}\hat{W}_{j}\right]=z^{d}$. 
\end{proof}

\subsection{Proposition \ref{prop:welfare-change}} \label{subsec:proof-welfare-change} 
\begin{proof}
Consider a situation where we decrease costs $c_{s}$ and $c_{d}$
to either $c_{s}'=c_{s}-\Delta$ or $c_{d}'=c_{d}-\Delta$, while
keeping the other cost constant. Let $H_{1}(\cdot)$ and $H_{2}(\cdot)$
denote the cumulative density of $\bar{W}\equiv\max\{\bar{w}_{0},\max_{j\in J\backslash C_{0}\cup S_{0}}\hat{W}_{j}\}$
in the former and the latter case respectively, where $\bar{w}_{0}\equiv\max\left\{ \max_{k\in C_{0}}u_{k},\max_{k\in S_{0}}\tilde{w}_{k}\right\} $
is the value of the alternatives in the initial consideration and
awareness sets. Similarly, let $z_{1}^{d}$ and $z_{2}^{d}$ denote
the associated discovery values. Given $n_{d}=1$, we have $\frac{\partial Q_{d}(c_{d},c_{s},z)}{\partial c_{d}}<\frac{\partial Q_{d}(c_{d},c_{s},z)}{\partial c_{s}}$;
hence $\left|\frac{\partial z^{d}}{\partial c_{d}}\right|>\left|\frac{\partial z^{d}}{\partial c_{s}}\right|$
and $z_{2}^{d}>z_{1}^{d}$. Moreover, note that the definition of
the adjusted effective value $\hat{w}_{j}$ implies $H_{i}(w)=1\forall w\geq z_{i}^{d}$
and $H_{i}(w)=0\forall w\leq\bar{w}_{0}$. 

Conditional on $\bar{w}_{0}<z_{1}^{d}$, the difference in a consumer's
expected payoff across the two changes therefore can be written as
\begin{align}
\int_{z_{1}^{d}}^{z_{2}^{d}}1-H_{2}(w)\text{d}w- & \int_{\bar{w}_{0}}^{z_{1}^{d}}H_{2}(w)-H_{1}(w)\text{d}w
\end{align}
Whereas the first part is strictly positive, the second part is negative.
The latter follows as for $w\in[\bar{w}_{0},z_{1}^{d}]$, $\bar{W}=\max_{j\in J\backslash C_{0}\cup S_{0}}X_{j}+\min\{Y_{j},\xi\}$
and $\frac{\partial\xi}{\partial c_{s}}<0$ such that $H_{1}(w)\leq H_{2}(w)$.
As valuations are independent across products, we have $H_{k}(w)=\mathbb{P}_{X,Y}\left(X+\min\left\{ Y,\xi_{k}\right\} \leq w\right)^{\left|J\right|}$;
hence, as $\left|J\right|$ increases, $H_{2}(w)-H_{1}(w)$ and $H_{2}(w)$
decrease for $w\in[\bar{w}_{0},z_{2}^{d}]$.\footnote{Note that if $\mathbb{P}_{X,Y}\left(X+\min\left\{ Y,\xi_{k}\right\} \leq w\right)$
is large, then $H_{1}(w)-H_{2}(w)$ will first increase in $\left|J\right|$,
before starting to decrease. } Consequently, for all $\Delta>0$ there exists some threshold $n^{*}$
for $\left|J\right|$ such that the difference in the expected payoff
conditional on $\bar{w}_{0}<z_{1}^{d}$ is positive, i.e.
\begin{equation}
\int_{z_{1}^{d}}^{z_{2}^{d}}1-H_{2}(w)\text{d}w>\int_{\bar{w}_{0}}^{z_{1}^{d}}H_{2}(w)-H_{1}(w)\text{d}w\label{eq:threshold_appendix}
\end{equation}

Conditional on $\bar{w}_{0}\geq z_{1}^{d}$, having $z_{2}^{d}>z_{1}^{d}$
immediately implies that the expected payoff increases by at least
as much when decreasing discovery costs. Note also that when $z_{2}^{d}<\bar{w}_{0}$,
neither change affects the expected payoff. Finally, integrating over
the realizations $y_{k}$ for $k\in S_{0}$ that determine $\bar{w}_{0}$
yields the unconditional expected payoff as a combination of these
cases, which implies the first result. 

Increasing the value of the alternatives in the initial consideration
and awareness set then makes larger values of $\bar{w}_{0}$ more
likely. This implies the second result, as it makes both the case
$\bar{w_{0}}\geq z_{1}^{d}$ more likely, as well as decrease the
right-hand-side of \eqref{eq:threshold_appendix}. 
\end{proof}

\subsection{Proposition \ref{prop: diff-RS-SD stopping}} \label{subsec:proof-diff-rs-sd-stopping}
\begin{proof}
At $c_{s}=0$, we have $z^{d}=z^{RS}$. $\left|\frac{\partial z^{RS}}{\partial c_{s}}\right|\geq\left|\frac{\partial z^{d}}{\partial c_{s}}\right|$
then implies $z^{d}\geq z^{RS}$. Using this in \eqref{eq:stop_SD}
and \eqref{eq:stop_RS} immediately yields the result. 
\end{proof}

\subsection{Proposition \ref{prop: ranking-sd-rs}\label{subsec:proof-ranking-sd-rs}}
\begin{proof}
The first two statements immediately follow from \eqref{eq:ranking-eff-sd}
and \eqref{eq:ranking-eff-rs}. To see the latter, rewrite \eqref{eq:ranking-eff-sd}
as $\mathbb{P}_{W}\left(W<z^{d}\right)^{h-1}\mathbb{P}_{W}\left(W\geq z^{d}\right)^{2}$,
and \eqref{eq:ranking-eff-rs} in a similar way. $c^{RS}=c_{s}+c_{d}$
then implies $z^{d}\geq z^{RS}$. Hence, $\mathbb{P}_{W}\left(W<z^{d}\right)=\mathbb{P}_{X,Y}(X+\min\left\{ Y,\xi\right\} <z^{d})\geq\mathbb{P}_{X,Y}\left(X+Y<z^{RS}\right)$
which directly implies the existence of the threshold.
\end{proof}

\subsection{Proposition \ref{prop:ranking-ds}\label{subsec:ranking-ds}}
\begin{proof}
Write the first expression in \eqref{eq:ranking-eff-ds} (demand at
position $h$) as $\mathbb{E}_{\tilde{W}_{h}}\left[\mathbb{P}\left(\tilde{W}_{h+1}\leq\tilde{W}_{h}\right)\right.$
$\left.\prod_{k\notin\left\{ h,h+1\right\} }\mathbb{P}\left(\tilde{W}_{k}\leq\tilde{W}_{h}\right)\right]$.
When $\left|J\right|$ decreases, this expression decreases through
the product term, which is weighted by the first term $\mathbb{P}\left(\tilde{W}_{h+1}\leq\tilde{W}_{h}\right)$.
As $\mathbb{P}\left(\tilde{W}_{h+1}\leq t\right)\geq\mathbb{P}\left(\tilde{W}_{h}\leq t\right)\forall t$,
the first expression in \eqref{eq:ranking-eff-ds} decreases by more
than the second one when the number of alternatives increases.
\end{proof}

\subsection{Proposition \ref{prop:sd-rs-diff-exp-payoff}\label{subsec:proof-sd-rs-diff-exp-payoff}}
\begin{proof}
The RS problem is equivalent to a policy in the SD problem that commits
on inspecting every product that is discovered, conditional on which
the consumer chooses to stop optimally. However, as the optimal policy
in the SD problem is not this policy, it must yield a (weakly) larger
payoff. 
\end{proof}

\subsection{Uniqueness of discovery value}
\begin{proposition}
	\label{prop:unique_exp} \eqref{eq:res_exp} has a unique solution.
\end{proposition}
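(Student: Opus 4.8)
The plan is to show that $z\mapsto Q_{d}(c_{d},c_{s},z)$ is continuous, non-increasing, strictly decreasing wherever it is nonnegative, strictly positive for $z$ small enough and strictly negative for $z$ large enough; the intermediate value theorem together with the monotonicity then yields exactly one zero. Throughout, write $V(z)$ as shorthand for the optimal value $V\!\left(\left\langle\bar{\Omega},\omega(\boldsymbol{x},z)\right\rangle,\{b0,s1,\dots,sn_{d}\};\tilde{\pi}\right)$ of the Weitzman subproblem over the $n_{d}$ discovered boxes (with realized partial valuations $\boldsymbol{x}$) against a deterministic outside option $z$, so that $Q_{d}(c_{d},c_{s},z)=\mathbb{E}_{\boldsymbol{X}}[V(z)]-z-c_{d}$.

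First I would record the standard Pandora's-box properties of $V(z)$, following \citet{Weitzman1979} and Lemma~1 of \citet{Adam2001}: $V(z)$ is convex and non-decreasing in $z$ and satisfies $0\le V(z')-V(z)\le z'-z$ for $z'\ge z$, i.e. it is $1$-Lipschitz with slope in $[0,1]$; moreover the slope is strictly below $1$ at $z$ as soon as, under $\tilde\pi$, there is positive probability of inspecting a box and ending with a realized valuation exceeding $z$. Convexity gives continuity of $V(z)$, and the integrable envelope $\max\{0,\max_{j}(x_{j}+Y_{j})-z\}\le\sum_{j}\max\{0,x_{j}+Y_{j}\}$ — integrable because $X$ and $Y$ have finite mean — lets me move continuity and limits in $z$ through $\mathbb{E}_{\boldsymbol{X}}$ by dominated convergence, so $Q_{d}(c_{d},c_{s},\cdot)$ is continuous.

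Next, monotonicity and at-most-one-zero: from $V(z')-V(z)\le z'-z$ I get $Q_{d}(c_{d},c_{s},z')-Q_{d}(c_{d},c_{s},z)=\mathbb{E}_{\boldsymbol{X}}[V(z')-V(z)]-(z'-z)\le0$, so $Q_{d}$ is non-increasing. If $Q_{d}(c_{d},c_{s},z)\ge0$ then $\mathbb{E}_{\boldsymbol{X}}[V(z)-z]\ge c_{d}>0$, so $V(z)>z$ on a set of $\boldsymbol{X}$-realizations of positive probability; on that set the optimal continuation inspects at least one box and exceeds $z$ with positive probability, so $V$ has slope strictly below $1$ there, and hence $Q_{d}$ is strictly decreasing at such $z$ — giving at most one zero. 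For existence, inspecting box $1$ and keeping $\max\{z,x_{1}+y_{1}\}$ is feasible, so $V(z)\ge -c_{s}+\mathbb{E}[x_{1}+Y_{1}]$ and $Q_{d}(c_{d},c_{s},z)\ge-c_{s}-c_{d}+\mu_{X}+\mu_{Y}-z\to+\infty$ as $z\to-\infty$; the free-information relaxation gives $V(z)\le z+\mathbb{E}_{\boldsymbol{Y}}[\max\{0,\max_{j}(x_{j}+Y_{j})-z\}]$, so $Q_{d}(c_{d},c_{s},z)\le\mathbb{E}_{\boldsymbol{X},\boldsymbol{Y}}[\max\{0,\max_{j}(x_{j}+Y_{j})-z\}]-c_{d}\to-c_{d}<0$ as $z\to+\infty$ by monotone convergence. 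The intermediate value theorem then produces a zero, which by the previous paragraph is unique.

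I expect the main obstacle to be making the slope-in-$[0,1]$ (equivalently $1$-Lipschitz and convexity) property of the multi-box value function $V(z)$ fully rigorous when $n_{d}>1$, in particular the claim that the slope is strictly below $1$ precisely when inspection has strictly positive option value, together with checking the integrability conditions needed to pass continuity and limits through $\mathbb{E}_{\boldsymbol{X}}$; both are routine given the finite-mean regularity assumption but deserve care. For $n_{d}=1$ everything collapses, since then $Q_{d}(c_{d},c_{s},z)=\mathbb{E}_{X}[\max\{0,Q_{s}(X,c_{s},z)\}]-c_{d}$ and the required properties follow directly from those of $Q_{s}(x,c_{s},\cdot)$ established around \eqref{eq:res_search}.
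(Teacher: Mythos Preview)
Your argument is correct. You establish continuity, monotonicity, the two limits, and strict decrease at any nonnegative value of $Q_{d}$ directly from the Lipschitz and convexity properties of the Weitzman value function $V(z)$, and then apply the intermediate value theorem. The only cosmetic issues are that continuity follows more cheaply from the $1$-Lipschitz property itself (no dominated convergence needed), and that your envelope bound $\max\{0,\max_{j}(x_{j}+Y_{j})-z\}\le\sum_{j}\max\{0,x_{j}+Y_{j}\}$ as written only holds for $z\ge0$; neither affects the validity of the proof.

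The paper takes a different route. It first invokes the eventual purchase theorem of \citet{Choi2016a} (Corollary~1) to rewrite the inner value function as $\mathbb{E}[\max\{z,\max_{k}\tilde{W}_{k}\}]$, where $\tilde{W}_{k}=X_{k}+\min\{Y_{k},\xi\}$, and then obtains the closed form $Q_{d}(c_{d},c_{s},z)=\int_{z}^{\infty}[1-H(w)]\,\mathrm{d}w-c_{d}$ with $H$ the cumulative distribution of $\max_{k}\tilde{W}_{k}$ (Appendix~\ref{sec:Details-on-the}). From this representation, $\partial Q_{d}/\partial z=H(z)-1\le0$ is immediate, the limits are read off, and strict negativity of the derivative at $z^{d}$ follows because $H(z^{d})=1$ would force $Q_{d}(c_{d},c_{s},z^{d})\le-c_{d}<0$. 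Your approach is more self-contained in that it does not rely on the eventual purchase theorem and works purely from standard properties of the Pandora value function; the paper's approach is shorter and sidesteps exactly the ``main obstacle'' you flag for $n_{d}>1$, since the integral representation handles the multi-box case uniformly without having to reason about the slope of $V$ box by box.
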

\begin{proof} $Q_d(c_{d},c_{s},z)$ with respect to $z$ yields
(see Appendix \ref{sec:Details-on-the})
\begin{equation}
\frac{\partial Q(c_{d},c_{s},z)}{\partial z}=\begin{cases}
+H(z)-1 & \text{if }z<0\\
-2+H(z) & \text{else}
\end{cases}
\end{equation}
where $H(\cdot)$ denotes the cumulative density of the random variable
$\max_{k\in\tilde{J}}\tilde{W}_{k}$. This implies $\frac{\partial Q_{d}(c_{d},c_{s},z)}{\partial z}\leq0$,
which combined with continuity, $Q_{d}(c_{d},c_{s},\infty)=-c_{d}$
and $Q_{d}(c_{d},c_{s},-\infty)=\infty$ imply that a solution to
\eqref{eq:res_exp} exists. Finally, uniqueness requires $Q_{d}(c_{d},c_{s},z)$
to be strictly decreasing at $z=z^{d}$. $\frac{\partial Q_{d}(c_{d},c_{s},z^{d})}{\partial z}=0$
would require that $H(z^{d})=1$, which contradicts the definition
of the discovery value value $z^{d}$ in \eqref{eq:res_exp}, as it
implies $Q_{d}(c_{d},c_{s},z^{d})\leq-c_{d}<0$.
\end{proof}	

\section{Further details on Search and Discovery Values \label{sec:Details-on-the}}

The search value of a product $j$ is defined by equation \eqref{eq:res_search}
and sets the myopic net gain of the inspection over immediately taking
a hypothetical outside option offering utility $z$ to zero. This
myopic net gain can be calculated as follows:\footnote{The second steps holds as with a change in the order of integration we get $\int_{z-x_j}^\infty[1-F(y)]\mathrm{d}y = \int_{z-x_{j}}\int_{y}^{\infty}f_{Y}(t)\mathrm{d}t\mathrm{d}y=\int_{z-x_{j}}\int_{z-x_{j}}^{t}f_{Y}(t)\mathrm{d}y\mathrm{d}t =\int_{z-x_{j}}\left[yf_{Y}(t)\right]_{y=z-x_{j}}^{y=t}\mathrm{d}t$.}

\begin{align*}
Q_{s}(x_{j},c_{s},z) & =\mathbb{E}_{Y}\left[\max\{0,x_{j}+Y-z\}\right]-c_{s}\\
 & =\int_{z-x_{j}}^{\infty}(x_{j}+y-z)\mathrm{d}F(y)-c_{s}\\
 & =\int_{z-x_{j}}^{\infty}\left[1-F(y)\right]\mathrm{d}y-c_{s}
\end{align*}
Substituting $\xi_{j}=z-x_{j}$ then yields \eqref{eq:res_search_explicit}. 

The discovery value is defined by equation \eqref{eq:res_exp} and
sets the expected myopic net gain of discovering more products over
immediately taking a hypothetical outside option offering utility
$z$ to zero. Corollary 1 in \citet{Choi2016a} and similar steps
as the above then imply that: 
\begin{align*}
Q_{d}(c_{d},c_{s},z) & =\mathbb{E}_{\boldsymbol{X},\boldsymbol{Y}}\left[\max\left\{ z,\max_{k\in\left\{ 1,\dots,n_{d}\right\} }\tilde{W}_{k}\right\} \right]-z-c_{d}\\
 & =\mathbb{E}_{\boldsymbol{X},\boldsymbol{Y}}\left[\max\left\{ 0,\max_{k\in\left\{ 1,\dots,n_{d}\right\} }\tilde{W}_{k}-z\right\} \right]-c_{d}\\
 & =\int_{z}^{\infty}1-H(w)\mathrm{d}w-c_{d}
\end{align*}
where $H(\cdot)$ denotes the cumulative density of the random variable
$\max_{k\in\tilde{J}}\tilde{W}_{j}$. The above also implies that
in the case where $Y$ is independent of $X$, a change in variables yields that the discovery value is linear in the
mean of $X$, denoted by $\mu_{X}$:
\[
z^{d}=\mu_{X}+\Xi(c_{s},c_{d})
\]
 where $\Xi(c_{s},c_{d})$ solves \eqref{eq:res_exp} for an alternative
random variable $\tilde{X}=X-\mu_{X}$. 

\section{Monotonicity and Extensions\label{sec:Generalizations}}

Monotonicity of the Gittins index values ($g_{t}^{d}\geq g_{t+1}^{d}\forall t$)
is satisfied whenever the following holds:

\begin{align}
0\leq & \mathbb{E}_{\boldsymbol{X},Y,n_{d},q,t}\left[\Theta(\Omega_{t+1},\tilde{A}_{t+1},g_{t}^{d})\right]\nonumber \\
 & -\mathbb{E}_{\boldsymbol{X},Y,n_{d},q,t+1}\left[\Theta(\Omega_{t+2},\tilde{A}_{t+2},g_{t+1}^{d})\right]\label{eq:condition_mono}
\end{align}
where $g_{t}^{d}$ is the Gittins index of discovering products (defined
by \eqref{eq:def_gittins}), and $\tilde{A}_{t+1}\equiv\{d,s1,\dots,sn_{d}\}$
is the set of actions available in $t+1$ containing the newly revealed
products and (if available) the possible future discoveries. The expectation
operator $\mathbb{E}_{\boldsymbol{X},Y,n_{d},J,t}\left[\cdot\right]$
integrates over the following random realizations, where the respective
joint distribution now can be time-dependent: (i) Partial valuations
drawn from $\boldsymbol{X}=\left[X_{1},\dots,X_{n_{d}}\right]$; (ii)
conditional distributions $F_{Y|X=x}(y)$; (iii) the number of revealed
alternatives ($n_{d}$); (iv) whether more products can be discovered
in future periods determined by the belief $q$.

It goes beyond the scope of this paper to determine all possible specifications
of beliefs which satisfy this condition. However, Proposition
\ref{prop:monotone_specifications} provides two specifications that
can be of interest and for which \eqref{eq:condition_mono} holds
(see also Section \ref{sec:Optimal-Search}).
\begin{proposition}
\label{prop:monotone_specifications} \eqref{eq:condition_mono} holds
for the below deviations from the baseline model:
\begin{enumerate}
\item $Y$ is independent of $X$. Beliefs are such that the revealed partial
valuations in $\boldsymbol{X}$ are i.i.d. with time-dependent cumulative
density $G_{t}(x)$ such that $G_{t}\left(x\right)\leq G_{t+1}\left(x\right)\forall x\geq z^{d}-\xi$.
\item The consumer does not know how many alternatives he will discover.
Instead, he has beliefs such that with each discovery, at most the
same number of alternatives are revealed as in previous periods ($n_{d,t+1}\leq n_{d,t}$).
\end{enumerate}
\end{proposition}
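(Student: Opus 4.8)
The plan is to verify condition \eqref{eq:condition_mono} separately for each specification. Substituting the fixed-point relations $g_s^d+c_d=\mathbb{E}_s[\Theta(\Omega_{s+1},\tilde{A}_{s+1},g_s^d)]$ from \eqref{eq:def_gittins} into \eqref{eq:condition_mono} shows that \eqref{eq:condition_mono} is exactly the Gittins-index monotonicity $g_t^d\geq g_{t+1}^d$, so that is what must be established. By \eqref{eq:conjecture} the map $z\mapsto\mathbb{E}_s[\Theta(\Omega_{s+1},\tilde{A}_{s+1},z)]-z-c_d$ is continuous and nonincreasing (its derivative is a product of single-action slopes, each in $[0,1]$, minus one) and has $g_s^d$ as its zero; hence it suffices to show the pointwise domination $\mathbb{E}_t[\Theta(\Omega_{t+1},\tilde{A}_{t+1},z)]\geq\mathbb{E}_{t+1}[\Theta(\Omega_{t+2},\tilde{A}_{t+2},z)]$ at the relevant outside-option level, since shifting a nonincreasing function weakly down moves its zero weakly to the left, giving $g_{t+1}^d\leq g_t^d$. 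Both parts thus reduce to this domination, which I would obtain by a coupling argument propagated down the discovery tree.

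For the first specification (ranking in distribution), couple the partial-valuation sequences of the two subproblems so that the $\ell$-th batch revealed in the period-$t$ subproblem stochastically dominates, on $[z^d-\xi,\infty)$, the $\ell$-th batch in the period-$(t+1)$ subproblem: couple every discovered product so that its period-$t$ partial valuation is at least its period-$(t+1)$ counterpart whenever the latter exceeds $z^d-\xi$, and lies below $z^d-\xi$ whenever the latter does. Running the period-$(t+1)$-optimal policy inside the period-$t$ subproblem then never does worse: that policy inspects a product only once its search value $z_j^s=x_j+\xi$ clears the current best alternative, which at the relevant outside option (the discovery value, at least $z^d$) forces $x_j\geq z^d-\xi$, so the coupling makes every inspected product weakly more valuable and every realized stopping payoff $\tilde{w}_j=x_j+\min\{\xi,y_j\}$ (at the same $y_j$) weakly larger, while products with smaller partial valuations are never inspected and so do not matter. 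Equivalently, one can read \eqref{eq:def_gittins}: the integrand $b-\int_z^b\partial_w\Theta(\Omega_{t+1},\{d\},w)\prod_k\partial_w\Theta(\Omega_{t+1},\{s_k\},w)\,\mathrm{d}w$ is nondecreasing in each revealed $x_k$ — because $\partial_w\Theta(\Omega_{t+1},\{s_k\},w)$ is nonincreasing in $x_k$ and $0\leq\partial_w\Theta(\Omega_{t+1},\{d\},w)\leq1$ — so first-order stochastic dominance applies batch by batch.

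For the second specification (unknown, weakly shrinking $n_d$), valuations are i.i.d.~as in the baseline, so the domination follows from a plain ``more options'' comparison. Because $n_{d,t}\geq n_{d,t+1}\geq\cdots$, the period-$t$ subproblem reveals weakly more products at every discovery stage than the period-$(t+1)$ subproblem; coupling the shared draws and leaving the surplus products uninspected shows that any policy feasible in the period-$(t+1)$ subproblem is feasible in the period-$t$ subproblem with a weakly larger payoff, so $\mathbb{E}_t[\Theta(\Omega_{t+1},\tilde{A}_{t+1},z)]\geq\mathbb{E}_{t+1}[\Theta(\Omega_{t+2},\tilde{A}_{t+2},z)]$ for every $z$. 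If the number discovered at each step is itself random, the inequality holds realization by realization and hence in expectation over the beliefs about the counts.

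The main obstacle is the recursion: $\Theta(\Omega_{t+1},\tilde{A}_{t+1},z)$ already encodes the entire future stream of discoveries — with their shifting distributions in the first specification and their shrinking counts and the option value they carry in the second — so the desired domination is not a single-shot comparison but must be carried down the discovery tree by backward induction over the remaining discovery opportunities, exactly as in the $|J|<\infty$ step of the proof of Theorem \ref{theo:optimal_policy}, with the extra bookkeeping that each induction step also advances the distribution index or decrements the count. One also has to treat the base case (the last discovery, after which only a directed-search continuation remains) and the boundary behaviour of $\partial_w\Theta(\Omega_{t+1},\{d\},w)$ at $w=g_{t+1}^d$, where it jumps to $1$, so that the coupling composes cleanly across successive steps; for $|J|=\infty$ in the first specification the induction is replaced by a direct appeal to the coupling, since stationarity of $q$ together with the FOSD ordering already makes the infinite-horizon values comparable.
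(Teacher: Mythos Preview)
Your approach is correct and essentially the same as the paper's. The paper argues directly via derivatives: for part 1 it observes that $\Theta(\Omega_{t+1},\tilde A_{t+1},z^d)$ is constant in the maximum revealed partial valuation $\tilde x=\max_k x_k$ for $\tilde x\le z^d-\xi$ and nondecreasing above, so the FOSD shift $G_t\le G_{t+1}$ on $[z^d-\xi,\infty)$ lowers the expectation; for part 2 it notes that $\partial_w\Theta(\Omega_{t+1},\{s_k\},w)\le 1$ in the product formula \eqref{eq:conjecture} makes $\Theta$ nondecreasing in $n_d$. Your coupling arguments are the probabilistic counterpart of exactly these monotonicity facts. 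Where you go further than the paper is in flagging that $\tilde A_{t+1}$ already carries the future discovery option, so the comparison is not single-shot and must be propagated by backward induction down the discovery tree; the paper leaves this recursion implicit, so your treatment is an improvement in rigor rather than a different route.
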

\begin{proof}
Each part is proven using slightly different arguments.
\begin{enumerate}
\item Let $\tilde{x}\equiv\max_{k\in\{1,\dots,n_{d}\}}x_{k}$. If $\tilde{z}^{s}$$=\tilde{x}+\xi\leq z^{d}$,
$\Theta(\Omega_{t+1},\tilde{A}_{t+1},z^{d})=1$, whereas for $\tilde{x}>z^{d}-\xi$,
$\frac{\partial\Theta(\Omega_{t+1},\{e,s1,\dots,sn_{d}\},z^{d})}{\partial\tilde{x}}\geq0$.
Independence implies that the cumulative density of the maximum $\tilde{x}$
is $\tilde{G}_{t}(x)=G_{t}(x)^{n_{d}}$. Consequently, whenever the
distribution of $X$ shifts such that $G_{t}(x)\leq G_{t+1}(x)\forall x\geq z^{d}-\xi$,
larger values of $\Theta(\Omega_{t+1},\tilde{A}_{t+1},g_{t}^{d})$
become less likely in $t+1$, and hence \eqref{eq:condition_mono}
holds.
\item Since $\frac{\partial\Theta\left(\Omega_{t+1},\left\{ s_{k}\right\} ,w\right)}{\partial w}\leq1$,
we have $\frac{\partial\Theta(\Omega_{t+1},\tilde{A}_{t+1},g_{t}^{d})}{\partial n_{d}}\geq0$.
Hence \eqref{eq:condition_mono} holds given $n_{d,t+1}\leq n_{d,t}$.
\end{enumerate}
\end{proof}
Based on this monotonicity condition, Proposition \ref{prop:generalization_monotonicity}
generalizes Theorem \ref{theo:optimal_policy}. It implies that whenever
\eqref{eq:condition_mono} holds, the discovery value can be calculated
based on the expected myopic net gain of discovering products over
immediately taking the hypothetical outside option. Hence, whenever
\eqref{eq:condition_mono} holds, the optimal policy continues to
be fully characterized by reservation values that can be obtained
without having to consider many future periods.
\begin{proposition}
\label{prop:generalization_monotonicity}Whenever \eqref{eq:condition_mono}
is satisfied, Theorem \ref{theo:optimal_policy} continues to hold
(with appropriate adjustment of the discovery value's time-dependence).
\end{proposition}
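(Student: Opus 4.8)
The plan is to follow the proof of Theorem~\ref{theo:optimal_policy} almost verbatim, replacing the particular assumptions (i)--(iii) that deliver monotonicity in the baseline model by the general condition \eqref{eq:condition_mono}. The first step is unaffected by the extensions: each of them only modifies the consumer's beliefs (the position-dependence of $G$, the belief over $n_{d}$, or the set of discovery technologies) and none of them changes the branching structure of the decision process — taking an action still never alters the state of any other available action, and a product once discovered cannot be discovered again. Hence Theorem~1 of \citet{Keller2003} still applies, a Gittins index policy remains optimal, and \eqref{eq:conjecture} continues to hold. As in the proof of Theorem~\ref{theo:optimal_policy}, the Gittins indices of the inspection and purchase actions coincide with the search and purchase values \eqref{eq:res_search_explicit} and \eqref{eq:res_buy}; this part needs no monotonicity, since those actions branch into at most one further period.

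The second step is to observe that \eqref{eq:condition_mono} is precisely the statement that the Gittins index of the discovery action is weakly decreasing along every search path, $g_{t}^{d}\geq g_{t+1}^{d}$ for all $t$. Writing out the recursive definition \eqref{eq:def_gittins} of $g_{t}^{d}$ with the (now possibly time-dependent) belief-weighted operator $\mathbb{E}_{\boldsymbol{X},Y,n_{d},q,t}[\cdot]$ and using $\Theta(\Omega_{t+1},\tilde{A}_{t+1},g_{t}^{d})=V(\langle\bar{\Omega},\omega(\boldsymbol{X},g_{t}^{d})\rangle,\{b0,s1,\dots,sn_{d}\};\tilde{\pi})$, the difference $g_{t}^{d}-g_{t+1}^{d}$ has the same sign as the right-hand side of \eqref{eq:condition_mono}, so the condition gives the desired monotonicity. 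The backward-induction argument used for finite $|J|$ in the proof of Theorem~\ref{theo:optimal_policy} carries over unchanged; for $|J|=\infty$ with time-varying beliefs, monotonicity is simply imposed by \eqref{eq:condition_mono}.

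The third and crucial step is the reduction. Once $g_{t}^{d}\geq g_{t+1}^{d}$, optimality of the Gittins index policy implies that from state $\Omega_{t+1}$ with a hypothetical outside option $w\geq g_{t+1}^{d}$ it is optimal to stop immediately, so $\Theta(\Omega_{t+1},\{d\},w)=w$ and hence $\partial\Theta(\Omega_{t+1},\{d\},w)/\partial w=1$ on the whole integration range $[g_{t}^{d},b]\subseteq[g_{t+1}^{d},b]$. Substituting this into \eqref{eq:def_gittins} eliminates the dependence of $g_{t}^{d}$ on whether and how many products can be discovered after period $t+1$: $g_{t}^{d}$ collapses to the value that equates the expected myopic net gain $Q_{d}$, evaluated under the consumer's period-$t$ beliefs, to zero — that is, to the (time-dependent) discovery value $z_{t}^{d}$. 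Combining with the first step, the Gittins index policy reduces to choosing, in each period, the action with the largest of the search, purchase and discovery values, which are exactly the three rules of Theorem~\ref{theo:optimal_policy} with $z^{d}$ replaced by $z_{t}^{d}$.

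The main obstacle is bookkeeping rather than conceptual: one must check that the argument ``$\partial\Theta/\partial w=1$ on $[g_{t}^{d},b]$'' still closes when the beliefs entering $\Theta$ are time-dependent, since then $z_{t}^{d}$ genuinely varies with $t$ and the recursion no longer telescopes to a single stationary value — here \eqref{eq:condition_mono} is precisely what guarantees that the relevant interval of integration never extends below $g_{t+1}^{d}$. A secondary point is to verify that the multiple-discovery-technologies extension fits the same template, by assigning each technology its own discovery value and running the argument technology by technology.
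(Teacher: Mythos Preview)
Your proposal is correct and takes essentially the same approach as the paper, whose proof is a single sentence: ``Follows directly from the proof of Theorem~\ref{theo:optimal_policy}.'' You have simply unpacked that sentence, re-running the Keller--Gittins argument and showing that \eqref{eq:condition_mono} supplies exactly the inequality $g_{t}^{d}\geq g_{t+1}^{d}$ that in the baseline proof was derived from assumptions (i)--(iii). One small remark: the backward-induction step you mention is superfluous here --- in Theorem~\ref{theo:optimal_policy} it was used to \emph{establish} monotonicity from the primitive assumptions, whereas in the present proposition monotonicity is assumed outright via \eqref{eq:condition_mono}, so that part of the original proof can be skipped rather than ``carried over.''
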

\begin{proof}
Follows directly from the proof of Theorem \ref{theo:optimal_policy}.
\end{proof}

\section{Violations of independence assumption\label{sec:Violations-of-independence}}

\textbf{Costly recall:} Consider a variation to the search problem,
where purchasing a product in the consideration set is costly unless
it is bought immediately after it is inspected. If in period $t$
product $j$ is inspected, then inspecting another product or discovering
more products in $t+1$ will change the payoff of purchasing product
$j$ by adding the purchase cost. In the context of a multi-armed
bandit problem, this case arises if there are nonzero costs of switching
between arms. \citet{Banks1994}, for example, provide a more general
discussion on switching costs and the nonexistence of optimal index-based
strategies. The same reasoning also applies in a search problem where
inspecting a product is more costly if the consumer first discovers
more products. The exception is if there are infinitely many alternatives.
In this case, the optimal policy never prescribes to recall an alternative.

\textbf{Learning:}\emph{ }Independence is also violated for some types
of learning. Consider a variation of the search problem, where the
consumer updates his beliefs on the distribution of $Y$. In this
case, by inspecting a product $k$ and revealing $y_{ik}$, the consumer
will update his belief about the distribution of $Y$, thus affecting
the expected payoffs of both discovering more and inspecting other
products. Independence therefore is violated and the reservation value
policy is no longer optimal.\footnote{\citet{Adam2001} studies a similar case where independence continues
to hold across groups of products. However, his results do not extend
to the case with limited awareness, as the beliefs of $Y$ also determine
the expected benefits of discovering more products.} Note, however, that as long as learning is such that only payoffs
of actions that will be available in the future are affected, independence
continues to hold. This is for example the case when the consumer
learns about the distribution of $X$ as discussed in Section 3.2.

\textbf{Purchase without inspection:}\emph{ }A final setting where
independence does not hold is when a consumer can buy a product without
first inspecting it. In this case, the consumer has two actions available
for each product he is aware of. He can either inspect a product,
or directly purchase it. Clearly, when the consumer first inspects
the product, the information revealed changes the payoff of buying
the product. Independence therefore is violated and the reservation
value policy is not guaranteed to be optimal. \citet{Doval2018} studies
this search problem for the case where a consumer is aware of all
available alternatives, and characterizes the optimal policy under
additional conditions. 

\section{Learning\label{sec:Learning}}

Several studies consider priors or learning rules under which the
optimal policy is myopic when searching with recall \citep{Rothschild1974,Rosenfield1981,Bikhchandani1996,Adam2001}.
A sufficient condition for the optimal policy to be myopic is given
in Theorem 1 of \citet{Rosenfield1981}: Once the expected net benefits
of continuing search over stopping with the current best option are
negative, they remain so. Hence, whenever it is optimal to stop in
$t$, it is also optimal to stop in all future periods. The monotonicity
condition used in this paper directly imposes that this is satisfied;
expected benefits of discovering more products remain constant or
decrease during search. A fairly general assumption underlying learning
rules that satisfy this condition is Assumption 1 in \citet{Bikhchandani1996}.
This assumption requires that beliefs are updated such that values
above the largest value revealed so far become less likely.\footnote{Note that \citet{Bikhchandani1996} consider search for low prices.}
Hence, whenever a better value is found than the current best, finding
an even better match in the future becomes less likely. 

In the SD problem, similar learning rules that satisfy this condition
are difficult to find. When the consumer learns about the number of
products that are revealed with each discovery, expected benefits
of discovering more products increase if many products are revealed,
but the value of stopping remains the same if all these products are
bad matches. Hence, a learning rule would need to guarantee that beliefs
shift such that the expected benefits of discovering more products
do not increase, as opposed to only the net benefits over stopping.
Similarly, if the consumer learns about the distribution of partial
valuations $\boldsymbol{X}$, the value of stopping need not increase
even if partial information indicates a good match leading the consumer
to shift beliefs towards larger values; after inspecting a promising
product, the consumer may still realize that the product is worse
than the previously best option. 

Though the optimal policy is not myopic with learning, it is still
based on the Gittins index, where the search and purchase values are
as in the baseline SD problem. The main difficulty is calculating
the index value for discovering more products, denoted by $z_{t}^{L}$.
Whereas calculating this value precisely would require accounting
for learning in future periods, it is possible to derive bounds on
this value that are easier to calculate and can be used to judge how
far off a myopic policy is. 

To show this, I focus on the case where
the consumer learns about the distribution of partial valuations.
In particular, consider the following variation of the search and
discovery problem: Let the distribution of partial valuations in $\boldsymbol{X}$
be characterized by a parameter vector $\theta$ and denote its cumulative
density by $G_{\theta}(\cdot)$. The consumer initially does not know
the true parameter vector and Bayesian updates his beliefs in period
$t$ given some prior distribution. Denoting the consumers' beliefs
on $\theta$ with cumulative density $P_{t}(\cdot)$, the consumers'
beliefs about $\boldsymbol{X}$ drawn in the next discovery are characterized
by the cumulative density $\tilde{G}_{t}(\boldsymbol{X})=\int G_{\theta}(\boldsymbol{X})\text{d}P_{t}(\theta)$.\footnote{For example, consider the case of sampling from a Normal distribution
with unknown mean and known variance, and assume $n_{d}=1$. If the
consumer believes in $t$ that the mean is distributed normally with
$\theta\sim N(\mu_{t},\sigma_{t}^{2})$, then $\tilde{G}_{t}(x)=\Phi(\frac{x-\mu_{t}}{\sigma_{t}})$,
where $\Phi(\cdot)$ is the standard normal cumulative density \citep[see e.g. Theorem 1 in][Ch. 9.5]{DeGroot1970}.} 

Denote a \emph{k}-step look-ahead value as $z_{t}^{d}(k)$ and define
it as the value of a hypothetical outside option that makes the consumer
indifferent between stopping immediately, and discovering more products
after which at most $k-1$ more discoveries remain. For example, $z_{t}^{d}(1)$
satisfies the myopic comparison in \eqref{eq:res_exp}, where expectations
are calculated based on period $t$ beliefs $\tilde{G}_{t}(\cdot)$.
The definition of $z_{t}^{d}(1)$ then implies that it is equal to
the expected value of continuing to discover products if no future
discoveries remain. As the consumer can stop and take this hypothetical
outside option in $t+1$, allowing for more discoveries after $t+1$
can only increase the expected value, hence $z_{t}^{d}(1)\leq z_{t}^{d}(2)\dots\leq z_{t}^{L}$.
$z_{t}^{d}(1)$ therefore provides a lower bound on $z_{t}^{L}$,
and $z_{t}^{L}$ can be approximated with increasing precision through
\emph{k}-step look-ahead values. 

To derive an upper bound, consider the case where the consumer learns
the true $\theta$ in $t+1$, if he chooses to discover more products
in $t$. The value of discovering more products in $t$ when the true
$\theta$ is revealed in $t+1$ then is larger compared to the case
where the consumer continues to learn. This is formally derived by
\citet{State1970} for a search problem where a consumer samples from
an unknown distribution. Intuitively, when the true $\theta$ is revealed,
the consumer is able choose the action in $t+1$ that maximizes the
expected payoff going forward for each realization of $\theta$. In
contrast, if the consumer does not learn the true $\theta$ in $t+1$,
he cannot choose the maximizing action for each realization of $\theta$,
but only the action that maximizes expected payoff on average across
possible $\theta$. 

An upper bound therefore is given by the value $\bar{z}_{t}^{d}$
such that the consumer is indifferent between stopping and taking
a hypothetical outside option offering $\bar{z}_{t}^{d}$, and discovering
more products after which the true $\theta$ is revealed. Formally,
$\bar{z}_{t}^{d}$ satisfies 
\begin{equation}
\bar{z}_{t}^{d}=\int\int\tilde{V}(\Omega_{t+1},A_{t+1},\bar{z}_{t};\theta)\text{d}P_{t+1}(\theta)\text{d}\tilde{G}_{t}(\boldsymbol{X})\label{eq:upper_bound}
\end{equation}
where $\tilde{V}(\Omega_{t+1},A_{t+1},\bar{z}_{t}^{d};\theta)$ denotes
the expected value of a search and discovery problem with known $\theta$
and an outside option offering $\bar{z}_{t}^{d}$. Proposition
\ref{prop:Expected-consumer-welfare} then directly allows to calculate this value
without having to consider all the possible search paths.

Proposition \ref{prop:bounds} summarizes these results. A similar
result can also be derived for the case where the consumer learns
about a distribution from which the number of products that are discovered
is drawn. 
\begin{proposition}
\label{prop:bounds} In the search and discovery problem with Bayesian
learning about an unknown distribution of partial valuations $\boldsymbol{X}$,
it is optimal to: 
\begin{enumerate}
\item continue whenever $\max_{k\in C_{t}}u_{k}\leq z_{t}^{d}(1)$
\item stop whenever $\max_{k\in C_{t}}u_{k}\geq\bar{z}_{t}^{d}$ 
\end{enumerate}
\end{proposition}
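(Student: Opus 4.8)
The plan is to build on the observation, already made in Appendix~\ref{sec:Violations-of-independence}, that learning about the distribution of $\boldsymbol{X}$ affects only the payoffs of actions that become available in \emph{future} periods, so the actions available in any given period are still mutually independent. Consequently the branching-bandit result of \citet{Keller2003} continues to apply and a Gittins index policy is optimal, exactly as in the proof of Theorem~\ref{theo:optimal_policy}. The search and purchase Gittins indices are unchanged, since they do not depend on the distribution of $\boldsymbol{X}$; the only object that is modified is the discovery index $z_t^{L}$. The Gittins rule then says: it is optimal to buy the best product in $C_t$ once $\max_{k\in C_t}u_k$ dominates $z_t^{L}$ (and all current search values, a comparison that is identical to Theorem~\ref{theo:optimal_policy} and unaffected by the learning), and it is optimal to continue otherwise. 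Hence the proposition reduces to the two sandwich inequalities $z_t^{d}(1)\le z_t^{L}\le\bar z_t^{d}$.

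For the lower bound I would introduce, for each $k\ge1$, the map $\Phi_k(z)$ giving the expected value of discovering in period $t$ and then acting optimally with at most $k-1$ further discoveries permitted, while a hypothetical outside option worth $z$ is always available; $z_t^{d}(k)$ is the fixed point of $\Phi_k(z)=z$, and $z_t^{d}(1)$ is the myopic value in \eqref{eq:res_exp} evaluated under the period-$t$ posterior $\tilde G_t$. Two elementary facts suffice: $\Phi_k$ is non-decreasing with slope at most one (raising the always-available outside option raises the value by no more than the same amount), and $\Phi_k\ge\Phi_{k-1}$ pointwise, because permitting more future discoveries only enlarges the feasible set. Evaluating $\Phi_k-\mathrm{id}$ at $z_t^{d}(k-1)$ then gives a non-negative number, and since $\Phi_k-\mathrm{id}$ is non-increasing and vanishes at $z_t^{d}(k)$, we obtain $z_t^{d}(k)\ge z_t^{d}(k-1)$; letting $k$ grow (the sequence terminates at $z_t^{L}$ when $|J|<\infty$ and converges to it otherwise) yields $z_t^{d}(1)\le z_t^{L}$, which is part~(i).

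For the upper bound I would invoke the value-of-information argument of \citet{State1970}. Consider the auxiliary problem in which, should the consumer choose to discover in period $t$, the true parameter $\theta$ is revealed to him at the start of period $t+1$ (after which nothing remains to be learned, so Proposition~\ref{prop:Expected-consumer-welfare} supplies a closed form for the continuation value $\tilde V(\Omega_{t+1},A_{t+1},\,\cdot\,;\theta)$). For every realization of the newly revealed $\boldsymbol{X}$ and every fixed outside option $z$, being able to tailor the period-$(t+1)$ action to each $\theta$ is weakly better than choosing one action that is optimal on average, i.e.\ $\mathbb{E}_{\theta}\big[\max_a\mathbb{E}[\,\cdot\mid a,\theta]\big]\ge\max_a\mathbb{E}_{\theta}\,\mathbb{E}[\,\cdot\mid a,\theta]$. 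Hence the map analogous to $\Phi$ for this auxiliary problem lies pointwise above the map generating $z_t^{L}$, and the same fixed-point monotonicity argument as above gives $z_t^{L}\le\bar z_t^{d}$, with $\bar z_t^{d}$ the value defined by \eqref{eq:upper_bound}; this is part~(ii).

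The step I expect to be the main obstacle is making the upper-bound comparison fully rigorous: one must set up the ``$\theta$ revealed in $t+1$'' problem so that it shares the state dynamics of the original problem on all other actions, check that its discovery value solves exactly \eqref{eq:upper_bound}, and propagate the interchange-of-max-and-expectation inequality along the entire continuation — essentially verifying that the \citet{State1970} monotonicity survives the branching structure of the SD problem with its terminal purchase action. The lower bound and the reduction to the Gittins rule are routine by comparison. The analogous statement for learning about the number of products discovered per period follows from the same two constructions, with the look-ahead truncation and the parameter-revelation argument now applied to the belief over $n_d$.
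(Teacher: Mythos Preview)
Your proposal is correct and follows essentially the same approach as the paper: reduce to the Gittins index via \citet{Keller2003} (noting, as the paper does, that learning about $\boldsymbol{X}$ preserves independence of available actions), bound $z_t^{L}$ from below by the monotone sequence of $k$-step look-ahead values, and from above via the \citet{State1970} value-of-information argument. The paper's argument is given in the discussion preceding the proposition rather than in a separate proof block and is identical in structure; your $\Phi_k$ fixed-point formalism and the explicit interchange-of-max-and-expectation inequality add detail that the paper leaves to the cited reference.
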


\section{Estimation Details\label{sec:Estimation-Details}}

To estimate the three models I use a simulated maximum likelihood
approach based on a kernel-smoothed frequency simulator. Using numerical
optimization, parameters are found that maximize the simulated likelihood
given by: 
\[
\max_{\gamma}\sum_{i}\mathcal{L}_{i}(\gamma)=\sum_{i}\log\left(\frac{1}{N_{d}}\sum_{d=1}^{N_{d}}\frac{1}{1+\sum_{k=1}^{N_{k}}\exp(-\lambda\kappa_{kdi})}\right)
\]
where $\gamma$ is the parameter vector, $N_{d}$ is the number of
simulation draws, $\lambda$ is a smoothing parameter and $\kappa_{kd}$
is one of $N_{k}$ inequalities resulting from the optimal policy
in the respective model evaluated for draw $d$. All three models
are estimated with $\lambda=10$ and $N_{d}=500$. At these values,
parameters are recovered well when data is generated with the same
model. 

\textbf{DS conditions} These conditions are the same as in \citet{Ursu2018},
who provides further details on how they relate the optimal policy
in the DS problem. The difference to her specification is that inspection
costs are linear, and that in DS1 there are no positions. For observed consideration
set $C_{i}$ for consumer $i$, a given draw $d$ for the unobserved
taste shocks $y_{j}(d)$ which defines product utilities $u_{j}(d)$
as well as the utility of the purchased option $u_{i}^{*}(d)$, there
are multiple purchase, and stopping conditions expressed in inequalities:
\begin{align*}
 & \text{Stopping:} & \kappa_{kdi} & =\max_{j\in C_{i}}u_{j}(d)-z_{m}\forall m\notin C_{i}\\
 & \text{Continuation} & \kappa_{kdi} & =z_{m+1}-\max_{j\in C_{i}(m)}u_{j}(d)\forall m=1,2,\dots,N_{is}-1\\
\text{} & \text{Purchase:} & \kappa_{kdi} & =u_{i}^{*}(d)-u_{j}(d)\forall j\in C_{i}
\end{align*}
In the continuation conditions, $N_{is}$ denotes the number of observed
inspections, $z_{m+1}$ is the search value of the next inspection,
 $C_{i}(m)$ is the consideration set of $i$ after $m$ inspections.
Note that the last relies on observing the order in which products
are inspected; if this order were not observed,
the method proposed by \citet{Honka2017} could be used to integrate
over possible search orders. The stopping condition only applies if
not all products are inspected, the continuation condition only applies
if $i$ inspected at least one product.

\textbf{RS conditions }The conditions in the RS model are similar
to the ones in the DS model. However, the stopping and continuation
conditions now are based on the reservation value $z^{RS}$, which
follows directly from the optimal policy: 
\begin{align*}
 & \text{Stopping:} & \kappa_{kdi} & =\max_{j\in C_{i}}u_{j}(d)-z^{RS}\\
 & \text{Continuation} & \kappa_{kdi} & =z^{RS}-\max_{j\in C_{i}(m)}u_{j}(d)\forall m=1,2,\dots,N_{is}-1\\
\text{} & \text{Purchase:} & \kappa_{kdi} & =u_{i}^{*}(d)-u_{j}(d)\forall j\in C_{i}
\end{align*}

\textbf{FI conditions }In the FI model, standard purchase conditions
apply: 
\begin{align*}
\kappa_{kdi} & =u_{i}^{*}(d)-u_{j}(d)\forall j
\end{align*}

\section{Sellers' decisions\label{sec:Sellers'-decisions}}

To illustrate the difference in sellers' decision making across the
SD and DS problem, we can compare the market demand generated by the
SD problem with the one from the DS problem when there are infinitely
many alternatives. Given a unit mass of consumers, market demand for
a product discovered at position $h$ is given by 

\begin{align}
d_{SD}(h) & =\mathbb{P}_{\boldsymbol{W}}\left(W_{k}<z^{d}\forall k<h\right)\mathbb{P}_{W_{h}}\left(W_{h}\geq z^{d}\right)
\end{align}
where $W_{h}$ is the random effective value of a product on position
$h$. The expression immediately follows from the stopping decision
which implies that if a consumer discovers a product with $w_{j}\geq z^{d}$,
he will stop searching and buy a product $j$. Hence, the consumer
will only discover and have the option to buy a product on position
$h$ if $w_{h}<z^{d}$ for all products on earlier positions. 

For the DS problem, \citet{Choi2016a} showed that the market demand
is given by
\begin{equation}
d_{DS}(h)=\mathbb{P}_{\boldsymbol{W}}\left(\tilde{W}_{h}\geq\max_{k\in J}\tilde{W}_{k}\right)
\end{equation}
where $\tilde{W}_{k}=X_{k}+\min\left\{ Y_{k},\xi_{k}\right\} $. 

Now suppose that the seller of a product on position $h$ sets the
mean of $X_{h}$, for example by choosing a price. In the SD problem,
this is equivalent to choosing $\mathbb{P}_{W_{h}}\left(W_{h}\geq z^{d}\right)$;
the probability that the consumer inspects and then stops search by
buying the seller's product. Importantly, this does not directly depend
on partial valuations of both products at earlier, and products at
later positions. This results from the stopping decisions, and given
the infinite number of products a consumer will never recall a product
discovered earlier. 

In contrast, in the DS problem, choosing the mean of $X_{h}$ influences
demand through the joint distribution of all products. As consumers
are aware of all products, they compare all partial valuations. Hence,
each seller's choice of partial valuations affects all other sellers
demand, and sellers do not make independent decisions. 

\end{document}